\documentclass[11pt]{article} 
\usepackage[utf8]{inputenc} 

\usepackage[margin=1in]{geometry} 
\geometry{letterpaper} 
\usepackage{graphicx} 

\usepackage{booktabs} 
\usepackage{array} 
\usepackage{paralist} 
\usepackage{verbatim} 
\usepackage{mathrsfs}
\usepackage{amssymb}
\usepackage{amsthm}
\usepackage{amsmath,amsfonts,amssymb}
\usepackage{esint}
\usepackage{graphics}
\usepackage{enumerate}
\usepackage{mathtools}
\usepackage{xfrac}
\usepackage{subcaption}
\usepackage{stmaryrd}

\let\oldsquare\square 

\usepackage{mathabx}

\renewcommand{\square}{\oldsquare}

\usepackage[usenames,dvipsnames]{xcolor}
\usepackage[colorlinks=true, pdfstartview=FitV, linkcolor=blue, citecolor=blue, urlcolor=blue]{hyperref}
\usepackage[normalem]{ulem}

\usepackage{tikz}
\usetikzlibrary{calc}
\usepackage{pgf}
\usetikzlibrary{external}
\tikzexternalize 


\numberwithin{equation}{section}
\numberwithin{figure}{section}

\newtheorem{theorem}{Theorem}[section]

\newtheorem{corollary}[theorem]{Corollary}
\newtheorem{proposition}[theorem]{Proposition}
\newtheorem{lemma}[theorem]{Lemma}
\theoremstyle{definition}
\newtheorem{definition}[theorem]{Definition}

\newcommand*{\N}{\ensuremath{\mathbb{N}}}

\newcommand*{\Z}{\ensuremath{\mathbb{Z}}}

\newcommand*{\R}{\ensuremath{\mathbb{R}}}
\newcommand*{\Zd}{\ensuremath{\mathbb{Z}^d}}
\newcommand*{\Rd}{\ensuremath{\mathbb{R}^d}}

\newcommand{\eps}{\varepsilon}

\renewcommand*{\tilde}{\widetilde}

\renewcommand{\P}{\ensuremath{\mathbb{P}}}

\newcommand{\f}{\mathbf{f}}

\newcommand{\A}{\mathcal{A}}

\renewcommand{\S}{\mathcal{S}}

\DeclareMathOperator{\dist}{dist}

\DeclareMathOperator{\var}{var}

\newcommand{\E}{\mathbb{E}}
\DeclareMathOperator{\diam}{diam}

\DeclareSymbolFont{boldoperators}{OT1}{cmr}{bx}{n}
\SetSymbolFont{boldoperators}{bold}{OT1}{cmr}{bx}{n}
\usepackage{accents}
\newcommand\thickbar[1]{\accentset{\rule{.45em}{.6pt}}{#1}}
\renewcommand{\bar}{\thickbar}
\renewcommand{\a}{\mathbf{a}}

\newcommand{\ahom}{\bar{\a}}
\newcommand{\fhom}{\bar{\f}}

\renewcommand{\S}{\mathcal{S}}
\newcommand{\T}{\mathcal{T}}
\newcommand{\Sp}{\S_P}

%
%



\def\Xint#1{\mathchoice
{\XXint\displaystyle\textstyle{#1}}%
{\XXint\textstyle\scriptstyle{#1}}%
{\XXint\scriptstyle\scriptscriptstyle{#1}}%
{\XXint\scriptscriptstyle\scriptscriptstyle{#1}}%
\!\int}
\def\XXint#1#2#3{{\setbox0=\hbox{$#1{#2#3}{\int}$}
\vcenter{\hbox{$#2#3$}}\kern-.5\wd0}}

\def\fint{\Xint-}



\makeatletter
\newcommand\avsuminner[2]{%
  {\sbox0{$\m@th#1\sum$}%
   \vphantom{\usebox0}%
   \ooalign{%
     \hidewidth
     \smash{\,\rule[.23em]{8.8pt}{1.1pt} \relax}%
     \hidewidth\cr
     $\m@th#1\sum$\cr
   }%
  }%
}
\makeatother

\makeatletter
\newcommand\avsuminnerr[2]{%
  {\sbox0{$\m@th#1\sum$}%
   \vphantom{\usebox0}%
   \ooalign{%
     \hidewidth
     \smash{\,\rule[.23em]{6pt}{0.7pt} \relax}%
     \hidewidth\cr
     $\m@th#1\sum$\cr
   }%
  }%
}
\makeatother



\let\originalleft\left
\let\originalright\right
\renewcommand{\left}{\mathopen{}\mathclose\bgroup\originalleft}
\renewcommand{\right}{\aftergroup\egroup\originalright}

\newcommand{\cu}{\square}



\newcommand{\indc}{{\boldsymbol{1}}}

\renewcommand{\hat}{\widehat}


\usepackage{titlesec}

\newcommand{\addperiod}[1]{#1.}
\titleformat{\section}
   {\centering\normalfont\Large}{\thesection.}{0.5em}{}
\titleformat*{\subsection}{\bfseries}
\titleformat{\subsubsection}[runin]
  {\normalfont\bfseries}
  {\thesubsubsection.}
  {0.5em}
  {\addperiod}
\titleformat*{\subsubsection}{\normalfont\itshape}
\titleformat*{\paragraph}{\bfseries}
\titleformat*{\subparagraph}{\large\bfseries}

\title{The scaling limit of the continuous solid-on-solid model}

\author{Scott Armstrong
\thanks{Courant Institute of Mathematical Sciences, New York University.
{\footnotesize \href{mailto:scotta@cims.nyu.edu}{scotta@cims.nyu.edu}.}
}
\and
Wei Wu
\thanks{Mathematics Department, NYU Shanghai \& NYU-ECNU Institute of Mathematical Sciences.
{\footnotesize \href{mailto:wei.wu@nyu.edu}{wei.wu@nyu.edu}.}
}
}

\date{October 20, 2023 }

\usepackage[nottoc,notlot,notlof]{tocbibind}

\begin{document}

\maketitle

\begin{abstract}
We prove that the scaling limit of the continuous solid-on-solid model in~$\Zd$ is a multiple of the Gaussian free field. 

\end{abstract}


\section{Introduction}

We consider the real-valued random field~$\phi:\Zd\to \R$ with law given formally by the Gibbs measure 
\begin{equation}
\label{e.mu}
d\mu :=\frac 1Z  \exp\Biggl( - \sum_{x,y\in\Zd, \, |x-y|=1} |\phi(x) -\phi(y)|  \Biggr) \,d\phi\,.
\end{equation}
Here $d\phi$ denotes Lebesgue measure on $\R^{\Zd}$ and $Z$ is a normalizing constant, called the \emph{partition function}, which makes~$\mu$ a probability measure. The measure~$\mu$ is rigorously defined as the weak limit as~$L\to \infty$ of finite-volume Gibbs states~$\mu_L$ defined on the space of configurations restricted to the cube~$Q_L:=[-L,L]^d\cap \Zd$ with zero boundary data.  It is known (see Sheffield~\cite{She} or Section~\ref{s.tight} below) that these finite-volume measures~$\mu_{L}$ converge weakly, as~$L\to \infty$, to a unique measure~$\mu$ on the space of gradient fields (or, equivalently, fields modulo additive constants). This limit is our infinite-volume Gibbs state, and the random field it describes is called the \emph{continuous solid-on-solid model.}

\smallskip 

The main result of this paper is the following theorem asserting that the continuous solid-on-solid model~\eqref{e.mu} in $\Zd$ converges to a Gaussian free field in the scaling limit.  

\begin{theorem}[Scaling limit of the continuous SOS model]
\label{t.CLT}
There exists a constant~$\ahom >0$ such that, for every~$\f=(\f_1,\ldots,\f_d)\in C_c^\infty(\Rd;\Rd)$, the sequence of random variables 
\begin{equation}
\label{e.FR}
F_R(\nabla \phi) := R^{-\frac d2}\sum_{x\in\Zd}\sum_{i=1}^d \f_i\left(\frac xR\right) \left( \phi(x+e_i)-\phi(x) \right), \quad R\geq 1
\end{equation}
converges in law, as~$R\to \infty$, to a normal random variable with zero mean and variance equal to
\begin{equation}
\label{e.quantttty}
\int_{\Rd} \nabla \cdot \f (x) \cdot G_{\ahom}(x,y)\nabla \cdot \f (y)\,dx\,dy, 
\end{equation}
where
\begin{equation*}
G_{\ahom}(x):= 
\left\{ 
\begin{aligned}
& -(2\pi \ahom)^{-1} \log |x| & \mbox{if} & \ d=2\,, \\
& (d(d-2)|B_1|\ahom)^{-1}  |x|^{2-d} & \mbox{if} & \ d>2\,,
\end{aligned}
\right.
\end{equation*}
is the Green function for the elliptic operator~$\ahom\Delta$ in~$\Rd$.
\end{theorem}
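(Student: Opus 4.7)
The plan is to approximate the non-smooth SOS potential $V(s)=|s|$ by uniformly convex smoothings $V_\eps$, carry out stochastic homogenization of the resulting Helffer--Sjöstrand operator for each $\eps$, and then pass to the singular limit $\eps\to 0$.

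First, I would introduce a smooth, strictly convex approximation $V_\eps$ of $|s|$ with $V_\eps''\ge\eps$---for instance $V_\eps(s):=\tfrac{\eps}{2}s^2+|s|\ast\rho_\eps$ with $\rho_\eps$ a standard mollifier---and let $\mu_\eps$ be the corresponding infinite-volume gradient Gibbs measure. Brascamp--Lieb concentration, valid uniformly in $\eps$, yields tightness and $\mu_\eps\rightharpoonup\mu$ as $\eps\to 0$. For each fixed $\eps>0$, strict convexity of $V_\eps$ permits the Helffer--Sjöstrand representation
\begin{equation*}
\operatorname{Var}_{\mu_\eps}\!\bigl(F(\nabla\phi)\bigr) \;=\; \E_{\mu_\eps}\!\left[\nabla F\cdot \mathcal{L}_\eps^{-1}\,\nabla F\right],
\end{equation*}
where the spatial part of $\mathcal{L}_\eps$ is a random elliptic operator on $\Zd$ with conductances $a_\eps(e,\phi)=V_\eps''(\nabla_e\phi)$, stationary under shifts of $\phi$.

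Next, for each fixed $\eps>0$, I would apply the Naddaf--Spencer central limit theorem together with its modern quantitative enhancements from stochastic homogenization to conclude that $F_R(\nabla\phi)$ converges in law under $\mu_\eps$ to a centered Gaussian with variance of the form~\eqref{e.quantttty} and a homogenized coefficient $\ahom(\eps)>0$. This coefficient admits the usual variational characterization
\begin{equation*}
\xi\cdot\ahom(\eps)\,\xi \;=\; \inf_{\psi\in\Lpot}\,\E_{\mu_\eps}\!\Bigl[\,\tfrac12 \sum_{e} a_\eps(e,\phi)\,|\xi\cdot e + \psi(e,\phi)|^2\,\Bigr].
\end{equation*}
Now pass $\eps\to 0$ in two places: $\mu_\eps\rightharpoonup\mu$ (from Brascamp--Lieb), and $\ahom(\eps)\to\ahom\in[0,\infty)$ (from monotonicity arguments applied to the variational formula, together with compactness). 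Positivity $\ahom>0$ must be established separately, by quantifying a non-degenerate lower bound on the density of small-gradient bonds under $\mu$. A diagonal argument combining the CLT for each $\mu_\eps$ with $\mu_\eps\rightharpoonup\mu$ then transfers the CLT to $\mu$ and identifies the limiting coefficient as $\ahom$.

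The main obstacle is the passage $\eps\to 0$. The conductances $V_\eps''(\nabla\phi)$ fail to be uniformly elliptic in $\eps$, and formally concentrate on bonds with near-zero gradient, so the standard uniformly elliptic homogenization theory does not apply directly to the limit. Proving $\ahom>0$ and justifying the exchange of the limits $R\to\infty$ and $\eps\to 0$ will require sharp input from the geometry of the SOS measure: a lower bound on the density of small-gradient bonds, extremality and mixing properties of $\mu$, and Brascamp--Lieb moment bounds on $\nabla\phi$ that are robust through the degenerate limit.
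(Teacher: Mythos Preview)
Your proposal identifies the right difficulty but does not actually overcome it. The scheme of regularizing to $V_\eps$ with $V_\eps''\ge\eps$, applying Naddaf--Spencer for each fixed $\eps$, and then sending $\eps\to0$ collapses at the final step: the exchange of limits $R\to\infty$ and $\eps\to0$ requires either uniform-in-$\eps$ error bounds in the CLT---which amounts to proving degenerate homogenization---or uniform-in-$R$ convergence of moments of $F_R$ under $\mu_\eps$ toward those under $\mu$. Neither is provided; you merely list ingredients you hope would suffice (``density of small-gradient bonds, extremality and mixing''). In particular, your claim of a Brascamp--Lieb inequality ``valid uniformly in $\eps$'' is unsupported: standard Brascamp--Lieb for $V_\eps$ yields variance bounds of order $1/\eps$, and proving a uniform version for the SOS potential is itself nontrivial. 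In the Helffer--Sj\"ostrand picture, as $\eps\to 0$ the conductances $V_\eps''(\nabla_e\phi)$ concentrate on a Dirac mass at zero gradient and blow up there, a limit more singular than any existing degenerate-homogenization framework accommodates.

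The paper bypasses the Helffer--Sj\"ostrand route entirely. The key device is the Brydges--Spencer integral identity
\begin{equation*}
\exp\bigl(-2\sqrt{z}\bigr)=\pi^{-1/2}\int_{-\infty}^{\infty}\exp\Bigl(-ze^{t}-e^{-t}-\tfrac12 t\Bigr)\,dt,
\end{equation*}
which, applied edge by edge, rewrites the SOS measure as a joint law on $(\phi,\tau)$ in which $\phi$ is \emph{exactly Gaussian} given $\tau$, with conductances $\a(e)=e^{\tau_e}$. The homogenization problem is then for the random operator $\nabla\cdot\a\nabla$ under the $\tau$-marginal. Its degeneracy is of a different and more tractable character: the paper shows that (i) all inverse moments of $\a(e)$ are finite, and (ii) the level sets $\{\a(e)>M\}$ are stochastically dominated by arbitrarily subcritical Bernoulli percolation. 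These two facts yield, after coarse-graining, effective uniform ellipticity on large scales, and a subadditive variational argument then gives qualitative homogenization and hence the CLT via moment convergence. The uniform Brascamp--Lieb inequality that you need is in fact proved in the paper, but only \emph{through} the Brydges--Spencer representation and the inverse-moment bounds on $\a$; it does not come for free from convexity.
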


The continuous solid-on-solid model is an analogue of its \emph{integer-valued} counterpart, which arises in the study of interfaces separating two pure phases in the three-dimensional Ising model. Indeed, if we consider the three-dimensional Ising model and send the interaction strength along one of the axes to infinity, then the interface becomes a random function. The law of this random function is the integer-valued solid-on-solid model~\cite{vBe,BFL},  which has the same form as~\eqref{e.mu} with the added constraint that the field~$\phi$ takes values in~$\beta \Z$, for some~$\beta>0$.

\smallskip

The Gibbs measure described in~\eqref{e.mu} is a particular instance in the general family of~$\nabla\phi$-interface models, which are defined by
\begin{equation}
\label{e.V}
d\mu :=\frac 1Z  \exp\left( - \sum_{x,y\in\Zd, \, |x-y|=1} \mathsf V(\phi(x) -\phi(y)) \right) \,d\phi,
\end{equation}
where the \emph{interaction potential} $\mathsf{V}:\R \to \R$ is typically assumed to be an even function with some minimal growth at infinity. In the case that~$\mathsf{V}$ is quadratic, this model coincides with the (a constant times the discrete) Gaussian free field. It is believed that, under very general assumptions which include the case~$\mathsf{V}(x)=|x|$ of interest here, the limiting large-scale statistical behavior of the field~$\nabla \phi$ is a Gaussian free field~\cite{BLL,She,Funaki}.

\smallskip

A lot of effort has been made in analyzing this model in the under the assumption that~$\mathsf{V}$ is uniformly convex and has a second derivative which is uniformly bounded from above. In this case, Naddaf and Spencer~\cite{NS} were the first to characterize the scaling limit for~$\nabla \phi$ as a Gaussian free field with a positive definite covariance matrix~$\ahom$. Their arguments were based on the observation that the \emph{Helffer-Sj\"ostrand representation}~\cite{HS} reveals an equivalence between the scaling limit and a corresponding elliptic homogenization problem. The assumptions on~$\mathsf{V}$ correspond to a uniform ellipticity assumption, and the matrix~$\ahom$ is the homogenized diffusion matrix. 
Other works in the uniformly convex case include the characterization of the hydrodynamic limit~\cite{FS} for the corresponding Langevin dynamics associated to the Gibbs measure, a characterization of the fluctuations~\cite{GOS} of these dynamics,  central limit theorems for finite volume measures~\cite{Mi}, quantitative limit theorems and the~$C^2$ regularity of the surface tension~\cite{AW,AD2}, and a local limit theorem~\cite{Wu}.

\smallskip

The scaling limits of the $\nabla \phi$ interface models with non-uniformly convex potentials have been obtained in several settings,  albeit not covering the solid-on-solid model~\eqref{e.mu}. Using renormalization group arguments, Adams, Buchholz, Koteck\'y and M\"uller~\cite{AdKM,AdBKM} studied certain (possibly non-convex) potentials which are~$C^k$ perturbations of uniformly convex ones which are uniformly convex in a large ball centered at the origin but may be nonconvex far from the origin. They characterized the scaling limit to a Gaussian free field and proved the strict convexity of the surface tension near the origin (see also~\cite{CDM} for an earlier work in this direction). Biskup and Spohn~\cite{BiSp} obtained the scaling limit for a very special class of non-convex potentials that can be written a mixture of Gaussians. Biskup and Rodriguez~\cite{BR} obtained an invariance principle for random walks in a class of degenerate, dynamical random environment, inspired by the $\nabla\phi$ model with potential 
\begin{equation}
\label{e.soft}
\mathsf V(x) 
= 
\left\{
\begin{aligned}
 &\frac12 x^2\,, & \mbox{if}  \ |x|\leq 1\,,
\\ 
& |x|-\frac 12\,, & \mbox{if}  \ |x|>1\,,\end{aligned}
\right.
\end{equation}
which behaves like the potential~$\mathsf V(x) = |x|$ for large~$x$ but is uniformly convex near the origin sets.  

\smallskip

Constraining the model to take values in a  discrete set introduces another type of nonconvexity which makes the analysis much more difficult. Obtaining the scaling limit for the discrete Gaussian model at high temperature--- which is an integer valued height function with the quadratic potentials---was recently obtained by Bauerschmidt, Park and Rodriguez~\cite{Bau1,Bau2}, using renormalization group methods. 
This model is believed to undergo a so-called roughening transition when $d=2$:  for large $\beta$,  the height function is believed to be localized; for small $\beta$ the height function is believed to be rough, and converges to a Gaussian free field scaling limit.  
For the (integer-valued) solid-on-solid model, a quantitative logarithmic lower bound of the variance in the rough phase was first derived by Frohlich and Spencer~\cite{FrSp}, using a connection to the Coulomb gas and an elaborate multiscale expansion. We also mention the work by Lammers and Ott~\cite{LO} (see also~\cite{Lam}), who gave a simplified proof of the delocalization of the height function in more general settings (but without quantifying the logarithmic variance). 
The scaling limit of the integer-valued SOS model to a Gaussian free field remains an important open question. 

\smallskip

The main difficulty in the case of the continuous solid-on-solid model, and the reason the theory for it is very sparse (and in particular its scaling limit has remained open), is due to the singularity of~$\mathsf{V}(x) =|x|$ at the origin coupled with the complete lack of strict convexity. In particular, the singularity of the second derivative of~$\mathsf{V}$ near the origin makes it difficult to make sense of the Helffer-Sj\"ostrand representation. 

\smallskip

There have been works obtaining upper and lower bounds for the variance of the random variables in~\eqref{e.FR} under general assumptions which include the continuous solid-on-solid model. 
Using a modified Mermin-Wagner type argument, Milos and Peled~\cite{MiP} obtained a logarithmic lower bound for the~$\nabla \phi$ model in very generic setting (including the hard-core potential). For a logarithmic upper bound in two dimensions, see the work of Magazinov and Peled~\cite{MP} which covers some convex potentials with~$\{x: \mathsf V''(x) =0\}$ having measure zero, as well as the work of Dario~\cite{Da} which allows for potentials having asymptotic growth condition $\mathsf V''(x) \sim |x|^{p-2}$ for $p\in(2,\infty)$. The earlier work of Sheffield~\cite{She} proves the strict convexity of the surface tension for very general models which also include the continuous solid-on-solid model. 

\smallskip

Closer to the approach of this paper is the work by Brydges and Spencer~\cite{BS},  who studied the~$\nabla \phi$-model with the potential~$\mathsf V(x) = \sqrt{\delta + (\nabla \phi)^2}$, for~$\delta>0$. They obtain, among other results, a Brascamp-Lieb inequality~\cite{BL} and bounds on the exponential moments of the field. 
Their arguments are based on the integral identity~\eqref{e.magic} below and an integration by parts which transforms questions about the correlation structure of the random field into questions about the behavior of the elliptic Green function of a highly degenerate elliptic (discrete) PDE with random coefficients (see~Section~\ref{s.BS} below). 
A homogenization limit for this PDE to a multiple of the Laplacian would consequently yield the scaling limit of the field to a multiple of the GFF. Obtaining such a homogenization result is however difficult, due to the highly degenerate nature of the PDE and a lack of quantitative ergodicity: its random coefficients are unbounded from above and below and we do not possess good decorrelation estimates for it beyond qualitative ergodicity. 

\smallskip

Our proof of Theorem~\ref{t.CLT} starts from the insight of~\cite{BS}, that the moments of the continuous solid-on-solid field~\eqref{e.mu} are related to the degenerate elliptic homogenization problem, and combines it with \emph{coarse-graining methods} in homogenization (see~\cite{AKMbook,AKbook} and the references therein). We emphasize that, while the connection to the elliptic homogenization problem in~\cite{BS} carries over to the~$\delta =0$ case by a limiting argument, the law of the random coefficient field behaves very differently, and is much more degenerate, in the~$\delta=0$ case. Indeed, if~$\delta>0$, then all positive and negative moments of the conductance are bounded, and therefore one may apply the regularity theory for elliptic equations with moments conditions to obtain the homogenization.  These applies to the models treated in~\cite{BS} and~\cite{Ye}. For the solid-on-solid model case ($\delta=0$), the conductances only have finite \emph{inverse} moments---the field itself does not even have a finite first moment.
As a replacement for uniform ellipticity upper bounds, we prove subcritical percolation-type estimates on the level sets of the coefficient field, allowing us to deduce that the statistics of large conductances are dominated by subcritical percolation clusters (see Section~\ref{s.perco}). 

To make use of such a weak ``ellipticity'' condition, we adapt the recent coarse-graining methods developed in the context of quantitative homogenization in~\cite{AS,AKMbook,AKbook}. These ideas are quite flexible and allow such extremely degenerate coefficient fields, provided that they \emph{become (almost) uniformly elliptic after renormalization}, in other words, after coarse-graining on some very large scale. This is exactly what our subcritical percolation estimates are able to show.

\smallskip

The coarse-graining method has been previously employed in the context of a very degenerate equation in~\cite{AD1}, which proved quantitative homogenization results for harmonic functions on supercritical percolation clusters (arbitrarily close to criticality). In that setting, it is the uniform ellipticity lower bound that is found only after coarse-graining, while in the present paper it is the upper bound. However, the general idea of \emph{renormalizing a degenerate equation until it becomes uniformly elliptic} remains the same. 

\smallskip

It is notable that, while the coarse-graining methods were developed for \emph{quantitative} homogenization, they are robust enough for our use here---despite a lack of any quantitative ergodicity condition on the coefficient field. The key is that we are able to quantify the scale at which the coarse-grained fields are uniformly elliptic, even if we cannot  quantify the (larger) scale at which homogenization occurs. This is because the former is controlled by the percolation estimates (which we can quantify), while the latter is due to the ergodic properties of the field (which we cannot). 

\smallskip 

This paper is organized as follows.  In the next section we  reduce Theorem~\ref{t.CLT} to a degenerate elliptic homogenization problem by adapting the ideas of Brydges and Spencer~\cite{BS}.  In Section~\ref{s.perco} we obtain percolation estimates for the conductances, which implies the probability that the origin belongs to a large cluster of high or low conductances is very small.   In Section~\ref{s.homog} we introduce the subadditive energy quantities and show by a multiscale iterative argument that they converge to the limiting dual quantities. We then use a variational approach to prove Theorem~\ref{t.CLT}.  Some auxiliary estimates are stated in Appendix.

\section{Reduction to elliptic homogenization problem}
\label{s.BS}

In this section we reduce the main result to an elliptic homogenization problem by adapting the ideas of Brydges and Spencer~\cite{BS}.

\subsection{Derivation for finite volume Gibbs measures}

Throughout, we denote~$Q_L:= [-L,L]^d \cap \Zd$.  We let~$\mathcal{E}(Q_L)$ denote the set of directed edges on~$Q_L$ and $Q_L^\circ$  the \emph{interior} of $Q_L$.   We introduce a one parameter family of finite volume Gibbs measures that generalize~\eqref{e.mu}: 
\begin{equation}
\label{e.mudelta}
d\mu_{L,\delta} := \frac1{Z_{L,\delta}} \exp\Biggl( - \sum_{x\in Q_L^\circ}\sum_{y \sim x} (\delta + |\phi(x) -\phi(y)|^2)^{\sfrac12}  \Biggr) \,d\phi\,,
\end{equation}
where $Z_{L,\delta}$ is a normalizing constant that makes $\mu_{L,\delta}$ a probability measure.  The derivation in this section and the main result of this paper works for all $\delta\ge 0$.   

\smallskip

As in~\cite{BS}, the starting point of our derivation is the integral identity
\begin{equation}
\label{e.magic}
\exp \bigl( -2\sqrt{z}\bigr) 
= 
\pi^{-\sfrac12}
\int_{-\infty}^\infty \exp \biggl( -z \exp(t)   -\exp (-t) -\frac12t \biggr)  \,dt\,.
\end{equation}
Let us quickly recall the derivation of~\eqref{e.magic} from~\cite{BS}, which reduces the integral to the integral of a Gaussian after a change of variables:
\begin{align*}
\lefteqn{
\int_{-\infty}^\infty \exp \biggl( -z \exp(t)   -\exp (-t) -\frac12t \biggr)  \,dt
} \qquad &
\notag \\ & 
=
z^{\sfrac14} \int_{-\infty}^\infty \exp \Bigl ( -z^{\sfrac12} \exp(s)   -z^{\sfrac12}\exp (-s) -\frac12s \Bigr )  \,ds
\qquad \mbox{(substitution: $s = t +\frac12\log z$)}
\notag \\ &
=
\frac12 z^{\sfrac14} \int_{-\infty}^\infty 
\Bigl ( \exp\bigl ( \tfrac12s \bigr ) + \exp\bigl ( -\tfrac12s \bigr )   \Bigr )
\exp \Bigl ( -z^{\sfrac12} \exp(s)   -z^{\sfrac12}\exp (-s)  \Bigr )  \,ds
\notag \\ &
=
\frac12 z^{\sfrac14} \int_{-\infty}^\infty 
\Bigl ( \exp\bigl ( \tfrac12s \bigr ) + \exp\bigl ( -\tfrac12s \bigr )   \Bigr )
\exp \Bigl ( -z^{\sfrac12} \bigl ( \exp\bigl ( \tfrac12s \bigr ) -\exp\bigl ( -\tfrac12s \bigr ) \bigr )^2  -2z^{\sfrac12}\Bigr )  \,ds
\notag \\ &
=
\int_{-\infty}^\infty 
\exp \Bigl ( -u^2 - 2z^{\sfrac12}\Bigr )  \,du
\qquad \mbox{(substitution: $u = z^{\sfrac14}\exp\bigl ( \tfrac12s \bigr ) -\exp\bigl ( -\tfrac12s \bigr )$)}
\notag \\ &
=
\pi^{\sfrac12}\exp \bigl ( - 2z^{\sfrac12}\bigr ) \,.
\end{align*}
Applying~\eqref{e.magic}, we may write the partition function as 
\begin{equation}
\label{e.jointZ}
Z_{L,\delta} = \int e^{-A_L(\phi,\tau,\delta)} \prod_{e \in \mathcal E(Q_L)} \exp(-e^{-\tau_e} - \frac12 \tau_e) \,d\tau d\phi\,,
\end{equation}
where 
\begin{equation}
A_L(\phi,\tau,\delta) = \sum_{e \in \mathcal E(Q_L)}  (\delta + (\nabla\phi(e))^2) e^{\tau_e}: = (\phi, D_L(\tau) \phi) + \delta \sum_{e \in \mathcal E(Q_L)}  e^{\tau_e}
\,.
\end{equation}
Here we introduced the quadratic form: for every $f: Q_L \to \R$,
\begin{equation}
\label{e.dl}
(f, D_L(\tau) f):= \sum_{e \in \mathcal E(Q_L)}   (\nabla f(e))^2 e^{\tau_e}
\,.
\end{equation}
We may therefore view the measure~\eqref{e.jointZ} as a joint measure on $(\phi,\tau)$. Condition on any realization of $\tau$, the $\phi$ marginal is Gaussian, thus the conditional expectation can be computed exactly and we obtain the $\tau$ marginal of the measure

\begin{equation}
\label{e.Zt}
Z_{L,\delta,\tau} = \int \frac{1}{\sqrt{\det D_L(\tau)}} \prod_{e \in \mathcal E(Q_L)}   \exp\Bigl(-\delta e^{\tau_e}-e^{-\tau_e} - \frac12 \tau_e\Bigr) \,d\tau_e 
\end{equation}
and
\begin{equation}
\label{e.muLtau}
d\mu_{L,\tau}^\delta := \frac{1}{Z_{L,\delta,\tau} } \frac{1}{\sqrt{\det D_L(\tau)}} \prod_{e \in \mathcal E(Q_L)}   \exp\Bigl(-\delta e^{\tau_e}-e^{-\tau_e} - \frac12 \tau_e\Bigr) \,d\tau_e \,.
\end{equation}
The derivation above also applies to evaluate the moments of linear functionals of $\nabla\phi$. Denote by $\f_R := \f(\frac{\cdot}{R})$, and take $L>R$. By applying~\eqref{e.magic} and  integrating out the $\phi$ marginal, we obtain
\begin{align}
\var_{\mu_L}[F_R] 
&
= \var_{\mu_L} \Bigl[R^{-\frac d2}\sum_{x\in\Zd}\sum_{i=1}^d f_i\left(\frac xR\right) \left( \phi(x+e_i)-\phi(x) \right) \Bigr]
\notag \\ &
= 
R^{-d} \sum_{x,y\in \Zd}\left\langle \nabla \cdot \f_R (x) G_\a(x,y)  \nabla \cdot \f_R (y) \right\rangle_{\mu_{L,\tau}^\delta}\,.
\end{align}
Here $G_\a =(\nabla \cdot \a \nabla)^{-1}$,  where $\a$ is a diagonal matrix with $\a(e,e) = e^{\tau_e}$. Let $u$ be the solution of the PDE
\begin{equation}
\nabla\cdot  \a \nabla u = \nabla\cdot \f_R
\quad \mbox{in} \ \Zd\,.
\end{equation}
with Dirichlet boundary condition $u=0$ on $\partial Q_L$.  Also, given $f,g: \Zd \to\R$,  define the inner product
\begin{equation}
\label{e.inner}
(f,g)_R:= R^{-d} \sum_{x\in\Zd} f(x)g(x).
\end{equation}
We may write the above variance as 
\begin{equation}
\var_{\mu_L}[F_R]   =\left\langle (\f_R, \nabla u)_R \right\rangle_{\mu_{L,\tau}^\delta}\,.
\end{equation}

\subsection{Derivation in infinite volume}

The preceding argument does not directly apply to the infinite-volume Gibbs measure $\mu$, since the integral in~\eqref{e.jointZ} diverges in  infinite volume, and we may not direct apply the transformation in~\eqref{e.magic}. 

Instead, we derive the elliptic PDE based on the infinite-volume Gibbs state~\eqref{e.mu} constructed by Sheffield~\cite{She}.  Let $\mu_\delta$ be the infinite-volume Gibbs state  defined by 
\begin{equation}
\label{e.mudelta.inf}
d\mu_{\delta} := \frac 1{Z_{\delta}} \times \exp\left( - \sum_{x\in \Zd}\sum_{y \sim x} (\delta + |\phi(x) -\phi(y)|^2)^{\sfrac12}  \right) \,d\phi\,,
\end{equation}
where $Z_{\delta}$ is the normalization constant to make $\mu_\delta$ a probability measure. 
By~\cite{She}, $\mu_\delta$ is the unique translation invariant ergodic Gibbs state on $\Zd$ with zero slope.

We denote by $\Omega$ the set of functions $\phi: \Zd \to \R$. Following the notation of~\cite{BiSp}, for any finite collection of edges $E \subset \mathcal E(\Zd)$, define a coupling of the $\phi$ and $\tau$ field  $\tilde \mu_E^\delta$ on $\Omega \times \R^E$ by 

\begin{equation}
\label{e.couple}
\tilde \mu_E^\delta (A \times B) :=
\int_B \prod_{e\in E}  e^{-\delta e^{\tau_e} - e^{- \tau_e} - \frac {\tau_e}{2}}
\biggl  \langle \indc_A \prod_{e\in E} e^{\sqrt{\delta+ (\nabla \phi(e))^2} - (\nabla \phi(e))^2e^{\tau_e} } \biggr  \rangle_{\!\!\mu_\delta} \, d\tau,
\end{equation}
for every $A\subset \Omega,  B\subset  \R^E$. We may think of $\tilde \mu_E^\delta $ as obtained by applying the formula~\eqref{e.magic} only to the edge set $E$ in the definition of $\mu_\delta$ in~\eqref{e.mudelta.inf}.  It is straightforward to check (using~\eqref{e.magic}) that $(\tilde \mu_E^\delta)_{E \subset \mathcal E(\Zd)}$ is a consistent family of measures. By the Kolmogorov extension theorem, there exists a unique measure $\tilde \mu^\delta$ on $\Omega \times \R^{\mathcal E(\Zd)}$ such that $\tilde \mu_E^\delta $ is its projection on $\Omega \times \R^E$. 
By studying the corresponding $\tau$-measure in finite volumes~\eqref{e.muLtau}, we prove in Section~\ref{s.tight} that the sequence of the measures $\{ \mu_{L,\delta}\}_{L\ge 1}$ is tight. 

\begin{lemma}
\label{l.tight}
The sequence of the $\nabla\phi$ Gibbs measures $\{ \mu_{L,\delta}\}_{L\ge 1}$ is tight, and every subsequential limit is a translation invariant and ergodic $\nabla\phi$ Gibbs measure on $\Zd$.
\end{lemma}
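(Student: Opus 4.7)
The plan is to verify four properties of the sequence $\{\mu_{L,\delta}\}$: (i) tightness, via a uniform second-moment bound on $\nabla\phi$; (ii) the DLR condition for subsequential limits; (iii) translation invariance; and (iv) ergodicity of the limits.

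The key quantitative input is (i). I would obtain the bound $\sup_L\E_{\mu_{L,\delta}}[(\nabla\phi(e))^2]\le C$ (with $C$ depending only on $d$ and an upper bound on $\delta$) from the joint $(\phi,\tau)$-coupling $\tilde\mu^\delta$ constructed in~\eqref{e.couple}. Conditional on $\tau$, the field $\phi$ is centered Gaussian with quadratic form $(\phi,D_L(\tau)\phi)$, so the variance of $\nabla\phi(e)=\phi(y)-\phi(x)$ equals $\tfrac12$ times the effective resistance between $x$ and $y$ in the network with conductances $a(e')=e^{\tau_{e'}}$; by Rayleigh monotonicity this is at most $\tfrac12 e^{-\tau_e}$. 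Conversely, integrating out the remaining variables in the joint measure shows that the conditional density of $\tau_e$ given $\phi$ is proportional to $\exp(-(\delta+(\nabla\phi(e))^2)e^{\tau_e}-e^{-\tau_e}-\tau_e/2)$, with normalization $\sqrt\pi\,e^{-2\sqrt{\delta+(\nabla\phi(e))^2}}$ by~\eqref{e.magic}. Writing $z=\delta+(\nabla\phi(e))^2$ and substituting $u=e^{-\tau}$, the relevant integral becomes $\int_0^\infty u^{1/2}e^{-z/u-u}\,du=\sqrt\pi\,e^{-2\sqrt z}(\sqrt z+\tfrac12)$ (a classical Bessel $K_{3/2}$ evaluation), giving the explicit identity
\[
\E[e^{-\tau_e}\mid\phi]=\sqrt{\delta+(\nabla\phi(e))^2}+\tfrac12.
\]
Combining with the Gaussian bound and applying Jensen's inequality to the concave $\sqrt{\cdot}$ produces the self-improving estimate
\[
\E[(\nabla\phi(e))^2]\le\tfrac12\sqrt{\delta+\E[(\nabla\phi(e))^2]}+\tfrac14,
\]
which forces a uniform bound. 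Tightness of the gradient-field marginals then follows from Prohorov applied to each finite-dimensional distribution.

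For (ii), each $\mu_{L,\delta}$ satisfies DLR for every finite $\Lambda\Subset Q_L^\circ$ with an explicit bounded continuous local Boltzmann density, so DLR passes to any weak subsequential limit. For (iii), I would compare $\mu_{L,\delta}$ with its shift by $v\in\Zd$: both are zero-boundary Gibbs states for the same translation-invariant Hamiltonian, on $Q_L$ and $v+Q_L$ respectively. A coupling via the common joint $(\phi,\tau)$ representation, together with the moment bound from (i), localizes the discrepancy of cylinder expectations to the boundary sliver $Q_L\triangle(v+Q_L)$, which becomes negligible relative to the bulk as $L\to\infty$.

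For (iv), any subsequential limit $\mu^*$ is a translation-invariant Gibbs state. By the $\phi\mapsto-\phi$ symmetry of $\mu_{L,\delta}$, the expected slope of $\mu^*$ is zero; combined with the moment bound from (i) and a correlation estimate (e.g.\ of the type in~\cite{MP,Da,MiP}), one shows that the empirical averages $|Q_M|^{-1}\sum_{x\in Q_M}\nabla\phi_i(x)$ vanish in $\mu^*$-probability as $M\to\infty$, so $\mu^*$ is almost surely of slope $0$. The ergodic decomposition of $\mu^*$ is therefore a mixture of translation-invariant ergodic Gibbs states of slope $0$, and Sheffield's uniqueness theorem~\cite{She} identifies such a measure uniquely; the decomposition is trivial and $\mu^*$ is itself ergodic. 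The principal obstacle is (iv): establishing that the ergodic decomposition actually collapses requires a slope-concentration input beyond bare moments, and invoking the logarithmic variance bounds available in the literature is the cleanest way to close this step.
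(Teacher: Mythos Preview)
Your tightness argument in (i) is correct and takes a genuinely different route from the paper. The paper proves a full Brascamp--Lieb inequality (Proposition~\ref{p.BLfinite}): it bounds the $2k$-th moment of any linear functional $(v,\phi)$ by the corresponding discrete GFF quantity, using the matrix--tree theorem (Lemma~\ref{l.tree}), the log-concavity of $(\det D_L(\tau))^{-1/2}$, and the inverse moment bound $\langle e^{-k\tau_e}\rangle\le C_k$ (Corollary~\ref{c.mgffinite}). Tightness then falls out by taking $v=\delta_x-\delta_{x+e_i}$. Your approach bypasses all of this machinery: the Rayleigh/Thomson bound $\mathrm{Var}(\nabla\phi(e)\mid\tau)\le\tfrac12 e^{-\tau_e}$ together with the explicit Bessel-$K_{3/2}$ evaluation of $\E[e^{-\tau_e}\mid\phi]$ gives a self-improving inequality that pins down $\E[(\nabla\phi(e))^2]$ directly. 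This is more elementary and arguably cleaner for the bare second-moment bound.

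The trade-off is that the paper's Brascamp--Lieb inequality does more work downstream. It is reused in the proof of Theorem~\ref{t.CLT} to control higher moments of $(\f_R,\nabla u_R)_R$, and it is also the natural tool to close your step~(iv): applying Proposition~\ref{p.BLfinite} (or its infinite-volume version, Proposition~\ref{p.BL}) to the test function $v=\indc_{Q_M+e_i}-\indc_{Q_M}$ bounds the variance of the empirical slope by the corresponding discrete-harmonic quantity, which is $O(M^{-1}\log M)$ in $d=2$ and $O(M^{-1})$ in $d\ge3$. This gives slope concentration without appealing to the external results~\cite{MP,Da,MiP} you cite, and then Sheffield's uniqueness finishes exactly as you outline. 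Your single-edge moment bound alone does not see this, which is why you (correctly) flag~(iv) as the principal obstacle; but the gap is closed by the very inequality the paper chose to prove instead. For parts (ii)--(iv) the paper is in fact terser than you are: its written proof addresses only tightness and defers the rest to ``standard arguments'' and the citation of~\cite{She}.
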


It follows from~\cite{She} that $\mu_\delta$ is the unique translation invariant ergodic Gibbs state on $\Zd$ with zero slope,  thus $\{ \mu_{L,\delta}\}_{L\ge 1}$ converges weakly to $\mu_\delta$.  This also induces a weak convergence of the finite volume joint measure $\tilde\mu_{L,\mathcal E}^\delta$,  stated below. 

\begin{corollary}
\label{c.weak}
The sequence of the $\nabla\phi$ Gibbs measures $\{ \mu_{L,\delta}\}_{L\ge 1}$ converges weakly to $\mu_\delta$. Moreover, fix any $K, L_1\ge 1$ and $L>K\vee L_1$, if we define the following measure on $\R^{Q_{L_1}} \times \R^{\mathcal E(Q_K)}$,
\begin{equation}
\tilde\mu_{L,\mathcal E(Q_K)}^\delta (A \times B) :=
\int_B \prod_{e\in \mathcal E(Q_K)}  e^{-\delta e^{\tau_e} - e^{- \tau_e} - \frac {\tau_e}{2}}
\biggl \langle \!\indc_A \prod_{e\in E} e^{\sqrt{\delta+ (\nabla \phi(e))^2} - (\nabla \phi(e))^2e^{\tau_e} } \biggr \rangle_{\!\!\mu_{L,\delta}} \, d\tau,
\end{equation}
for every $A\subset \R^{Q_{L_1}},  B\subset  \R^{\mathcal E(Q_K)}$. 
Then $\tilde\mu_{L,\mathcal E(Q_K)}^\delta$ converges weakly to $\tilde\mu_{\mathcal E(Q_K)}^\delta$,  defined by~\eqref{e.couple}, as $L\to \infty$.

\end{corollary}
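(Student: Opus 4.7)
The first claim follows from Lemma~\ref{l.tight} combined with Sheffield's uniqueness theorem. Tightness provides sequential compactness and, by the lemma, every subsequential limit is a translation-invariant ergodic $\nabla\phi$ Gibbs state. To pin down the limit I would further observe that the reflection symmetry $\phi \mapsto -\phi$ of the even potential in~\eqref{e.mudelta}, together with the zero Dirichlet data, forces $\E_{\mu_{L,\delta}}[\nabla\phi(e)] = 0$ for every edge $e$; this zero-slope property passes to any subsequential limit by the standard uniform integrability of $\nabla\phi$ available in the $\nabla\phi$-Gibbs framework. Sheffield's theorem then identifies every subsequential limit as $\mu_\delta$, so the whole sequence converges weakly.

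For the second claim, it suffices to test against bounded continuous functions $F\colon \R^{Q_{L_1}} \times \R^{\mathcal E(Q_K)} \to \R$ with compact support in the $\tau$-variable, say $\mathrm{supp}_\tau F \subset [-M,M]^{\mathcal E(Q_K)}$; the general case follows by approximation, using tightness of the joint measures obtained from the first claim (the $\phi$-marginal) combined with the explicit $\tau$-density $W(\tau) := \prod_{e} e^{-\delta e^{\tau_e} - e^{-\tau_e} - \tau_e/2}$ in~\eqref{e.couple}. Applying Fubini,
\begin{equation*}
\int F \, d\tilde\mu^\delta_{L,\mathcal E(Q_K)} = \int d\tau\, W(\tau) \left\langle F(\phi,\tau)\, \Phi(\phi,\tau) \right\rangle_{\mu_{L,\delta}},
\end{equation*}
where $\Phi(\phi,\tau) := \prod_{e\in\mathcal E(Q_K)} e^{\sqrt{\delta + (\nabla\phi(e))^2} - (\nabla\phi(e))^2 e^{\tau_e}}$. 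For $\tau$ in the compact support of $F$ we have $e^{\tau_e} \ge e^{-M}$, so the scalar map $a \mapsto \sqrt{\delta + a^2} - a^2 e^{\tau_e}$ is bounded above on $\R$ by a constant $C_{\delta, M}$ (the quadratic beats the sub-linear), hence $\Phi(\cdot,\tau)$ is a bounded continuous function of $\phi$ uniformly in such $\tau$.

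Consequently, for each $\tau$ in the support the first claim gives pointwise convergence of the inner expectation, and the uniform majorant $\|F\|_\infty\, C_{\delta, M}\, W(\tau)\, \indc_{[-M,M]^{\mathcal E(Q_K)}}(\tau)$ (which is integrable in $\tau$) lets dominated convergence pass the limit through the outer integral, delivering convergence to $\int F\, d\tilde\mu^\delta_{\mathcal E(Q_K)}$.

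The main technical obstacle is the unboundedness of the Radon-Nikodym weight $\Phi$ in $\nabla\phi$: weak convergence of $\mu_{L,\delta}$ does not immediately apply to unbounded observables. Restricting to $\tau$ in compact sets resolves this, since the Gaussian damping $-a^2 e^{\tau_e}$ then uniformly dominates the sub-linear growth $\sqrt{\delta + a^2}$. This is precisely the mechanism underlying the Brydges-Spencer identity~\eqref{e.magic}.
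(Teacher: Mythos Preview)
Your proposal is correct and in fact more detailed than the paper itself, which does not give a separate proof of this corollary: the paper simply remarks, in the sentence preceding the statement, that Sheffield's uniqueness of the zero-slope ergodic Gibbs state combined with Lemma~\ref{l.tight} yields $\mu_{L,\delta}\Rightarrow\mu_\delta$, and that this ``also induces a weak convergence of the finite volume joint measure.'' Your argument for the first claim is exactly this, with the zero-slope verification made explicit.

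For the second claim you supply the details the paper omits. Your reduction to test functions with compact $\tau$-support, followed by dominated convergence, is the natural route. The one point worth tightening is the justification of tightness of the joint measures in the $\tau$-variable: you attribute it to ``the explicit $\tau$-density $W(\tau)$,'' but $W$ alone is not integrable at $+\infty$ when $\delta=0$. The correct observation is that, conditional on $\nabla\phi$, the density of each $\tau_e$ is the full product $e^{-\delta e^{\tau_e}-e^{-\tau_e}-\tau_e/2}\,e^{\sqrt{\delta+(\nabla\phi(e))^2}-(\nabla\phi(e))^2 e^{\tau_e}}$, which by~\eqref{e.magic} is a probability density whose upper tail is controlled uniformly in $\nabla\phi(e)$ (by $e^{-\tau_e/2}$ when $\nabla\phi(e)=0=\delta$, and super-exponentially otherwise). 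This yields tightness of the $\tau$-marginal uniformly in $L$, which is what you need. With that clarification your argument is complete.
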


We also notice that the ergodicity of the infinite-volume Gibbs measure $\mu_\delta$ also implies the ergodicity of $\tilde\mu_\delta$.  
\begin{lemma}
\label{l.ergodic}
$\tilde\mu_\delta$ is a translation-invariant, ergodic Gibbs measure on $\Omega \times \R^{\mathcal E(\Zd)}$. 
\end{lemma}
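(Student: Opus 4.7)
The plan is to extract two structural properties of $\tilde\mu_\delta$ from its Kolmogorov construction and then verify the Gibbs, translation-invariance, and ergodicity claims in turn. The two properties, both of which can be read off from~\eqref{e.couple}, are: (i) the $\phi$-marginal of $\tilde\mu_\delta$ equals $\mu_\delta$, and (ii) conditional on $\phi$, the variables $(\tau_e)_{e \in \mathcal{E}(\Zd)}$ are independent, with each $\tau_e$ having a density on $\R$ depending only on $\nabla\phi(e)$. Property (i) follows by integrating~\eqref{e.couple} against $B = \R^E$ and applying the identity~\eqref{e.magic} to the resulting inner $\tau$ integral; (ii) is immediate from the edge-product structure in~\eqref{e.couple}.

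Given (i) and (ii), the Gibbs and translation-invariance assertions are short. The joint finite-volume conditional density factorizes over edges with a nearest-neighbor interaction, which directly yields a local Gibbs specification for $\tilde\mu_\delta$ on $\Omega \times \R^{\mathcal{E}(\Zd)}$. For translation invariance, the joint shift $\theta_y$ acting on both $\phi$ and $\tau$ preserves $\mu_\delta$ by~\cite{She}, and it preserves the $\tau$-conditional density in (ii) since that density is an edge product of factors in the shift-equivariant quantity $\nabla\phi(e)$.

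The substantive step is ergodicity, which I plan to obtain by verifying the mean ergodic criterion on cylinder events. Given events $A, B \subset \Omega \times \R^{\mathcal{E}(\Zd)}$ depending only on $\phi$ in finite boxes and on $(\tau_e)_{e \in E_A}$ and $(\tau_e)_{e \in E_B}$ respectively, it suffices to show
\begin{equation*}
\frac{1}{|Q_L|} \sum_{y \in Q_L} \tilde\mu_\delta(A \cap \theta_y B) \longrightarrow \tilde\mu_\delta(A)\,\tilde\mu_\delta(B).
\end{equation*}
For $|y|$ large, $E_A \cap (E_B + y) = \emptyset$, so by (ii) the indicators $\indc_A$ and $\theta_y \indc_B$ depend on disjoint families of $\tau_e$'s and are therefore conditionally independent given $\phi$:
\begin{equation*}
\tilde\mu_\delta\bigl[\indc_A \,(\theta_y \indc_B) \,\bigm|\, \phi\bigr] = h_A(\phi)\, h_B(\theta_y \phi),
\end{equation*}
where $h_A(\phi) := \tilde\mu_\delta[\indc_A \mid \phi]$ and $h_B$ are bounded measurable functions of $\phi$ that only depend on $\phi$ at finitely many vertices. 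Averaging over $y \in Q_L$ and applying the mean ergodic theorem to $\mu_\delta$ then gives the required convergence, since $\E_{\mu_\delta}[h_A] = \tilde\mu_\delta(A)$ and similarly for $B$.

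The main obstacle is the bookkeeping needed to extract (ii) rigorously from the Kolmogorov construction and to verify that the $h_A, h_B$ produced above are bounded local functions of $\phi$. Once that is pinned down, ergodicity of the coupled measure reduces cleanly to the known ergodicity of its $\phi$-marginal $\mu_\delta$, and the Gibbs and translation-invariance statements are essentially structural.
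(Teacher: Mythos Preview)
Your proposal is correct and rests on the same two structural ingredients the paper uses: (i) the $\phi$-marginal of $\tilde\mu_\delta$ is $\mu_\delta$, and (ii) conditional on $\phi$, the variables $(\tau_e)$ are independent with single-edge densities depending only on $\nabla\phi(e)$. The paper states (ii) separately as Lemma~\ref{l.cond}, just as you isolate it.

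The only difference is in how ergodicity is then extracted. The paper works directly with an arbitrary translation-invariant event $A$: it shows that $f(\nabla\phi):=\E_{\tilde\mu_\delta}[\indc_A\mid\mathcal F]$ is itself shift-invariant, hence by ergodicity of $\mu_\delta$ equal to a constant $c$; then, approximating $A$ by local events $A_N$ and using the conditional product structure to factor $\E[\indc_{A_N}\indc_{\theta_x^{-1}A_N}\mid\mathcal F]$ for $|x|$ large, it obtains $c=c^2$. You instead verify the Ces\`aro mixing criterion $\frac{1}{|Q_L|}\sum_{y}\tilde\mu_\delta(A\cap\theta_y B)\to\tilde\mu_\delta(A)\tilde\mu_\delta(B)$ on cylinder events, which reduces via (ii) to the mean ergodic theorem for $\mu_\delta$ applied to the bounded functions $h_A,h_B$. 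Both routes are standard and equally short; yours avoids the approximation step $A_N\to A$ at the cost of invoking the density of cylinder events, while the paper's avoids the Ces\`aro characterization at the cost of that approximation. Neither has any real advantage over the other. Note that for your argument you only need $h_A,h_B\in L^2(\mu_\delta)$, which follows from boundedness; the locality you mention is true but not required.
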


The proof follows from the argument in~\cite{BiSp} which uses the following explicit formula of the regular conditional distributions. 
\begin{lemma}
\label{l.cond}
Define the $\sigma$-field $\mathcal F:= \sigma\{ \nabla\phi(e): e\in \mathcal E(\Zd)\}$.  Then for $\tilde\mu_\delta$-a.e. $\phi$,   the regular conditional distribution $\tilde\mu_\delta(\cdot| \mathcal F)$,  regarded as a measure on the $\tau$ variables,  takes the product form 
\begin{equation}
\tilde\mu_\delta(d\tau| \mathcal F)
=
\prod_{e\in \mathcal E(\Zd)} e^{-\delta e^{\tau_e} - e^{- \tau_e} - \frac {\tau_e}{2}}
 e^{\sqrt{\delta+ (\nabla \phi(e))^2} - (\nabla \phi(e))^2e^{\tau_e} }\,.
\end{equation}
\end{lemma}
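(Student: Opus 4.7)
The plan is to read the claimed conditional density directly off the definition of the finite-dimensional marginals $\tilde\mu_E^\delta$ in~\eqref{e.couple} via Fubini, and then extend to the full edge set using the Kolmogorov-type construction of $\tilde\mu_\delta$ already introduced in the excerpt.

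First, I fix a finite edge set $E \subset \mathcal E(\Zd)$ and an event $A \in \mathcal F$. Since the integrands in~\eqref{e.couple} are nonnegative, Fubini's theorem recasts the formula as
\begin{equation*}
\tilde\mu_E^\delta(A \times B) = \int_A \left( \int_B \prod_{e \in E} e^{-\delta e^{\tau_e} - e^{-\tau_e} - \tau_e/2} e^{\sqrt{\delta + (\nabla\phi(e))^2} - (\nabla\phi(e))^2 e^{\tau_e}} \, d\tau \right) d\mu_\delta(\phi)
\end{equation*}
for every Borel $B \subset \R^E$. Because the inner bracket is a product over $e \in E$ and each factor depends only on the pair $(\tau_e, \nabla\phi(e))$, this displays the conditional law of $\{\tau_e\}_{e \in E}$ given $\mathcal F$ as the product of the edge-wise densities appearing in the statement of Lemma~\ref{l.cond}. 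The corresponding edge-wise normalizing constant is $\sqrt{\pi}\,e^{-\sqrt{\delta + (\nabla\phi(e))^2}}$, obtained by applying~\eqref{e.magic} with $z = \delta + (\nabla\phi(e))^2$.

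Next I verify that these finite-dimensional conditional distributions form a consistent family: for $E \subset E'$, integrating the $E'$-product density over the $\tau_{E' \setminus E}$ coordinates reproduces the $E$-product density, since each factor involves only the single coordinate $\tau_e$ and the $\mathcal F$-measurable variable $\nabla\phi(e)$. By the Kolmogorov extension construction of $\tilde\mu_\delta$ (recalled just before Lemma~\ref{l.tight}) and the uniqueness of regular conditional distributions, it follows that $\tilde\mu_\delta(\cdot \mid \mathcal F)$, viewed as a measure on $\R^{\mathcal E(\Zd)}$, equals the claimed infinite product $\tilde\mu_\delta$-almost surely.

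There is no serious obstacle, only a bookkeeping point: one must check that the consistency of the family $\{\tilde\mu_E^\delta\}_E$ (already invoked to define $\tilde\mu_\delta$) interacts correctly with the conditioning on $\mathcal F$, so that passing to regular conditional distributions commutes with the Kolmogorov extension. This is a standard fact once the product-in-$e$ structure of the conditional densities is identified as in the previous paragraph.
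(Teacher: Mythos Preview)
Your proposal is correct and follows essentially the same route as the paper: read off the finite-dimensional conditional law of~$\tau_E$ given~$\phi$ directly from the definition~\eqref{e.couple} (the paper phrases this as conditioning~$\tilde\mu^\delta_{\mathcal E(Q_K)}$ on~$\mathcal F_K$), observe the product-over-edges structure, and then pass to the full edge set by a standard extension/approximation argument. Your version is slightly more explicit about the Fubini swap and the Kolmogorov consistency step than the paper's one-line ``standard approximation arguments,'' but the content is the same.
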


\begin{proof}
Recall that,  $\tilde\mu^\delta_{\mathcal E(Q_K)}$ is the restriction of $\tilde\mu_\delta$ in $Q_K$.  By definition,  conditioning on $\mathcal F_k:= \sigma\{ \nabla\phi(e): e\in \mathcal E(Q_K)\}$,  the regular conditional distribution
\begin{equation*}
\tilde\mu^\delta_{\mathcal E(Q_K)}(d\tau| \mathcal F_K)
=
\prod_{e\in \mathcal E(Q_K)} e^{-\delta e^{\tau_e} - e^{- \tau_e} - \frac {\tau_e}{2}}
 e^{\sqrt{\delta+ (\nabla \phi(e))^2} - (\nabla \phi(e))^2e^{\tau_e} }\,.
\end{equation*}
And the claim follows from standard approximation arguments.
\end{proof}

\begin{proof}[Proof of Lemma~\ref{l.ergodic}]
It is clear that $\tilde\mu_\delta$ inherits the translation invariance of $\mu_\delta$.  The ergodicity of $\tilde\mu_\delta$,  roughly speaking,  follows from the ergodicity of $\mu_\delta$ and that the regular conditional distribution given $\phi$ takes a product form.  Let $A \subset \Omega \times \R^{\mathcal E(\Zd)}$ be a translation invariant event.  We claim that $\tilde\mu_\delta(A) = 0$ or $1$ a.s. 

We first use the  ergodicity of $\mu_\delta$.  Let
\begin{equation*}
f( \nabla\phi) = \E_{\tilde\mu_\delta} [\indc_A| \mathcal F] (\nabla\phi).
\end{equation*}
Then use the translation invariance,  for every $x\in\Zd$,  let $\theta_x: \R^{\mathcal E(\Zd)} \to \R^{\mathcal E(\Zd)}$ be the translation operator defined by $\theta_x F(\phi) := F(\phi(x+\cdot))$. We then have 
\begin{equation*}
f( \theta_x \nabla\phi) = \E_{\tilde\mu_\delta} [\indc_A| \mathcal F] (\theta_x \nabla\phi)
=
\E_{\tilde\mu_\delta} [ \indc_A \cdot \theta_x^{-1}| \mathcal F] (\nabla\phi)
=
f( \nabla\phi) \,.
\end{equation*}
Since the restriction of $\tilde\mu_\delta$ to the $\phi$ variable is  $\mu_\delta$,  which is ergodic, we have $ f = c$ a.s.,  for some $c\in [0,1]$. 

We claim that $c=0$ or $1$.  By an approximation argument,  for each $N\in\N$,  there exists an event $A_N \in \sigma\{ \nabla\phi(e),  \tau(e),  e\in \mathcal E(Q_N)\}$,  such that $\tilde\mu_\delta(A_N \Delta A) \to 0$ as $N\to \infty$.  In particular,  for every $x\in\Zd$,
\begin{equation}
\label{e.approx}
\| \indc_{A_N} \indc_{\theta_x^{-1} A_N}  -\indc_{A} \indc_{\theta_x^{-1} A} \|_{L^1(\tilde\mu_\delta)}
\to 0\,.
\end{equation}
Since $A$ is translation invariant,  we have $\indc_{A} = \indc_{A} \indc_{\theta_x^{-1}\cdot A}$. Thus $\| \indc_{A_N} \indc_{\theta_x^{-1}\cdot A_N}  -
\indc_{A} \|_{L^1(\tilde\mu_\delta)}
\to 0$.
For $|x|> 2N+1$,  by Lemma~\ref{l.cond},  the regular conditional distribution is a product measure,  thus $A_N$ and $\theta_x^{-1}\cdot A_N$ are conditionally independent,  therefore
\begin{equation*}
\E_{\tilde\mu_\delta} [ \indc_{A_N} \indc_{\theta_x^{-1}\cdot A_N} | \mathcal F]
=
 \E_{\tilde\mu_\delta} [ \indc_{A_N}  | \mathcal F]
 \E_{\tilde\mu_\delta} [\indc_{\theta_x^{-1}\cdot A_N} | \mathcal F] 
\,.
 \end{equation*}
 Applying~\eqref{e.approx},  we see that by sending $N\to \infty$,  the left side above converges in $L^1(\tilde\mu_\delta)$ to $c$,  while the right side converges to $c^2$.  This implies $c=0$ or $1$,  and we conclude the ergodicity.
\end{proof}

We define the infinite-volume~$\tau$-measure as the second marginal of $\tilde\mu^\delta$: for every $B\subseteq \R^{\mathcal E(\Zd)}$,  
$$
\mu_\tau^\delta(B) := \tilde\mu^\delta (\Omega \times B)\,.
$$
Formally,  we may also integrate out an infinite-volume Gaussian measure and write the $\tau$-marginal measure by 
\begin{equation}
\label{e.mutau}
d\mu_\tau^\delta := \frac 1 {Z_{\tau,\delta}} \times \prod_{e\in \mathcal E(\Zd)}  e^{-\delta e^{\tau_e} - e^{- \tau_e} - \frac {\tau_e}{2}} 
  \det D(\tau)^{-\frac 12}  \, d\tau\,,
\end{equation}
where $D(\tau)$ is the matrix defined by  the quadratic form
\begin{equation}
\label{e.dtau}
(f, D(\tau) f):= \sum_{e \in \mathcal E(\Zd)}   (\nabla f (e))^2 e^{\tau_e}\,,
\end{equation}
and $Z_{\tau,\delta}$ is the normalizing constant so that $\mu_\tau^\delta$ is a probability measure.  It follows from Lemma~\ref{l.ergodic} that $\mu_\tau^\delta$ is a translation-invariant, ergodic Gibbs measure on $\R^{\mathcal E(\Zd)}$.  Moreover,  it follows from Corollary \ref{c.weak} and an approximation argument that the finite dimensional distribution of $\mu_{L,\tau}^\delta$ converges to $\mu_\tau^\delta$.

\smallskip

To derive an elliptic PDE for the infinite volume measure,  we recall the following lemma from~\cite{BiSp},  which implies that harmonic, translation covariant functions are uniquely determined by their mean with respect to the ergodic measures on the conductances. This is a simple consequence of the ergodic theorem, the proof of which we do not repeat here. 

\begin{lemma}[{\cite[Lemma 3.3]{BiSp}}]
\label{l.meanchar}
Let $\nu$ be a translation invariant, ergodic probability measure on the conductances $\tau = (\tau_e) \in \R^{\mathcal E(\Zd)}$.  Let $g:  \R^{\mathcal E(\Zd)} \times \Zd \to \R$ be a measurable function which satisfies the following: 
\begin{itemize}
\item $g$ is~$\tau$-harmonic $\nu$-a.s,  that is,  for every $x\in \Zd$ and almost every $\tau$,
$$
\sum_{y\sim x} \tau_{(x,y)} (g(\tau, y) -g(\tau, x) )=0.
$$
\item shift-invariance:  for every $x\in \Zd$, $i=1,\cdots,d$, we have 
$$
g(\tau, x+e_i) - g(\tau, x) = g(\theta_x \tau,  e_i)\, 
\quad \mbox{$\nu$--a.s.}
$$

\item zero mean and centering condition: $\bigl\langle g(\cdot, x)\bigr\rangle_\nu =0$ for all~$x\sim 0$ and~$g(\tau,0)=0$. 

\item square integrable in the sense that $\bigl\langle g(\cdot, x)^2\bigr\rangle_\nu<\infty$ for all $x\sim 0$.

\item square integrable gradient:  $\bigl\langle \sum_{y\sim x} \tau_{(x,y)} (g(\tau, x) - g(\tau,y))^2\bigr\rangle_\nu<\infty$.
\end{itemize}
Then we have that~$g(\cdot, x)=0$~$\nu$--a.s. for every~$x\in\Zd$. 
\end{lemma}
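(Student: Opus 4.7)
The plan is to reduce $g(\cdot, x) = 0$ $\nu$-a.s.\ for every $x \in \Zd$ to the statement that the gradient field $\phi(\tau) := (g(\tau, e_i))_{i=1}^d$ vanishes $\nu$-a.s. Iterating the shift-covariance identity $g(\tau, x+e_i) - g(\tau, x) = \phi_i(\theta_x \tau)$ along any lattice path from $0$ to $x$, together with $g(\tau, 0) = 0$, expresses $g(\tau, x)$ as a finite sum of translates of the components of $\phi$, so the a.s.\ vanishing of $\phi$ propagates to every $x$.

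The route to $\phi \equiv 0$ is a Hodge-type decomposition in a weighted Hilbert space. Let
\[
\mathcal{H} := \Bigl\{\eta : \R^{\mathcal{E}(\Zd)} \to \R^d \ \text{measurable} : \|\eta\|_\mathcal{H}^2 := \bigl\langle \sum_{i=1}^d \tau_{(0, e_i)}\, \eta_i(\tau)^2 \bigr\rangle_\nu < \infty \Bigr\},
\]
let $\mathcal{P}$ denote the closure in $\mathcal{H}$ of $\{(f \circ \theta_{e_i} - f)_{i=1}^d : f$ a bounded cylinder function$\}$, and set
\[
\mathcal{S} := \Bigl\{\psi \in \mathcal{H} : \sum_{i=1}^d \bigl[\tau_{(0, e_i)}\,\psi_i - \tau_{(-e_i, 0)}\,\psi_i \circ \theta_{-e_i}\bigr] = 0 \Bigr\}.
\]
A summation by parts against any bounded cylinder test function shows $\mathcal{P} \perp \mathcal{S}$ in $\mathcal{H}$. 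The square-integrable gradient hypothesis puts $\phi \in \mathcal{H}$ with $\|\phi\|_\mathcal{H}^2$ equal to twice the Dirichlet form, and rewriting the harmonicity of $g$ at the origin via shift-covariance (using $g(\tau, 0) = 0$ and $g(\tau, -e_i) = -\phi_i(\theta_{-e_i}\tau)$) gives exactly $\phi \in \mathcal{S}$. It therefore suffices to prove $\phi \in \mathcal{P}$, since then $\phi \in \mathcal{P} \cap \mathcal{S} = \{0\}$ forces $\tau_{(0, e_i)}\phi_i^2 = 0$ $\nu$-a.s.; on edges of zero conductance the identity $\phi_i = 0$ is recovered trivially using $g(\tau, 0) = 0$ and shift-covariance along adjacent paths.

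The principal step and main obstacle is establishing $\phi \in \mathcal{P}$. I would approximate $\phi$ by gradients of localized spatial averages $f_N(\tau) := |Q_N|^{-1} \sum_{x \in Q_N} \rho(x/N)\, g(\tau, x)$ for a smooth bump $\rho \in C_c^\infty(\R^d)$. Shift-covariance yields
\[
(f_N \circ \theta_{e_i} - f_N)(\tau) = \frac{1}{|Q_N|} \sum_{x \in Q_N} \rho(x/N)\, \phi_i(\theta_x \tau) + R_N(\tau),
\]
where $R_N$ is a boundary remainder of size $O(1/N)$ in $x$. The $L^2$ mean ergodic theorem, applied with the centering hypothesis $\langle \phi_i \rangle_\nu = 0$, drives the first term to $\phi_i$ in $\mathcal{H}$, while the sublinear growth estimate $\langle g(\tau, x)^2 \rangle_\nu = o(|x|^2)$---itself a consequence of the $L^2$ ergodic theorem applied to the telescoping identity for $g(\tau, ne_i)$---drives $R_N$ to $0$ in $\mathcal{H}$. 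Since $f_N$ depends on all of $\tau$ rather than on finitely many edges, a further cylinder truncation by conditional expectation on $\tau|_{\mathcal{E}(Q_{N'})}$ with $N' \gg N$ is then required. The delicate point is to dominate the resulting cross terms in $\|\cdot\|_\mathcal{H}$, which couple the unbounded conductance weight $\tau_{(0, e_i)}$ to the approximation error; this is handled by Cauchy--Schwarz in the weighted inner product together with the finite Dirichlet form hypothesis.
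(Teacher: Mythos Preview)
Your approach is the same as the paper's: it does not give a self-contained proof but defers to \cite[Lemma~3.3]{BiSp}, summarizing that argument as exactly the weighted Hodge decomposition you set up---split $L^2_{\mathrm{vec}}(\nu)$ (your $\mathcal{H}$) into the closure $\mathcal{P}$ of gradients and the harmonic subspace $\mathcal{S}$, and use that $\phi$ lies in both. The paper's only remark beyond \cite{BiSp} is that uniform ellipticity enters there solely to verify $\phi\in\mathcal{H}$, which here is supplied directly by the square-integrable-gradient hypothesis.

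One slip worth flagging: in your displayed formula for $(f_N\circ\theta_{e_i}-f_N)$ the roles of the two terms are reversed. The cocycle identity $g(\theta_{e_i}\tau,x)=g(\tau,x+e_i)-\phi_i(\tau)$ gives
\[
(f_N\circ\theta_{e_i}-f_N)(\tau)=\frac{1}{|Q_N|}\sum_{x}\rho(x/N)\,\phi_i(\theta_x\tau)\;-\;c_N\,\phi_i(\tau),\qquad c_N:=\frac{1}{|Q_N|}\sum_x\rho(x/N),
\]
so the ergodic average (your ``first term'') tends to $\langle\phi_i\rangle_\nu=0$, not to $\phi_i$; the limit $-\phi_i$ comes from the second term. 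This does not affect the strategy. The genuinely delicate point you correctly isolate---upgrading this convergence from unweighted $L^2(\nu)$ to $\mathcal{H}$ when the weight $\tau_{(0,e_i)}$ is unbounded---is not spelled out in the paper either; it simply asserts that the \cite{BiSp} argument goes through once $\phi\in\mathcal{H}$ is known.
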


We remark here that we did not assume the uniform ellipticity condition as required in~ \cite[Lemma 3.3]{BiSp}. The proof of that lemma is based on an observation that the~$L^2$ space of the vector valued functions, defined via (in the notation of~\cite{BiSp}) the norm~$\|g\|_{L_{\mathrm{vec}}^2(\nu)}:=\langle \,\sum_{x\sim 0} \tau_{(0,x)} g(\tau, x)^2\rangle_\nu$, can be decomposed as a direct sum of the~$L^2$ space of the gradient functional, and the space of harmonic functions with respect to the~$\tau$-conductance. The uniform ellipticity condition was used only to ensure that the gradient of~$g$ belongs to the space $L_{\mathrm{vec}}^2(\nu)$. For a general probability measure~$\nu$ on the conductance (which is not necessary uniformly elliptic), the finiteness of the $L_{\mathrm{vec}}^2(\nu)$ norm can be checked directly from the $\tau$-harmonicity of $g$ and the fact that the conductances are nonnegative.

\smallskip

We are now able to characterize the conditional law of the $\phi$-field given the $\tau$'s.

\begin{lemma}
\label{l.conditional}
Consider  the $\sigma$-field $\mathcal F:= \sigma\{ \tau(e): e\in \mathcal E(\Zd)\}$.  For $\tilde \mu^\delta$-a.e. $\tau$,  the conditional law $\tilde \mu^\delta(\cdot|\mathcal F)(\tau)$,  regarded as a measure on the set of configurations $\{(\phi(x)) \in \R^{\Zd} : \phi(0) =0\}$,  is a multivariate Gaussian with mean zero,
$$
\E_{\tilde \mu^\delta} [\phi(x) |\mathcal F] (\tau)=0,  \quad  x\in \Zd,
$$
and covariance matrix given by $D(\tau)^{-1}$,  where $D(\tau)$ is defined in \eqref{e.dtau}.
\end{lemma}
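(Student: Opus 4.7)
The approach is to reduce the infinite-volume claim to its finite-volume analogue, where the conditional distribution can be read off by direct inspection, and then pass to the limit via Corollary~\ref{c.weak}.

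In finite volume, the derivation of~\eqref{e.jointZ} exhibits the joint density of $(\phi,\tau)$ under $\tilde\mu_L^\delta$ in the factorized form
$$
d\tilde\mu_L^\delta \propto \exp\!\bigl(-(\phi, D_L(\tau)\phi)\bigr)\,\prod_{e\in\mathcal E(Q_L)} e^{-\delta e^{\tau_e}-e^{-\tau_e}-\tau_e/2}\, d\phi\, d\tau,
$$
from which the conditional law of $\phi$ given $\mathcal F_L:=\sigma(\tau_e:e\in\mathcal E(Q_L))$, with zero Dirichlet data on $\partial Q_L$, is immediately identified as the centered Gaussian with covariance $D_L(\tau)^{-1}$ (in the normalization matching the definition~\eqref{e.dl}). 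Equivalently, for every finitely supported $a:Q_L^\circ\to\R$,
$$
\E_{\tilde\mu_L^\delta}\!\bigl[e^{i(a,\phi)}\mid\mathcal F_L\bigr] = \exp\!\Bigl(-\tfrac12(a, D_L(\tau)^{-1}a)\Bigr).
$$

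To extend this to infinite volume, fix a bounded continuous cylinder function $G(\tau)$ depending on $\tau_e$ for $e\in\mathcal E(Q_K)$ and a finitely supported $a:Q_K^\circ\to\R$. Applying the finite-volume identity with $L>K$ and using Corollary~\ref{c.weak} to pass to the limit on the left,
$$
\E_{\tilde\mu^\delta}\!\bigl[G(\tau)e^{i(a,\phi)}\bigr]
=\lim_{L\to\infty}\int G(\tau)\,e^{-\frac12(a,D_L(\tau)^{-1}a)}\,d\mu_{L,\tau}^\delta(\tau).
$$
The task is then to show that the right-hand side converges to $\int G(\tau)\, e^{-\frac12(a,D(\tau)^{-1}a)}\, d\mu_\tau^\delta(\tau)$, where $D(\tau)^{-1}a$ denotes the unique infinite-volume solution $u$ of $D(\tau)u=a$ with $u(0)=0$ and gradient of finite $\tau$-energy. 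Lemma~\ref{l.meanchar}, applied to the difference of two candidate limits (which is $\tau$-harmonic, translation covariant, mean zero, and of finite $\tau$-energy), provides the required uniqueness; combined with the weak convergence $\mu_{L,\tau}^\delta\to\mu_\tau^\delta$, this identifies the limit and yields the desired characteristic-function identity. Since $G$ and $a$ are arbitrary, the conditional characteristic function is identified as the Gaussian one, proving the lemma.

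The main obstacle is the convergence $(a,D_L(\tau)^{-1}a)\to(a,D(\tau)^{-1}a)$ jointly with the weak convergence of the $\tau$-marginals. The conductance field $e^{\tau_e}$ is not uniformly elliptic---it has no uniform upper bound and only finite negative moments of $e^{-\tau_e}$ under $\mu_\tau^\delta$---so standard elliptic compactness is unavailable. The two features that make the argument work are the pinning $\phi(0)=0$, which removes the additive-constant ambiguity of the infinite-volume Green's function, and Lemma~\ref{l.meanchar}, which serves as a non-uniformly-elliptic substitute for the classical Liouville/uniqueness principle and applies here thanks to the translation invariance and ergodicity of $\mu_\tau^\delta$ guaranteed by Lemma~\ref{l.ergodic}.
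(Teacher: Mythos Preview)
Your approach differs substantially from the paper's, and the crucial step you label ``the main obstacle'' is not actually closed.

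The paper does not go through finite-volume Green functions at all. It reads the Gaussian structure and the covariance $D(\tau)^{-1}$ directly off the infinite-volume coupling~\eqref{e.couple} (which is how $\tilde\mu^\delta$ is built via Kolmogorov extension). The only nontrivial point is that the conditional \emph{mean} vanishes, and for this the paper applies Lemma~\ref{l.meanchar} to the function $g(\tau,x):=\E_{\tilde\mu^\delta}[\phi(x)-\phi(0)\mid\mathcal F](\tau)$. This $g$ is manifestly $\tau$-harmonic, translation-covariant (from translation invariance of $\mu^\delta$), has zero mean (zero-slope property), and the square-integrability hypotheses are checked via Jensen and the Brascamp--Lieb inequality (Proposition~\ref{p.BL}). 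All hypotheses of Lemma~\ref{l.meanchar} are therefore verified for $g$ itself, and the lemma gives $g\equiv 0$.

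Your route instead requires showing
\[
\lim_{L\to\infty}\int G(\tau)\,e^{-\frac12(a,D_L(\tau)^{-1}a)}\,d\mu_{L,\tau}^\delta
=\int G(\tau)\,e^{-\frac12(a,D(\tau)^{-1}a)}\,d\mu_\tau^\delta,
\]
and this is where the argument breaks down. The integrand $e^{-\frac12(a,D_L(\tau)^{-1}a)}$ is \emph{not} a local function of $\tau$: it depends on $\tau_e$ for every $e\in\mathcal E(Q_L)$, and the dependence changes with $L$. Weak convergence of finite-dimensional $\tau$-marginals (which is all Corollary~\ref{c.weak} gives) does not control such moving, nonlocal functionals. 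You would need a quantitative statement that $D_L(\tau)^{-1}a$ converges to $D(\tau)^{-1}a$ \emph{uniformly} in some sense compatible with the changing law of $\tau$, and no such estimate is available here without uniform ellipticity.

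Your invocation of Lemma~\ref{l.meanchar} does not fill this gap. That lemma is a \emph{uniqueness} statement for stationary harmonic functions; it does not produce convergence of finite-volume approximants. Moreover, the hypotheses of the lemma are not verified for the objects you propose: the difference of two candidate solutions of $D(\tau)u=a$ with $a$ supported at fixed lattice sites is $\tau$-harmonic, but there is no reason it should be \emph{translation-covariant} as a function of $\tau$, since the source $a$ breaks translation invariance. In the paper's argument, by contrast, the function $g$ to which the lemma is applied is the conditional mean of the field increment, which is translation-covariant by construction.

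In short: the finite-volume characteristic-function route can in principle be made to work, but it requires a genuine convergence argument for the degenerate Green functions that you have not supplied, and Lemma~\ref{l.meanchar} is not a substitute for it. The paper sidesteps this entirely by arguing directly in infinite volume.
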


We will present a proof of the lemma at the end of Section \ref{s.perco},  using Lemma \ref{l.meanchar} and the Brascamp-Lieb inequality presented in that section.  

Recall that $\f_R := \f(\frac{\cdot}{R})$,  applying the lemma above, we may now rewrite 
\begin{align}
\var_{\mu^\delta}[F_R] 
&
= \var_{\mu^\delta} \Bigl[R^{-\frac d2}\sum_{x\in\Zd}\sum_{i=1}^d f_i\left(\frac xR\right) \left( \phi(x+e_i)-\phi(x) \right) \Bigr]
\notag \\ & 
= R^{-d} \sum_{x,y\in \Zd}\left\langle \nabla \cdot \f_R (x) G_\a(x,y)  \nabla \cdot \f_R (y) \right\rangle_{\mu_\tau^\delta}\,.
\end{align}
Here $\a$ is a diagonal matrix with $\a(e,e) = e^{\tau_e}$, and the only randomness inside the expectation is $G_\a(x,y) $, defined to be a solution of 
\begin{equation*}
\nabla \cdot \a \nabla G_\a(\cdot, y) = \delta_y \,.
\end{equation*}
If we define $u$ to be the solution of the PDE
\begin{equation}
\label{e.rwrcf}
\nabla \cdot \a \nabla u = \nabla\cdot \f_R\,,
\end{equation}
then we may rewrite the identity as 
\begin{equation}
\var_{\mu^\delta}[F_R]  =\left\langle (\f_R, \nabla u)_R \right\rangle_{\mu_\tau^\delta}\,,
\end{equation}
where the inner product $(\cdot, \cdot)_R$ is defined in~\eqref{e.inner}.
This justifies the elliptic PDE connection in infinite volume.

The convergence of the variance of $F_R$ would then follow from the homogenization of~\eqref{e.rwrcf} to a deterministic PDE
\begin{equation}
\nabla \cdot \ahom \nabla \bar u = \nabla\cdot \f
\quad \mbox{in} \ \Rd\,,
\end{equation}
where $\ahom$ is a deterministic diagonal matrix with positive entries.  This is established in Theorem~\ref{t.homogenize},  from which we conclude Theorem~\ref{t.CLT}.

\section{Percolation estimates}
\label{s.perco}

In this section we obtain percolation estimates for the conductances $\{\a(e)\}$, which implies the probability that the origin belongs to a large cluster of bad conductances (very large or very close to zero) is very small.  The main estimates proved in this section are Corollary~\ref{c.mgf},  which shows all negative moments of the conductance $\a(e)$ are finite,  and Proposition~\ref{p. percolation},  which implies the set of edges where $\a(e)$ is large is stochastically dominated by a subcritical percolation cluster (which can be made very subcritical if we allow a higher ellipticity contrast). 

\subsection{Percolation bounds in finite volume}

We first recall some classical results about the determinant of Laplacian.  Recall that a spanning tree of $G$ is a connected graph using all vertices of $G$ in which there are no circuits.  A spanning tree of $G$ with wired boundary condition is a spanning tree of $G$ where the exterior of $G$ is identified to a single vertex.  

\begin{lemma}[Matrix-Tree Theorem~\cite{Kir}]
\label{l.tree}
Let $\mathcal S$ be the set of spanning trees in $Q_L$ with wired boundary condition,  and denote by $D_L$ the matrix defined in~\eqref{e.dl}.  Then 
\begin{equation}
\det D_L(\tau) = \sum_{\mathcal T \in \mathcal S} \prod_{e\in \mathcal T } e^{\tau_e}.
\end{equation}
\end{lemma}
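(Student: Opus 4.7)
The plan is to prove this via the classical Kirchhoff argument, adapted to the wired-boundary setup. The first step is to recognize $D_L(\tau)$ as the weighted graph Laplacian of $Q_L$ with the exterior collapsed to a single ``root'' vertex, acting on functions supported in $Q_L^\circ$. Concretely, fix an orientation of each edge $e\in\mathcal{E}(Q_L)$ and define the (weighted, signed) incidence matrix $B\in\mathbb{R}^{Q_L^\circ\times\mathcal{E}(Q_L)}$ by
\begin{equation*}
B_{x,e} = e^{\tau_e/2}\bigl(\indc_{\{x = e^+\}} - \indc_{\{x = e^-\}}\bigr),
\end{equation*}
where $e^\pm$ are the endpoints of $e$, with the convention that $B_{x,e}=0$ for $x$ not in $Q_L^\circ$ (so boundary endpoints contribute nothing, implementing the wired identification). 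A direct computation using the definition \eqref{e.dl} of the quadratic form shows $D_L(\tau) = BB^T$.

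Next I would apply the Cauchy--Binet formula to this factorization, giving
\begin{equation*}
\det D_L(\tau) = \sum_{\substack{S\subset \mathcal{E}(Q_L)\\ |S|=|Q_L^\circ|}} \bigl(\det B_S\bigr)^2,
\end{equation*}
where $B_S$ denotes the square submatrix of $B$ whose columns are indexed by $S$. The weights factor out of each column of $B_S$, so $(\det B_S)^2 = \bigl(\prod_{e\in S} e^{\tau_e}\bigr)(\det B_S^\circ)^2$, where $B_S^\circ$ is the analogous unweighted $\{-1,0,+1\}$ incidence submatrix.

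The final step is to identify $(\det B_S^\circ)^2 \in \{0,1\}$, with the value $1$ exactly when the edge set $S$ is a spanning tree of the wired graph (a spanning tree of $Q_L$ with all boundary vertices identified). For one direction, if $S$ is not a spanning tree of the wired graph, then either $S$ contains a cycle in $Q_L^\circ$---in which case the corresponding columns of $B_S^\circ$ are linearly dependent via the cycle's incidence vector---or some connected component of $S$ fails to reach the boundary, in which case summing the rows of $B_S^\circ$ over that component yields zero and the rows are linearly dependent. Either way $\det B_S^\circ=0$. For the converse, when $S$ is a spanning tree of the wired graph, one can prove $|\det B_S^\circ|=1$ by induction on $|Q_L^\circ|$: there must exist a leaf $x\in Q_L^\circ$ (a vertex of degree one in the tree), and expanding the determinant along the row indexed by $x$ reduces to the determinant of an incidence matrix on a strictly smaller wired tree, with a $\pm1$ prefactor. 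Combining the two cases gives the claimed identity.

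The main potential obstacle is purely bookkeeping: ensuring the wired boundary condition is correctly encoded at every step, in particular handling the case of edges with one endpoint on $\partial Q_L$ (a single $\pm e^{\tau_e/2}$ entry in the corresponding column of $B$) and verifying that the ``spanning tree with wired boundary condition'' counted by $\mathcal{S}$ matches precisely the combinatorial objects arising from the nonzero minors. No deep ideas are required beyond the standard proof of Kirchhoff's theorem; the lemma is stated only to be cited as a tool in the percolation estimates that follow.
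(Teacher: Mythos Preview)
Your argument is correct and is the standard Cauchy--Binet/incidence-matrix proof of Kirchhoff's theorem, with the wired boundary handled properly. The paper, however, does not prove this lemma at all: it is stated as a classical result with a citation to~\cite{Kir} and used as a black box in the subsequent log-concavity and percolation arguments. So there is nothing to compare against; your proposal simply supplies a proof where the paper gives none.
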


The following immediate corollary was obtained in~\cite{BS}.

\begin{lemma}
\label{l.logdet}
As a function in $\tau$, $\frac{1}{\sqrt{\det D_L(\tau) }} $ is log-concave. 
\end{lemma}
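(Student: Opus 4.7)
The plan is to derive the log-concavity directly from the Matrix-Tree Theorem (Lemma~\ref{l.tree}) by identifying $\log \det D_L(\tau)$ as a log-sum-exp of linear functions of $\tau$, which is a standard convex object.

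First, I would take logarithms and reformulate the claim. Since $\frac{1}{\sqrt{\det D_L(\tau)}} = \exp\bigl(-\tfrac{1}{2}\log \det D_L(\tau)\bigr)$, the log-concavity statement is equivalent to the claim that $\tau \mapsto \log \det D_L(\tau)$ is convex. Applying Lemma~\ref{l.tree}, I can write
\begin{equation*}
\log \det D_L(\tau) \;=\; \log \sum_{\mathcal T \in \mathcal S} \exp\!\Bigl( \sum_{e\in \mathcal T} \tau_e \Bigr).
\end{equation*}
Each term in the sum is the exponential of a linear function $\tau \mapsto \sum_{e\in\mathcal T}\tau_e$, so the right-hand side is a log-sum-exp of linear functions.

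Next, I would invoke the well-known fact from convex analysis that the map $(a_1,\ldots,a_N) \mapsto \log \sum_{j=1}^N e^{a_j}$ is convex on $\R^N$ (which follows from a one-line Cauchy--Schwarz computation of its Hessian). Precomposing with the linear map $\tau \mapsto \bigl(\sum_{e \in \mathcal T}\tau_e\bigr)_{\mathcal T \in \mathcal S}$ preserves convexity, so $\tau \mapsto \log \det D_L(\tau)$ is convex. Therefore $-\tfrac12 \log \det D_L(\tau)$ is concave, which is exactly the statement that $(\det D_L(\tau))^{-1/2}$ is log-concave.

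I do not anticipate any real obstacle here: once the Matrix-Tree expansion is in hand the argument is essentially a one-line appeal to convexity of log-sum-exp. The only subtlety worth mentioning is that the sum over spanning trees is nonempty and each summand is strictly positive (so the logarithm is well defined) because $Q_L$ with wired boundary is a connected graph, and the identity is to be understood in the sense of the reduced (Dirichlet) Laplacian encoded by $D_L$, matching the boundary conditions used in the definition~\eqref{e.dl}.
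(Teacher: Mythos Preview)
Your proof is correct and follows essentially the same approach as the paper: the paper's one-line argument is that, by the Matrix-Tree Theorem, $\det D_L(\tau)$ is a sum of exponential (linear) functions of $\tau$ and hence log-convex, which is exactly the log-sum-exp convexity you spell out in detail.
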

\begin{proof}
Since $\det D_L(\tau) $ is a sum of exponential functions in $\tau$, it is log-convex. 
\end{proof}

We also recall that the marginal of a log-concave function is log-concave (see,  e.g.,  Corollary 3.5 of~\cite{BL}).
\begin{lemma}
\label{l.convexint}
Let $F: \R^{m+n} \to \R$ be log-concave, $x\in \R^m$ and $y\in\R^n$. Define $G: \R^m \to \R$ by 
\begin{equation}
G(x) := \int_{\R^n} F(x,y) \, dy \,.
\end{equation}
Then $G$ is log-concave. 
\end{lemma}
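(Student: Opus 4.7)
The plan is to derive this as a direct consequence of the Pr\'ekopa--Leindler inequality, which is the natural functional analogue of Brunn--Minkowski and precisely the tool designed to show that log-concavity is preserved under marginalization. Recall the statement: if $\lambda \in (0,1)$ and measurable $f_1, f_2, h: \R^n \to [0,\infty)$ satisfy
\begin{equation*}
h(\lambda y_1 + (1-\lambda) y_2) \geq f_1(y_1)^{\lambda} f_2(y_2)^{1-\lambda} \quad \text{for all } y_1, y_2 \in \R^n,
\end{equation*}
then $\int_{\R^n} h(y)\,dy \geq \bigl(\int_{\R^n} f_1\bigr)^{\lambda} \bigl(\int_{\R^n} f_2\bigr)^{1-\lambda}$. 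This is a classical result that I will invoke as a black box; deriving it from scratch requires either a one-dimensional Brunn--Minkowski induction or a mass-transport argument, both of which are standard.

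To use it, fix $\lambda \in (0,1)$ and $x_1, x_2 \in \R^m$, and set $x_\lambda := \lambda x_1 + (1-\lambda) x_2$. Define the three slice functions
\begin{equation*}
f_1(y) := F(x_1, y), \qquad f_2(y) := F(x_2, y), \qquad h(y) := F(x_\lambda, y).
\end{equation*}
For any $y_1, y_2 \in \R^n$, writing $y_\lambda := \lambda y_1 + (1-\lambda) y_2$, the log-concavity of $F$ on $\R^{m+n}$ applied to the points $(x_1, y_1)$ and $(x_2, y_2)$ yields
\begin{equation*}
h(y_\lambda) = F(x_\lambda, y_\lambda) \geq F(x_1, y_1)^{\lambda} F(x_2, y_2)^{1-\lambda} = f_1(y_1)^{\lambda} f_2(y_2)^{1-\lambda}.
\end{equation*}
This is exactly the hypothesis of Pr\'ekopa--Leindler.

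Applying the inequality yields
\begin{equation*}
G(x_\lambda) = \int_{\R^n} h(y)\,dy \geq \left(\int_{\R^n} f_1(y)\,dy\right)^{\lambda} \left(\int_{\R^n} f_2(y)\,dy\right)^{1-\lambda} = G(x_1)^{\lambda} G(x_2)^{1-\lambda},
\end{equation*}
which is precisely the log-concavity of $G$. The only minor obstacle is the possibility that $G(x_1)$ or $G(x_2)$ vanishes or is infinite; if either integral is zero the inequality is automatic, and we may reduce to the case where both are finite by truncating $F$ at height $N$ and on a ball of radius $N$ in the $y$ variable, applying the argument to the truncation (which remains log-concave), and passing to the limit by monotone convergence. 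In fact, since the paper only needs this for the specific marginal $\det D_L(\tau)^{-1/2}$ of a pointwise product of log-concave densities, this truncation step is routine and does not require any real work.
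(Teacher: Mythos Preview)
Your proof via Pr\'ekopa--Leindler is correct and is the standard argument for this fact. The paper does not actually prove Lemma~\ref{l.convexint}; it simply recalls the statement and cites Corollary~3.5 of Brascamp--Lieb~\cite{BL}. Your argument is essentially the proof behind that citation, so there is nothing to compare beyond noting that you have supplied the details the paper omits.
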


Given a function $\mathsf s: \mathcal E(Q_L) \to [0,\frac 12)$, we define the tilted measure 

\begin{equation}
\label{e.mutautilt}
d\mu_{L,\tau}^\delta(\mathsf s)  := \frac{1}{Z_{L,\delta, \tau} (\mathsf s)} \prod_{e\in \mathcal E(Q_L)}  e^{-\delta e^{\tau_e} - e^{- \tau_e} - \frac {\tau_e}{2} + s_e \tau_e} 
  \det D_L(\tau)^{-\frac 12}  \, d\tau\,.
\end{equation}
where 
\begin{equation}
\label{e.Ztilt}
Z_{L,\delta, \tau} (\mathsf s) := \int \prod_{e\in \mathcal E(Q_L)}  e^{-\delta e^{\tau_e} - e^{- \tau_e} - \frac {\tau_e}{2} + s_e \tau_e} 
  \det D_L(\tau)^{-\frac 12}  \, d\tau\,.
\end{equation}

The next lemma gives bounds for the expectation under the tilted measure. 

\begin{lemma}
\label{l.mean}
Fix $k\in\N$. There exists a constant $C_k<\infty$, such that for any function $\mathsf s: \mathcal E(Q_L) \to [-k,\frac 14)$, and any $e_0\in \mathcal E(Q_L)$, $-C_k \leq \langle \tau_{e_0} \rangle_{\mu_{L,\tau}^\delta(\mathsf s)} \leq C_k$.
\end{lemma}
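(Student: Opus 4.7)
The plan is to condition on all the edge weights $(\tau_e)_{e \ne e_0}$ and reduce the lemma to showing that the resulting one-dimensional conditional density of $\tau_{e_0}$ is concentrated on a bounded interval, with bounds that are uniform in the conditioned values. The lemma will then follow from the tower property.

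To get my hands on the conditional density, I would first apply the Matrix-Tree Theorem (Lemma~\ref{l.tree}) and split the sum over spanning trees of $Q_L$ according to whether they contain $e_0$. This yields
\[
\det D_L(\tau) \;=\; A + B\,e^{\tau_{e_0}},
\]
where $A \ge 0$ and $B > 0$ are measurable functions of $(\tau_e)_{e \ne e_0}$ only (and $B>0$ since $Q_L$ is connected). Plugging this into~\eqref{e.mutautilt}, the conditional density of $\tau_{e_0}$ is proportional to $e^{-\psi(\tau)}$, where
\[
\psi(\tau) \;=\; \delta\,e^{\tau} + e^{-\tau} + \bigl(\tfrac12 - s_{e_0}\bigr)\tau + \tfrac12 \log\bigl(A + B\,e^{\tau}\bigr).
\]
Each of the four terms is convex in $\tau$, so $\psi$ is convex.

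Next I would locate the minimizer $\tau^*$ of $\psi$. Differentiating gives
\[
\psi'(\tau) \;=\; \delta\,e^{\tau} - e^{-\tau} + \bigl(\tfrac12 - s_{e_0}\bigr) + \frac{B\,e^{\tau}}{2(A + B\,e^{\tau})},
\]
and since the last term always lies in $[0, \tfrac12]$ regardless of $(A,B)$, setting $\psi'(\tau^*) = 0$ gives the two-sided bound
\[
\tfrac12 - s_{e_0} \;\le\; e^{-\tau^*} - \delta\,e^{\tau^*} \;\le\; 1 - s_{e_0}.
\]
For $s_{e_0} \in [-k, \tfrac14)$, this pins $\tau^*$ to an interval $[-C_k, C_k]$, uniformly in $(A,B)$. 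I would then use $\psi''(\tau) \ge \delta\,e^{\tau} + e^{-\tau}$ to extract, on the one hand, an upper bound for the Hessian on $[\tau^*-1, \tau^*+1]$, which supplies a lower bound on the normalization $\int e^{-\psi}$; and on the other hand a quantitative lower bound on $|\psi'(\tau^* \pm 1)|$, which by convexity yields linear growth $\psi(\tau) - \psi(\tau^*) \ge c_k\,|\tau - \tau^*|$ outside a bounded neighborhood of $\tau^*$. Combining these estimates gives $|\langle \tau_{e_0} \mid (\tau_e)_{e \ne e_0}\rangle - \tau^*| \le C_k'$, and the tower property concludes.

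The delicate point is keeping all constants uniform in the conditioning variables, which enter $\psi$ only through $A$ and $B$. The key observation that makes this work is that the matrix-tree determinant factor enters the single-edge marginal solely through the ratio $\tfrac{B\,e^{\tau}}{A + B\,e^{\tau}} \in [0,1]$, which can never overwhelm the confining one-edge terms $\delta\,e^{\tau} + e^{-\tau} + (\tfrac12 - s_{e_0})\tau$. Once this uniformity is secured, the rest reduces to a routine computation for a one-parameter family of explicit log-concave densities on $\R$, and no further subtlety arises.
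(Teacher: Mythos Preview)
Your proposal is correct and shares the paper's central insight: the Matrix-Tree Theorem shows that the contribution of the determinant factor to the log-derivative in the single-edge variable lies in $[0,\tfrac12]$, so the one-edge density is log-concave with tails controlled by the explicit confining potential $\delta e^\tau + e^{-\tau} + (\tfrac12 - s_{e_0})\tau$, uniformly in all other data.

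The implementation, however, differs. The paper works with the \emph{marginal} density of $\tau_{e_0}$: it integrates out the remaining edges and invokes Lemma~\ref{l.convexint} (Pr\'ekopa--Leindler) to conclude that the resulting one-variable function $f$ is convex, then computes $f'\in[0,\tfrac12]$ as an \emph{averaged} spanning-tree edge probability. You instead work with the \emph{conditional} density given $(\tau_e)_{e\neq e_0}$, which is completely explicit via $\det D_L(\tau)=A+Be^{\tau_{e_0}}$, and you obtain the \emph{pointwise} bound $\tfrac{Be^\tau}{2(A+Be^\tau)}\in[0,\tfrac12]$ directly. Your route is more elementary in that it bypasses Lemma~\ref{l.convexint} entirely; the paper's route has the minor advantage of directly producing the marginal density used downstream, but both lead to the same uniform tail estimates and the same conclusion.
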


\begin{proof}
We omit the dependence of $Z$ on $\delta, \tau, \mathsf s$ in the proof.  Fix an $e_0\in \mathcal E$ and we denote by $g(\tau_{e_0})$ the marginal density of~\eqref{e.mutautilt} at $ \tau_{e_0}$, namely 
\begin{equation}
g(\tau_{e_0}) := \frac 1{Z_{L}}  e^{-\delta e^{\tau_{e_0}} - e^{- \tau_{e_0}} - \frac {\tau_{e_0}}{2} + s_{e_0} \tau_{e_0}} Z_L(\tau_{e_0})\,,
\end{equation}
where we denote  
\begin{equation}
Z_L(\tau_{e_0}):=  \int \prod_{e\neq e_0}  e^{-\delta e^{\tau_e} - e^{- \tau_e} - \frac {\tau_e}{2} + s_e \tau_e} 
  \det D_L(\tau)^{-\frac 12}  \prod_{e\neq e_0} d\tau_e\,.
\end{equation}
Set $f(\tau_{e_0}): = - \log \frac{Z_L(\tau_{e_0})}{Z_{L}}$, so that 
$$
g(\tau_{e_0}) =  e^{-\delta e^{\tau_{e_0}} - e^{- \tau_{e_0}} - \frac {\tau_{e_0}}{2} + s_{e_0} \tau_{e_0} -f(\tau_{e_0})}\,.
$$
 We observe that
\begin{equation}
Z_L(\tau_{e_0})=  \int 
\exp\left(-\sum_{e\neq e_0 } \Bigl(  \delta  e^{\tau_e}+ e^{-\tau_e}+\Bigl(\frac 12-s_e\Bigr) \tau_e\Bigr) -\frac 12 \log  \det D_L(\tau)
  \right) \prod_{e\neq e_0} d\tau_e\,.
\end{equation}
By Lemma~\ref{l.logdet},  the integrand is log concave, and thus by Lemma~\ref{l.convexint}, $f$ is a convex function of $\tau_{e_0}$. We may compute the first derivative of $f$ explicitly as 
\begin{align}
\label{e.f'}
f'(\tau_{e_0}) & 
= -\frac{\partial_{\tau_{e_0}}Z_L(\tau_{e_0})}{Z_L(\tau_{e_0})} 
\notag \\ &
= \frac{1}{Z_L(\tau_{e_0})}
\int  \frac 12  \partial_{\tau_{e_0}}  \log \det D_L(\tau) \prod_{e\neq e_0} 
e^{-\delta e^{\tau_e} - e^{- \tau_e} - \frac {\tau_e}{2} + s_e \tau_e} 
\det D_L(\tau)^{-\frac 12}  \prod_{e\neq e_0} d\tau_e
\,.
\end{align}
We claim that $f'$ is nonnegative and bounded above by constants independent of~$L$.  Applying Lemma~\ref{l.tree}, we see that
  \begin{equation}
 1 \ge \partial_{\tau_{e_0}}  \log \det D_L(\tau) = 
  \frac{\sum_{\mathcal T: e_0 \in \mathcal T}\prod_{e\in \mathcal T } e^{\tau_e} }{\sum_{\mathcal T \in \mathcal S}\prod_{e\in \mathcal T } e^{\tau_e}}
  = \P_\mathcal T (e_0 \in \mathcal T) \ge 0\,,
  \end{equation}
  where for given $\{e^{\tau_e}\}_{e\in \mathcal E(Q_L)}$,  $\mathcal T$ has the law of a random spanning tree with wired boundary condition in $Q_L$,  weighted by $\prod_{e\in \mathcal T } e^{\tau_e}$. 
  Thus by \eqref{e.f'},  $1/2 \ge f'(\tau_{e_0})  \ge 0$.  
  We may Taylor expand $f$ which yields
  \begin{equation}
  f(\tau_{e_0}) = f(0) + f'(\xi)\tau_{e_0},
  \end{equation}
  for some $\xi \in [0, \tau_{e_0}]$. Therefore 
  \begin{equation}
  \label{e.g}
  g(\tau_{e_0}) =  e^{-\delta e^{\tau_{e_0}} - e^{- \tau_{e_0}} - \frac {\tau_{e_0}}{2} + s_{e_0} \tau_{e_0}}  
  e^{-f(0) - f'(\xi)\tau_{e_0}}.
  \end{equation}
 We claim that $f(0)$ is uniformly bounded (independent of $L$).  Notice that, for every $u,v\in[-1,1]$ and $s_{e_0}\in [-k, \frac 14]$, there exists $C_k<\infty$ such that 
  $$
  \frac {g(u)}{g(v)} \leq C_k e^{-f(u) +f(v)}
  \leq
  C_k e^{\max_{x\in [-1,1]}|f'(x)| |u-v|}
  \leq 
  C_ke^{\frac 12 |u-v|}\,,
  $$
  which implies $g$ has bounded ratio in $[-1,1]$.  Since $\int_{-1}^1 g(\tau) \, d\tau \leq \int_{-\infty}^\infty g(\tau) \, d\tau =1$,  we conclude that there exists $C'_k$ such that  $g(0)\leq C'_k$ and $e^{-f(0) } \leq C'_k$.  
Fix $s_{e_0} \in [-k,\frac14]$,  combine ~\eqref{e.g} with $e^{-f(0) } \leq C'_k$ and $1/2 \ge f'(\tau_{e_0})  \ge 0$,  we see that the marginal density of $\tau_{e_0}$ decays exponentially as $\tau_{e_0}\to \infty$ or $\tau_{e_0}\to -\infty$ (with the exponential rate depends continuously on $s_{e_0}$). Therefore there exists $C''_k<\infty$ such that  $-C''_k \leq \langle \tau_e \rangle_{\mu_{L,\tau}^\delta(\mathsf s)} \leq C''_k$.
\end{proof}

\begin{corollary}
\label{c.mgffinite}
Fix $k,L\in\N$, there exists $C_k<\infty$, such that for every $e\in \mathcal E(Q_L)$,
\begin{equation}
\langle  \a(e)^{-k} \rangle_{\mu_{L,\tau}^\delta}
=
\langle  e^{-k\tau_e} \rangle_{\mu_{L,\tau}^\delta} \leq C_k.
\end{equation}
\end{corollary}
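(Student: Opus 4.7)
The plan is to reduce the bound on the $k$-th negative moment $\langle e^{-k\tau_{e_0}}\rangle_{\mu_{L,\tau}^\delta}$ to an application of the just-established Lemma~\ref{l.mean} via the standard partition-function identity that the log moment generating function is the integral of the mean of the conjugate variable along the tilt.

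First, note that the moment in question is a ratio of tilted partition functions. Specifically, if for $t\in\R$ we let $\mathsf{s}^{(t)}:\mathcal E(Q_L)\to\R$ denote the tilt with $s_{e_0}=t$ and $s_e=0$ for $e\neq e_0$, then directly from the definition~\eqref{e.mutautilt}--\eqref{e.Ztilt} we obtain
\begin{equation*}
\langle e^{-k\tau_{e_0}}\rangle_{\mu_{L,\tau}^\delta} \;=\; \frac{Z_{L,\delta,\tau}(\mathsf{s}^{(-k)})}{Z_{L,\delta,\tau}(\mathsf{s}^{(0)})}\,.
\end{equation*}
Second, differentiating $\log Z_{L,\delta,\tau}(\mathsf{s}^{(t)})$ with respect to $t$ produces $\langle \tau_{e_0}\rangle_{\mu_{L,\tau}^\delta(\mathsf{s}^{(t)})}$, by direct computation from~\eqref{e.Ztilt} (differentiation under the integral sign is justified because the density decays doubly exponentially in $\tau_{e_0}$ thanks to the $e^{-\delta e^{\tau_{e_0}}-e^{-\tau_{e_0}}}$ factor). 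Therefore, by the fundamental theorem of calculus,
\begin{equation*}
\log \langle e^{-k\tau_{e_0}}\rangle_{\mu_{L,\tau}^\delta} \;=\; \log Z_{L,\delta,\tau}(\mathsf{s}^{(-k)})-\log Z_{L,\delta,\tau}(\mathsf{s}^{(0)}) \;=\; -\int_{-k}^{0}\langle \tau_{e_0}\rangle_{\mu_{L,\tau}^\delta(\mathsf{s}^{(t)})}\,dt\,.
\end{equation*}

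Third, for every $t\in[-k,0]$ the tilt $\mathsf{s}^{(t)}$ takes values in $[-k,\tfrac14)$, so Lemma~\ref{l.mean} applies and yields $|\langle \tau_{e_0}\rangle_{\mu_{L,\tau}^\delta(\mathsf{s}^{(t)})}|\leq C_k$ with a constant $C_k$ independent of $L$ and of $t$. Combining this with the previous display gives
\begin{equation*}
\log \langle e^{-k\tau_{e_0}}\rangle_{\mu_{L,\tau}^\delta} \leq kC_k\,,
\end{equation*}
which is the desired bound with a new constant $e^{kC_k}$ in place of $C_k$.

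There is essentially no obstacle here beyond the mild technicality of justifying the differentiation under the integral sign defining $Z_{L,\delta,\tau}(\mathsf{s}^{(t)})$; this is handled by the fast decay of the integrand in $\tau_{e_0}$ established in the proof of Lemma~\ref{l.mean} (cf.~the argument around~\eqref{e.g}), which provides integrable upper envelopes uniform in $t\in[-k,0]$. Once differentiation is legitimate, the result is a one-line consequence of Lemma~\ref{l.mean}.
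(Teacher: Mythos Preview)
Your proof is correct and follows essentially the same approach as the paper: express the log of the moment as an integral of the tilted mean via the fundamental theorem of calculus, then bound the integrand uniformly using Lemma~\ref{l.mean}. The paper's version is terser and omits the justification of differentiation under the integral sign, but the argument is the same.
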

\begin{proof}
For every $s\in [-k,0]$,  we apply Lemma~\ref{l.mean} with $\mathsf s= \indc_{e} (\cdot) s_e$  to obtain
\begin{equation*} 
\log \langle  e^{-k\tau_e} \rangle_{\mu_{L,\tau}^\delta} 
 = \int_0^{-k} \frac{d}{ds}  \log \langle  e^{s\tau_e} \rangle_{\mu_{L,\tau}^\delta}  \, ds
 \leq \left| \int_0^{-k} \langle \tau_e \rangle_{\mu_{L,\tau}^\delta(\mathsf s)} \, ds\right|
 \leq C_k\,. \qedhere 
 \end{equation*}
 \end{proof}
 
We are now able to obtain the percolation estimate. 

\begin{proposition}
\label{p.percolationfinite}
There exists $M_0<\infty$, such that for any $M>M_0$, $1<K<\infty$ and any finite collection of edges $\mathcal C \subset  \mathcal E(Q_{L})$, we have 
\begin{equation}
\label{e.bigcondL}
\biggl \langle \prod_{e \in\mathcal{C} }  \indc_{\{\tau_e \ge KM\}} \biggr \rangle_{\!\!\mu_{L,\tau}^\delta}
\leq e^{-\alpha(K,M) |\mathcal{C} |} 
\end{equation}
and
\begin{equation}
\label{e.smallcondL}
\biggl \langle \prod_{e \in\mathcal{C} }  \indc_{\{\tau_e \le -KM\}} \biggr \rangle_{\!\!\mu_{L,\tau}^\delta}
\leq e^{-\alpha(K,M) |\mathcal{C} |} \,.
\end{equation}
Moreover, $\alpha(K,M) \to K$ as $M\to \infty$. 
\end{proposition}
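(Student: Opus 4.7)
The plan is to apply the exponential Chebyshev (Chernoff) inequality and control the moment generating function of $\sum_{e\in\mathcal C}\tau_e$ by differentiating the tilted partition function and invoking Lemma~\ref{l.mean}. For~\eqref{e.bigcondL}, for any $s>0$ we begin from
\begin{equation*}
\Bigl\langle \prod_{e \in \mathcal C} \indc_{\{\tau_e \ge KM\}} \Bigr\rangle_{\mu_{L,\tau}^\delta} \le e^{-sKM|\mathcal C|} \Bigl\langle \exp\Bigl(s\sum_{e\in\mathcal C}\tau_e\Bigr)\Bigr\rangle_{\mu_{L,\tau}^\delta}.
\end{equation*}
Setting $h(s) := \log \langle \exp(s\sum_{e\in\mathcal C}\tau_e)\rangle_{\mu_{L,\tau}^\delta}$, the definitions~\eqref{e.mutautilt}--\eqref{e.Ztilt} give $h(s) = \log Z_{L,\delta,\tau}(s\indc_{\mathcal C}) - \log Z_{L,\delta,\tau}(0)$, and differentiating under the integral yields
\begin{equation*}
h'(s) = \sum_{e \in \mathcal C} \langle \tau_e \rangle_{\mu_{L,\tau}^\delta(s\indc_{\mathcal C})}.
\end{equation*}
For $s \in [0, 1/4)$ the function $\mathsf s = s\indc_{\mathcal C}$ fits the hypothesis of Lemma~\ref{l.mean} with $k = 1$, giving $|h'(s)| \le C|\mathcal C|$ for a constant $C$ independent of $s, L, \delta, \mathcal C$. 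Integrating from $0$ to $s$ yields $h(s) \le Cs|\mathcal C|$, so
\begin{equation*}
\Bigl\langle \prod_{e \in \mathcal C} \indc_{\{\tau_e \ge KM\}} \Bigr\rangle_{\mu_{L,\tau}^\delta} \le \exp\bigl(-s(KM - C)|\mathcal C|\bigr).
\end{equation*}
Choosing $s = 1/M$, admissible once $M > 4$, produces $\alpha(K,M) = K - C/M$, which tends to $K$ as $M\to\infty$.

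The lower-tail estimate~\eqref{e.smallcondL} follows from the analogous argument with a negative tilt: for $t > 0$,
\begin{equation*}
\Bigl\langle \prod_{e \in \mathcal C} \indc_{\{\tau_e \le -KM\}} \Bigr\rangle_{\mu_{L,\tau}^\delta} \le e^{-tKM|\mathcal C|} \Bigl\langle \exp\Bigl(-t\sum_{e\in\mathcal C}\tau_e\Bigr)\Bigr\rangle_{\mu_{L,\tau}^\delta}.
\end{equation*}
Applying Lemma~\ref{l.mean} to $\mathsf s = -t\indc_{\mathcal C}$ with $k = \lceil t \rceil$ controls the derivative of the log moment generating function uniformly; taking $t = 1/M$ (so that $k = 1$ suffices) again yields $\alpha(K,M) = K - C/M$. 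Choosing $M_0$ so that $KM_0 > C$ guarantees $\alpha(K,M) > 0$ for every $M > M_0$ and $K > 1$.

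The only delicate point is the scaling of the tilt parameter. Lemma~\ref{l.mean} gives a bound on $\langle \tau_e\rangle$ that is uniform over edges and over tilts in the admissible range, with a constant independent of $L$ and the cardinality of $\mathcal C$; this is exactly what is needed, and no additional input is required. The specific choice $s = 1/M$ (rather than, say, $s$ close to $1/4$) is what makes the Chernoff exponent asymptotically equal to $K|\mathcal C|$: the leading term $-sKM|\mathcal C|$ has magnitude $K|\mathcal C|$, while the correction $+Cs|\mathcal C| = C|\mathcal C|/M$ vanishes in the limit $M\to\infty$, producing precisely the asymptotic rate $\alpha(K,M) \to K$ claimed by the proposition.
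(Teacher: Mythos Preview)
Your proof is correct and follows essentially the same approach as the paper: an exponential Chebyshev bound with tilt parameter $s=1/M$, the identification of the log moment generating function with a difference of tilted log partition functions, differentiation to reduce to a sum of tilted means, and the uniform mean bound from Lemma~\ref{l.mean}. The paper simply writes the Chernoff step with the specific choice $s=1/M$ from the outset (inserting $e^{-\tau_e/M}e^{\tau_e/M}$ inside the expectation) rather than introducing a general $s$ first, but the argument is otherwise identical.
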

In particular, if we choose $K,  M$ sufficiently large then the  edges in $Q_{L}$ with conductances $\a(e) = e^{\tau_e}$ larger than $e^{KM}$ or smaller than $e^{-KM}$  are dominated by a very subcritical percolation cluster. 

\begin{proof}
Notice that
\begin{equation}
\biggl \langle \prod_{e \in\mathcal{C}}  \indc_{\{\tau_e \ge KM\}} \biggr \rangle_{\!\!\mu_{L,\tau}^\delta}
= \biggl \langle \prod_{e \in\mathcal{C}} \indc_{\{\tau_e \ge KM\}} e^{-\frac{\tau_e}{M}}e^{\frac{\tau_e}{M}}\biggr \rangle_{\!\!\mu_{L,\tau}^\delta}
\leq e^{-K|\mathcal{C}|} \biggl  \langle \prod_{e \in\mathcal{C}} e^{\frac{\tau_e}{M}}\biggr \rangle_{\!\!\mu_{L,\tau}^\delta}\,.
\end{equation}
The last quantity can be written in terms of the tilted partition function ~\eqref{e.Ztilt} by
\begin{equation}
\left\langle \prod_{e \in\mathcal{C}} e^{\frac{\tau_e}{M}}\right\rangle_{\mu_{L,\tau}^\delta} = 
\frac{Z(\frac 1M \indc_{\mathcal{C}}(\cdot)) }{Z} \,.
\end{equation}
Let $\mathsf s_e := \frac 1M \indc_{\{e \in\mathcal{C}\}}$.  By differentiation, 
$$
\frac d{ds} \log \left\langle \prod_{e \in\mathcal{C}} e^{s\tau_e}\right\rangle_{\mu_{L,\tau}^\delta}
= 
\sum_{e \in\mathcal{C}} \langle \tau_e \rangle_{\mu_{L,\tau}^\delta(\mathsf s)} \,,
$$
thus
\begin{equation}
\log \frac{Z(\frac 1M \indc_{\mathcal{C}}(\cdot) )}{Z} = \sum_{e \in\mathcal{C}}  \int_0^{\frac{1}{M}}
\langle \tau_e \rangle_{\mu_{L,\tau}^\delta(\mathsf s)} \,ds \,.
\end{equation}
Lemma~\ref{l.mean} implies there is an absolute constant $C$,
\begin{equation}
\sup_{s\in [-1, \frac14]} \langle \tau_e \rangle_{\mu_{L,\tau}^\delta(\mathsf s)} \leq C\,.
\end{equation}
Therefore
\begin{equation}
\frac{Z(\frac 1M \indc_{\mathcal{C}}(\cdot) )}{Z} \leq \exp\left( \frac{C}{M} |\mathcal{C}|\right)\,.
\end{equation}
Take $M_0=C$ and  $M>M_0$ this implies bound~\eqref{e.bigcondL}. The proof of~\eqref{e.smallcondL} is very similar, by using 
\begin{equation}
\left\langle \prod_{e \in\mathcal{C}}  \indc_{\{ \tau_e \le -KM \}} \right\rangle_{\mu_{L,\tau}^\delta}
\leq e^{-K|\mathcal{C}|} \left\langle \prod_{e \in\mathcal{C}} e^{-\frac{\tau_e}{M}}\right\rangle_{\mu_{L,\tau}^\delta}
\end{equation}
and 
\begin{equation}
\frac{Z(-\frac 1M\indc_{\mathcal{C}}(\cdot) ) }{Z} \leq \exp\left( \frac{C}{M} |\mathcal{C}|\right)\,.
\end{equation}
This completes the proof.
\end{proof}

\subsection{Tightness}
\label{s.tight}
In this section we prove Lemma~\ref{l.tight} that establishes the tightness of the sequence of the Gibbs measures $\{\mu_{L,\delta}\}$.

We start by  stating a Brascamp-Lieb inequality, which gives a Gaussian domination for moments of linear functionals of $\phi$.  This was proved in~\cite{BS} for the case $V_\delta(x) = \sqrt{x^2+\delta}, \delta>0$ and we follow the same line of argument here. 

\begin{proposition}
\label{p.BLfinite}
Let $\phi$ be sampled from the continuous solid-on-solid Gibbs measure in $Q_L$ with zero boundary condition,  and let $v: Q_L \to \R$ be a mean zero test function: $\sum_{x\in Q_L}v(x) =0$.  Then for every $k\in\N$,  there exists $C_k <\infty$,  such that 
\begin{equation}
\Biggl\langle \biggl(\sum_{x\in Q_L} \phi(x) v(x) \biggr)^{2k} \Biggr\rangle_{\mu_{L,\delta}} \leq 
C_k \Biggl( \sum_{(i,j)\in \mathcal E(Q_L)} ((\Delta^{-1} v)(i)-  (\Delta^{-1} v)(j))^2\Biggr)^{\!k}.
\end{equation}
\end{proposition}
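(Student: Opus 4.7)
The plan is to first condition on the $\tau$-field. Under the coupling $\tilde\mu^\delta_{L,\mathcal{E}(Q_L)}$ constructed in Section~\ref{s.BS} via the magic identity~\eqref{e.magic}, the conditional density of $\phi$ given $\tau$ is proportional to $\exp(-(\phi, D_L(\tau)\phi))$, so on the space of functions vanishing on $\partial Q_L$ it is a centered Gaussian with covariance $\tfrac12 D_L(\tau)^{-1}$. Writing $F_v := \sum_{x\in Q_L}\phi(x)v(x)$, the conditional $F_v \mid \tau$ is therefore centered Gaussian with variance $\tfrac12\, v^T D_L(\tau)^{-1} v$, and the standard Gaussian moment formula gives
\begin{equation*}
\langle F_v^{2k}\mid\tau\rangle = \frac{(2k-1)!!}{2^k}\,(v^T D_L(\tau)^{-1} v)^k.
\end{equation*}
Taking the outer expectation over $\tau$, the proposition reduces to the deterministic-looking bound
\begin{equation*}
\langle (v^T D_L(\tau)^{-1} v)^k\rangle_{\mu_{L,\tau}^\delta}\leq C_k \biggl(\sum_{(i,j)\in \mathcal{E}(Q_L)} ((\Delta^{-1}v)(i)-(\Delta^{-1}v)(j))^2\biggr)^{\!k}.
\end{equation*}

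To prove this reduced inequality, I would invoke Thomson's principle, i.e., the dual variational characterization
\begin{equation*}
v^T D_L(\tau)^{-1} v = \inf\left\{\sum_{e\in\mathcal{E}(Q_L)} J_e^2\, e^{-\tau_e} \,:\, \div J = v \text{ in } Q_L^\circ,\ J = 0 \text{ on } \partial Q_L\right\},
\end{equation*}
and the explicit test flow $J_e := \nabla\bar u(e)$, where $\bar u := \Delta^{-1}v$ is the unweighted Dirichlet solution. Since $\div J = \Delta\bar u = v$, this $J$ is admissible and produces the pointwise (in $\tau$) upper bound
\begin{equation*}
v^T D_L(\tau)^{-1} v \leq \sum_{e\in\mathcal{E}(Q_L)} (\nabla\bar u(e))^2\, e^{-\tau_e}.
\end{equation*}
Raising to the $k$-th power, expanding, and taking expectation,
\begin{equation*}
\langle (v^T D_L(\tau)^{-1} v)^k\rangle_{\mu_{L,\tau}^\delta} \leq \sum_{e_1,\ldots,e_k}\prod_{i=1}^k (\nabla\bar u(e_i))^2\,\bigl\langle e^{-\sum_{i=1}^k\tau_{e_i}}\bigr\rangle_{\mu_{L,\tau}^\delta}.
\end{equation*}
A generalized H\"older inequality then bounds each remaining expectation uniformly by a constant $C_k$: if edge $e$ occurs with multiplicity $m_e$ in the tuple $(e_1,\ldots,e_k)$, then $\langle\prod_e e^{-m_e\tau_e}\rangle\leq \prod_e \langle e^{-k\tau_e}\rangle^{m_e/k}\leq C_k$ by Corollary~\ref{c.mgffinite}, since $\sum_e m_e/k = 1$. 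Factoring $C_k$ out of the multi-sum collapses the remaining sum to $(\sum_e(\nabla\bar u(e))^2)^k$, matching the target up to the identification of the directed-edge Dirichlet energy.

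The main obstacle is that the map $\tau\mapsto v^T D_L(\tau)^{-1} v$ has no clear log-concavity or log-convexity structure in $\tau$ for general $v$, so one cannot control its moments directly from the log-concavity of the $\tau$-marginal established via Lemmas~\ref{l.logdet}--\ref{l.convexint}. Thomson's principle is the device that circumvents this: it replaces the inverse weighted Laplacian by an explicit pointwise sum of $e^{-\tau_e}$ terms, reducing everything to the per-edge negative-exponential-moment bound of Corollary~\ref{c.mgffinite} (which \emph{is} obtained from log-concavity of the $\tau$-marginal). This is essentially the Brydges--Spencer strategy from~\cite{BS}, and it transfers without change to every $\delta\geq 0$ because Corollary~\ref{c.mgffinite} is uniform in $\delta$.
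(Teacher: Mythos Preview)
Your proof is correct and follows essentially the same route as the paper: condition on~$\tau$, apply Wick's formula, bound $(v, D_L(\tau)^{-1}v)$ pointwise by $\sum_e (\nabla\Delta^{-1}v(e))^2 e^{-\tau_e}$, then expand, H\"older, and invoke Corollary~\ref{c.mgffinite}. The only cosmetic difference is that you obtain the pointwise bound via Thomson's variational principle while the paper writes out the equivalent Cauchy--Schwarz computation $(v,G_{\a,L}v)=(A^{-1}\nabla G_L v,\, A\nabla G_{\a,L}v)\le \|A^{-1}\nabla G_L v\|\,(v,G_{\a,L}v)^{1/2}$; one small caveat is that your side condition ``$J=0$ on $\partial Q_L$'' is the Neumann constraint rather than the Dirichlet one, but since you only use the upper bound with the admissible test flow $J=\nabla\Delta^{-1}v$, this does not affect the argument.
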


\begin{proof}
By using the joint measure for $\phi$ and $\tau$~\eqref{e.jointZ},  integrate out the Gaussian marginal, and apply the Wick's theorem to compute the Gaussian moments, we have

\begin{equation*}
\left\langle \biggl(\sum_{x\in Q_L} \phi(x) v(x) \biggr)^{2k} \right\rangle_{\!\!\mu_{L,\delta}}
= (2k-1)!! \left\langle \biggl(\sum_{x,y\in Q_L}  v(x)G_{\a, L}(\tau)(x,y)v(y)\biggr)^k \right\rangle_{\!\!\mu_{L,\tau}^\delta},
\end {equation*}
where $G_{\a, L}(\tau) = D_L(\tau)^{-1}$,  and $D_L(\tau)$ is defined in~\eqref{e.dl}.  We may write $D_L (\tau)=\nabla\cdot A^2\nabla$,  were $A$ is a diagonal matrix with entries $A(e,e) = e^{\tau_e/2}$.  Using integration by parts and denoting~$G_L:= \Delta^{-1}$, we find that 
\begin{align*}
\sum_{x,y\in Q_L}  v(x)G_{\a, L}(\tau)(x,y)v(y)
= 
\left(  v, \Delta G_L \cdot G_{\a, L}v\right)_{Q_L}
=
\left(  \nabla G_L v, \nabla G_{\a, L}v\right)_{Q_L} \\
= 
\left( A^{-1} \nabla G_L v, A\nabla G_{\a, L}v\right)_{Q_L}
\leq
\left( A^{-1} \nabla G_L v, A^{-1} \nabla G_L v\right)_{Q_L}^\frac 12
\left( A\nabla G_{\a, L}v,  A\nabla G_{\a, L}v\right)_{Q_L}^\frac 12\\
=
\biggl( \sum_{e=(i,j)\in \mathcal E(Q_L)} ((\Delta^{-1} v)(i)-  (\Delta^{-1} v)(j))^2 e^{-\tau_e}
\biggr)^\frac 12
\left(  v, G_{\a, L}v\right)_{Q_L}^\frac 12,
\end {align*}
where we used $\nabla\cdot A^2\nabla G_{\a, L} = \delta$ in the last line.  
Therefore by absorbing the last term to the left,
\begin{equation*}
 \left(\sum_{x,y\in Q_L}  v(x)G_{\a, L}(\tau)(x,y)v(y)\right)^k
 \leq
 \Biggl( \sum_{e=(i,j)\in \mathcal E(Q_L)} ((\Delta^{-1} v)(i)-  (\Delta^{-1} v)(j))^2 e^{-\tau_e}
\Biggr)^k\,.
\end {equation*}
To conclude the proposition,  we expand the $k^{th}$ power in the right side,  use the H\"older inequality to bound
$$
\langle e^{-\tau_{e_1}} \cdots e^{-\tau_{e_k}} \rangle_{\mu_{L,\tau}^\delta}
\leq
\left( \langle e^{-k\tau_{e_1}} \rangle_{\mu_{L,\tau}^\delta}  \cdots \langle e^{-k\tau_{e_k}} \rangle_{\mu_{L,\tau}^\delta}  \right)^\frac 1k.
$$
and apply the exponential moment bound Corollary~\ref{c.mgffinite}. 
\end{proof}

\begin{proof}[Proof of Lemma~\ref{l.tight}]
Lemma~\ref{l.tight} now follows form standard arguments.  Define, for any $r>0$, the space
\begin{equation}
\Omega_{r,M} := \biggl\{ \phi: \Zd \to \R: \| \nabla \phi \|_r^2 
:= \sum_{x\in\Zd}\sum_{i=1}^d (\nabla_i\phi(x))^2  e^{-2rx} \leq M \biggr\}\,,
\end{equation}
which forms compact subsets of $\Omega$, the space of gradient functions. We may compute 

$$
\mu_{L,\delta} (\Omega_{r,M}^c) = \langle \indc_{\{ \| \nabla \phi \|_r > M \} } \rangle_{\mu_{L,\delta}}
\leq \frac{1}{M^2}  \langle \| \nabla \phi \|_r^2 \rangle_{\mu_{L,\delta}} 
=\frac{1}{M^2}  \sum_{x\in Q_L}\sum_{i=1}^d \langle (\nabla_i \phi(x))^2 \rangle_{\mu_{L,\delta}}  e^{-2r|x|} \,.
$$
We apply Proposition~\ref{p.BLfinite} with $v = \delta_x - \delta_{x+e_i}$ to obtain
$$
\langle (\nabla_i  \phi(x))^2 \rangle_{\mu_{L,\delta}} 
\leq
C \sum_{(j,k)\in \mathcal E(Q_L)} ((\Delta^{-1} v)(j) -  (\Delta^{-1} v)(k))^2\,.
$$
 Therefore using integration by parts,
 \begin{equation}
 \langle (\nabla_i  \phi(x))^2 \rangle_{\mu_{L,\delta}} 
 \leq C   (\delta_x - \delta_{x+e_i}, \Delta^{-1}(\delta_x - \delta_{x+e_i}))
 < \infty
\,.
 \end{equation}
This implies
\begin{equation} 
\mu_{L,\delta} (\Omega_{r,M}^c) \leq
\sup_{x\in Q_L, i=1.\cdots, d} \langle (\nabla_i  \phi(x))^2 \rangle_{\mu_{L,\delta}} 
\frac{d}{M^2}  \sum_{x\in Q_L}   e^{-2r|x|} \,,
\end{equation}
so that for $M$ suffciently large, $\mu_{L,\delta} (\Omega_{r,M}^c) \leq \eps$ for all $L\ge 1$. This gives the tightness of $\mu_{L,\delta}$.  
\end{proof}

\subsection{Percolation bounds in infinite volume}

Combining Corollary~\ref{c.weak} and Proposition~\ref{p.percolationfinite}, we conclude that for any fixed $L\ge L_1\ge 1$, the law of $\{\tau_e\}_{e\in \mathcal E(Q_{L_1})}$ under $\mu_{L,\tau}^\delta$ converges in distribution as $L\to\infty$. In particular, this implies the following statement. 

\begin{proposition}
\label{p. percolation}

There exists $M_0<\infty$, such that for any fixed  $M>M_0$, $K\ge 1$, and any finite collection of edges $\mathcal C $, we have 
\begin{equation}
\label{e.bigcond}
\left\langle \prod_{e \in\mathcal{C}}  \indc_{\{\tau_e \ge KM\}} \right\rangle_{\!\!\mu_{\tau}^\delta}
\leq e^{-\alpha(K,M) |\mathcal{C} |} 
\end{equation}
and, likewise,
\begin{equation}
\label{e.smallcond}
\left\langle \prod_{e \in\mathcal{C} }  \indc_{\{\tau_e \le -KM\}} \right\rangle_{\!\!\mu_{\tau}^\delta}
\leq e^{-\alpha(K,M) |\mathcal{C} |} \,.
\end{equation}
Moreover, $\alpha(K,M) \to K$ as $M\to \infty$. 
\end{proposition}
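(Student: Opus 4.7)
The plan is to deduce the infinite-volume bounds directly from their finite-volume counterparts in Proposition~\ref{p.percolationfinite}, transferring the estimates across the weak convergence stated in Corollary~\ref{c.weak}. The only subtle point is that the event of interest $F := \bigcap_{e \in \mathcal{C}}\{\tau_e \geq KM\}$ is closed but not open in $\R^{\mathcal{C}}$, so its indicator is upper semicontinuous, and a naive application of weak convergence to $\indc_F$ gives the wrong direction of inequality. I would fix this by approximating $F$ from the outside by open supersets and exploiting the continuity of the rate $\alpha(K,M)$ in~$K$.

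Concretely, I fix a finite edge set $\mathcal{C} \subset \mathcal{E}(Q_{L_1})$ and, for each small $\epsilon > 0$, introduce the open superset $G_\epsilon := \bigcap_{e \in \mathcal{C}}\{\tau_e > KM - \epsilon\} \supset F$. Since $G_\epsilon$ depends only on the coordinates $(\tau_e)_{e \in \mathcal{C}}$ and, by Corollary~\ref{c.weak}, the joint law of these coordinates under $\mu_{L,\tau}^\delta$ converges weakly as $L \to \infty$ to their joint law under $\mu_\tau^\delta$, the open-set half of the Portmanteau theorem yields
\begin{equation*}
\mu_\tau^\delta(G_\epsilon) \;\leq\; \liminf_{L \to \infty} \mu_{L,\tau}^\delta(G_\epsilon).
\end{equation*}
Applying Proposition~\ref{p.percolationfinite} to the right-hand side with the threshold $KM$ replaced by $KM - \epsilon$ (i.e., with $K$ replaced by $K - \epsilon/M$) gives
\begin{equation*}
\mu_\tau^\delta(G_\epsilon) \;\leq\; e^{-\alpha(K - \epsilon/M,\, M)\,|\mathcal{C}|}.
\end{equation*}

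Since $G_\epsilon \downarrow F$ as $\epsilon \downarrow 0$, continuity of measure gives $\mu_\tau^\delta(F) = \lim_{\epsilon \downarrow 0}\mu_\tau^\delta(G_\epsilon)$, and the explicit form $\alpha(K,M) = K - C/M$ extracted from the proof of Proposition~\ref{p.percolationfinite} ensures continuity in $K$, yielding~\eqref{e.bigcond} with the same exponent $\alpha(K,M)$. The lower bound~\eqref{e.smallcond} is proved by the symmetric argument, replacing $F$ by $\bigcap_e \{\tau_e \leq -KM\}$ and $G_\epsilon$ by $\bigcap_e \{\tau_e < -KM + \epsilon\}$. The asymptotic $\alpha(K,M) \to K$ as $M \to \infty$ is inherited verbatim from the finite-volume version.

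This proof is essentially routine; the only real obstacle is recognizing that one must approximate the closed event from outside by open sets, rather than attempting to pass to the limit in $\indc_F$ directly. Linearity of $\alpha$ in $K$ then guarantees that the error introduced by this approximation vanishes as $\epsilon \downarrow 0$, so no quantitative loss is incurred in the final statement.
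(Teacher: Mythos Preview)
Your argument is correct and follows exactly the route the paper indicates: the paper simply asserts that Proposition~\ref{p. percolation} follows by combining Corollary~\ref{c.weak} (weak convergence of the finite-volume $\tau$-marginals) with Proposition~\ref{p.percolationfinite}, without spelling out the Portmanteau step. Your write-up supplies precisely the detail the paper leaves implicit---approximating the closed event from outside by open sets and using the explicit form $\alpha(K,M)=K-C/M$ to remove the $\epsilon$---so there is nothing to correct.
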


The inverse moments bounds and Brascamp-Lieb inequality also extends to infinite volume by passing to the limit~$L\to\infty$.  

\begin{corollary}
\label{c.mgf}
For every~$k\in\N$, there exists $C_k<\infty$ such that, for every $e\in \mathcal E(\Zd)$,
\begin{equation}
\bigl \langle  \a(e)^{-k} \bigr \rangle_{\mu_{\tau}^\delta}
=
\bigl \langle  e^{-k\tau_e} \bigr \rangle_{\mu_{\tau}^\delta} \leq C_k\,.
\end{equation}
\end{corollary}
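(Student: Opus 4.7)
The plan is to deduce Corollary~\ref{c.mgf} from its finite-volume counterpart Corollary~\ref{c.mgffinite} by passing to the limit $L\to\infty$, exploiting the fact that the bound there is uniform in $L$. The transfer from finite to infinite volume is provided by Corollary~\ref{c.weak}, which tells us that the law of $\tau_e$ under $\mu_{L,\tau}^\delta$ converges weakly to its law under $\mu_\tau^\delta$ (apply Corollary~\ref{c.weak} with $L_1=0$ and any $K$ large enough that $e\in\mathcal{E}(Q_K)$).

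The only subtlety is that the test function $x\mapsto e^{-kx}$ is continuous but not bounded, so weak convergence does not immediately pass to expectations. To handle this, I would truncate. Define $\varphi_N(x) := \min(e^{-kx},N)$, which is bounded and continuous. By weak convergence of the one-edge marginals,
\begin{equation*}
\bigl\langle \varphi_N(\tau_e) \bigr\rangle_{\mu_\tau^\delta} = \lim_{L\to\infty}\bigl\langle \varphi_N(\tau_e) \bigr\rangle_{\mu_{L,\tau}^\delta}\,,
\end{equation*}
and since $\varphi_N(\tau_e)\leq e^{-k\tau_e}$ pointwise, the right-hand side is bounded above by the $L$-uniform constant $C_k$ delivered by Corollary~\ref{c.mgffinite}. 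Letting $N\to\infty$ and applying the monotone convergence theorem to $\varphi_N(\tau_e)\uparrow e^{-k\tau_e}$ then gives
\begin{equation*}
\bigl\langle e^{-k\tau_e} \bigr\rangle_{\mu_\tau^\delta} \leq C_k\,,
\end{equation*}
which is the claimed estimate.

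I do not expect any serious obstacle: the argument is a routine weak-limit/truncation step, with the real content---the $L$-uniform exponential moment bound and the weak convergence of finite-dimensional $\tau$-distributions---already in place. As an alternative route, one could apply Corollary~\ref{c.mgffinite} at level $k+1$ to obtain uniform integrability of $\{e^{-k\tau_e}\}_L$ and then pass to the limit in a single step; but the monotone-truncation argument above avoids even invoking uniform integrability and is perhaps the most transparent option.
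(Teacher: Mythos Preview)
Your proposal is correct and follows exactly the approach the paper indicates: pass from the uniform-in-$L$ bound of Corollary~\ref{c.mgffinite} to infinite volume via the weak convergence of Corollary~\ref{c.weak}. In fact you supply more detail than the paper, which simply states that the inverse moment bounds ``extend to infinite volume by passing to the limit~$L\to\infty$''; your truncation/monotone-convergence step is precisely the routine justification needed to make that sentence rigorous.
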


\begin{proposition}
\label{p.BL}
Let $\phi$ be sampled from the infinite-volume continuous solid-on-solid measure,  and let $v: \Zd \to \R$ be a mean zero test function with finite support: $\sum_{x\in \Zd}v(x) =0$.  Then for every $k\in\N$,  there exists $C_k <\infty$,  such that 
\begin{equation}
\left\langle \biggl (\sum_{x\in \Zd} \phi(x) v(x) \biggr )^{\!\!2k} \right\rangle_{\!\!\mu_{\tau}^\delta} \leq 
C_k \Biggl ( \sum_{(i,j)\in \mathcal E(\Zd)} ((\Delta^{-1} v)(i) -  (\Delta^{-1} v)(j))^2\Biggr )^{\!k}.
\end{equation}
\end{proposition}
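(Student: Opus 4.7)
The plan is to derive Proposition~\ref{p.BL} by passing to the limit $L\to\infty$ in the finite-volume inequality of Proposition~\ref{p.BLfinite}, using the weak convergence $\mu_{L,\delta}\to\mu_\delta$ recorded in Corollary~\ref{c.weak}. Fix $v:\Zd\to\R$ mean-zero with $\supp v\subseteq Q_{L_1}$. For each $L>L_1$, Proposition~\ref{p.BLfinite} yields
\[
\left\langle G(\phi)^{2k}\right\rangle_{\mu_{L,\delta}}\leq C_k\left(\sum_{(i,j)\in\mathcal E(Q_L)}((\Delta_L^{-1}v)(i)-(\Delta_L^{-1}v)(j))^2\right)^{\!k},
\]
where $G(\phi):=\sum_x\phi(x)v(x)$ and $\Delta_L$ denotes the discrete Laplacian on $Q_L$ with zero Dirichlet boundary condition.

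First, I would observe that, since $\sum_x v(x)=0$ and $v$ has finite support, $G$ depends only on finitely many gradient variables: choosing any $x_0\in Q_{L_1}$ and lattice paths $\gamma_y$ from $x_0$ to each $y\in\supp v$, one can write $G(\phi)=\sum_y v(y)\sum_{e\in\gamma_y}\nabla\phi(e)$. Hence $G$ is a continuous function on the space of gradient fields, which is the state space for $\mu_\delta$. Next, to take the limit in the expectation of the unbounded quantity $G^{2k}$, I would use a monotone truncation: for each $N\in\N$, weak convergence gives $\langle G^{2k}\wedge N\rangle_{\mu_{L,\delta}}\to\langle G^{2k}\wedge N\rangle_{\mu_\delta}$, while the left-hand side is bounded by $\langle G^{2k}\rangle_{\mu_{L,\delta}}$ and hence by the finite-volume right-hand side above, uniformly in $L$ and $N$. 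Sending $N\to\infty$ and invoking monotone convergence on the $\mu_\delta$ side then delivers the desired inequality, provided the right-hand side itself converges.

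For this last convergence, let $u_L:=\Delta_L^{-1}v$ and $u:=\Delta^{-1}v$ (the natural whole-space solution, defined up to an additive constant in $d=2$ but with an unambiguous gradient). Since $v$ is mean-zero with finite support, $u$ decays like $|x|^{2-d}$ in $d\geq 3$ and like $|x|^{-1}$ in $d=2$; in particular, $\nabla u\in\ell^2(\mathcal E(\Zd))$. Standard discrete Green-function estimates for the Dirichlet problem on $Q_L$ then give $\|\nabla u_L-\nabla u\|_{\ell^2}\to 0$ (extending $u_L$ by zero outside $Q_L$), which yields convergence of the right-hand side to the claimed infinite-volume quantity.

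The main obstacle I anticipate is the truncation step: one must verify carefully that $G^{2k}$ is genuinely a function in the gradient topology (so that the weak convergence of Corollary~\ref{c.weak} applies) and that the uniform finite-volume bound really transfers through the $N\to\infty$ limit. The $d=2$ case also requires the mean-zero hypothesis on $v$ in an essential way, since otherwise $\Delta^{-1}v$ would grow logarithmically and the infinite-volume Dirichlet energy would diverge; ensuring $\ell^2$-convergence of $\nabla u_L$ in this borderline case is the most delicate part of the argument.
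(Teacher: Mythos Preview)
Your proposal is correct and matches the paper's own approach: the paper simply states that the finite-volume Brascamp--Lieb inequality ``extends to infinite volume by passing to the limit $L\to\infty$'' without further detail, and what you have written is precisely the standard way to make that passage rigorous. Your treatment of the truncation and of the convergence of the right-hand side (including the care needed in $d=2$) is more thorough than anything the paper records.
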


We now apply the Brascamp-Lieb inequality to prove Lemma \ref{l.conditional}.

\begin{proof}[Proof of Lemma \ref{l.conditional}]
By directly checking \eqref{e.couple},  we see that the conditional measure is a multivariate Gaussian with conductance $(\tau_e)$,  this leads to the covariance $D(\tau)^{-1}$.  To identify the mean,   we define a function $g:  \R^{\mathcal E(\Zd)} \times \Zd \to \R$ by 
$g(\tau,  0)=0$,  and for every $x\in\Zd$ and $i=1,\cdots,  d$,
$$
g(\tau,  x+e_i) - g(\tau,  x) = \E_{\tilde \mu^\delta}[\nabla \phi(x+e_i)| \mathcal F] (\tau).
$$
Since the right side is a gradient,  $g$ is a well-defined function.  A direct computation of the conditional mean ensures that $g$ is harmonic with respect to the conductance $\tau$:
$$
\sum_{y\sim x} \tau_{(x,y)} (g(\tau, y) -g(\tau, x) )
= 
\E_{\tilde \mu^\delta} \biggl( \sum_{y\sim x} \tau_{(x,y)}  (\phi(y)-\phi(x)) | \mathcal F\biggr) (\tau)
= 0,
$$
for a.s-$\tau$.  The shift-invariance of $g$ follows from the translation invariance of $\mu^\delta$.  The fact that the $(\tau_e)$ is nonnegative and $g$ is $\tau$-harmonic implies for every $x\in\Zd$,
$$
\biggl\langle \sum_{y\sim x} \tau_{(x,y)} (g(\tau, x) - g(\tau,y))^2\biggr\rangle_{\!\tilde \mu^\delta}
<\infty.
$$
To check the square integrability,  we apply the Jensen's and the Brascamp-Lieb inequalities which yields for every $x\sim 0$,
$$
\langle g(\tau, x)^2\rangle_{\tilde \mu^\delta}
=
\E_{\tilde \mu^\delta} \bigl[ \E_{\tilde \mu^\delta} [\phi(x) -\phi(0)| \mathcal F] \bigr]^2
\le
\E_{\tilde \mu^\delta} \bigl[ \E_{\tilde \mu^\delta} [(\phi(x) -\phi(0))^2| \mathcal F] \bigr]
= 
\langle (\phi(x) -\phi(0))^2\rangle_{\tilde \mu^\delta}
<\infty.
$$
Finally,  since $\tilde \mu^\delta$ has zero slope,  
$$
\langle g(\cdot, x)\rangle_{\tilde \mu^\delta} = \langle \phi(x) - \phi(0)\rangle_{\tilde \mu^\delta}
=0, \quad
\forall x\sim 0.
$$
Since $g$ satisfies all conditions of Lemma \ref{l.meanchar},  we have $g(\cdot, x)=0$ $\tau$-a.s.
\end{proof}

\section{Qualitative homogenization}
\label{s.homog}
\subsection{Coarse-grained coefficients}

In this section, we introduce two subadditive energy quantities related to the
variational formulation of the PDE~\eqref{e.rwrcf}.  These quantities are analogous to the ones found in~\cite{AKMbook,AKbook}.  They
represent, respectively, the energy of the Dirichlet problem with an approximately affine boundary data and that of the Neumann problem with an approximately constant boundary 
flux (some modification will be made on the edges with very large or very small conductances).  These quantities are approximately \emph{subadditive} and therefore converge as the side length of the cube becomes large. The main focus of this section is to show that these quantites converge to quadratic functions with positive definite coefficients by applying the percolation estimates above.  

\subsubsection{Good cubes}
The diffusion matrix associated with the PDE~\eqref{e.rwrcf} is degenerate elliptic: in principle $\a(e)$ can be very close to $0$ or $\infty$.  In this section we construct the good cubes based on the percolation bounds in Section~\ref{s.perco},  which implies that beyond some (random) scale, the effective conductance have uniform upper and lower bounds. This allows us to prove homogenize beyond this random scale.

\smallskip

We first introduce the notion of a ``good'' cube.  A lattice animal is a connected subset of $\Z^d$ that contains the origin.  The set of lattice animals of size $n$ is denoted by $A(n)$.  The connectivity constant of the lattice animals in $\Z^d$ is given by the constant $\alpha$ such that 

\begin{equation*}
\alpha = \lim _{n\to \infty}\frac 1n \log |A(n)|,
\end{equation*}
the limit exists by subadditivity.  Let $p_0\in (0,\frac 1{2\alpha}) $ be a sufficiently small constant, to be determined later in the proof of Lemma \ref{l.goodS}.  By Proposition~\ref{p.  percolation},  there exists $M_0 < \infty$ and $K \in \N$,  such that for any finite edge set $\mathcal{C}\subset \mathcal E(\Zd)$,  
\begin{equation}
\label{e.verysub}
\biggl \langle \prod_{e \in\mathcal{C} }  \indc_{\{\tau_e \ge KM_0\}} \!\biggr \rangle_{\!\!\mu_{\tau}^\delta}
\leq p_0^{|\mathcal{C} |} .
\end{equation}
From now on we assume $\delta=0$ (and omit the dependence in $\delta$).  

Define $\mathcal C_{>}:= \{(x,y) \in \Z^d \times \Z^d: x \sim y,  \a((x,y)) > KM_0 \text{ or } \a((x,y)) < (KM_0)^{-1}\}$. For $x\in \Z^d$,  denote by $\mathcal C_{>}(x)$ the connected component of $\mathcal C_{>}$ that contains $x$.  Denote by $x\sim y$ if $y\in\mathcal C_{>}(x)$, then $\sim$ forms an equivalence relation. We write $\mathcal C_{>} = \bigcup_{x\in \Zd} \mathcal C_{>}(x)$ and $[\mathcal C_{>}] = \mathcal C_{>}\setminus \sim$.   In other words,  $[\mathcal C_{>}]$ contains a representative point in each connected component.  We also define the boundary set $\partial \mathcal C_{>}(x_0):= \{x \in \Z^d \setminus  \mathcal C_{>}: x \sim y \text{ for some } y\in  \mathcal C_{>}(x_0)\}$.  
For $n\in\N$ denote by $\cu_n := [-3^n, 3^n]\cap \Zd$.  For every $U\subset \Zd$,  define $\diam(U)$ the graph diameter of $U$.

\begin{definition}[Good cube] \label{d.good}

For $z\in\Zd$ and $n\in\N$, we say that the cube $z + \cu_n$ is a \emph{good cube} provided that the following  conditions are satisfied:
\begin{itemize}

\item Rarity of high conductances: 
\begin{align}
\label{e.d+1moment}
\diam ( \mathcal C_{>} \cap  (z+ \cu_{n+1}) )^{d+2} \leq \frac 1 {100} |\cu_n | \,,
\end{align}
where 
\begin{align*}
\diam ( \mathcal C_{>} \cap ( z+ \cu_{n+1}) )^{d+2} :=
\sum_{x\in [\mathcal C_{>} \cap ( z+ \cu_{n+1})]} (\diam ( \mathcal C_{>}(x))^{d+2}.
\end{align*}
Moreover,  for any $p>0$,  there exists $C_p <\infty$, such that 
\begin{align}
\label{e.pmoment}
\diam ( \mathcal C_{>} \cap  (z+ \cu_{n+1}) )^{p} \leq C_p |\cu_n | 
\end{align}

\item  Upper bound for the inverse moments:
for every $p\ge 1$, 
\begin{align}
\bigg(
\frac{1}{|\cu_n|}\sum_{e\in \mathcal E( \cu_n)} 
\a^{-p}(e)
\bigg)^{\!\!-\frac 1p}
\geq \frac12\left\langle 
\a^{-p}(e)
\right\rangle^{\!-\frac 1p}_{\mu_\tau},
\label{e.invmom}
\end{align}

\end{itemize}
\end{definition}

The rarity of high conductances assumption follows from the percolation bounds Proposition~\ref{p.  percolation} and~\eqref{e.verysub},  and the inverse moments assumption follows from Corollary~\ref{c.mgf} and the ergodic theorem.  Moreover,  we conclude that,  
\begin{itemize}

\item There exists a minimal scale~$\S$ which is finite~$\P$--almost surely:
\begin{align*}
\P \bigl[ \S < \infty \bigr] = 1\,.
\end{align*}

\item Cubes are good on scales larger than~$\S$:
\begin{align} \label{e.ass.minimal}
n\in \N, \ 3^n\geq \S 
\quad \implies \quad  
\cu_n \mbox{ is a good cube.}
\end{align}

\end{itemize}
Regarding the edges with high conductance,  we may use the percolation estimate~\eqref{e.verysub} to obtain the stochastic  integrability of the minimal scale at which \eqref{e.d+1moment} and \eqref{e.pmoment} hold.  Define $\S_1$ to be the minimal scale such that for all $n \in\N, \ 3^n\geq \S_1 $ ,  $\cu_n$ satisfy \eqref{e.d+1moment} and \eqref{e.pmoment}.

\begin{lemma}
\label{l.goodS}
There exists $n_0\in \N$ and $\kappa>0$, such that for every $m\in\N$,  we have
\begin{align} 
\label{e.si}
\P[\S_1 > 3^{n_0+m}] \leq e^{-\kappa 3^{m/2}}.
\end{align}

\end{lemma}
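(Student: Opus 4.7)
I would reduce the lemma to a single-scale tail estimate of the form
\[
\P[\cu_n \text{ fails \eqref{e.d+1moment} or \eqref{e.pmoment}}] \leq e^{-c\, 3^{n/2}}
\]
holding uniformly in $n$, and then apply a union bound. Since by definition $\{\S_1 > 3^{n_0+m}\} \subset \bigcup_{n \geq n_0+m}\{\cu_n \text{ is bad}\}$, the super-geometric sum satisfies
\[
\sum_{n \geq n_0+m} e^{-c\, 3^{n/2}} \leq 2\, e^{-c\, 3^{(n_0+m)/2}} \leq e^{-\kappa \, 3^{m/2}}
\]
after choosing $n_0$ large enough so that $c\, 3^{n_0/2}$ absorbs the desired constant $\kappa$.

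The core estimate is a stretched-exponential tail for the diameter of a single bad-edge cluster. Enumerating self-avoiding paths of length $k$ emanating from a fixed vertex $v$, of which there are at most $(2d)^k$, and applying \eqref{e.verysub} to each path, I obtain
\[
\P\bigl[\diam \mathcal C_>(v) \geq k\bigr] \leq C(2dp_0)^k.
\]
By Proposition~\ref{p. percolation} and the fact that $\alpha(K,M) \to K$ as $M \to \infty$, I may take $M_0$ large enough that $p_0$ is arbitrarily small; in particular $2dp_0<1$, and consequently $\E[(\diam \mathcal C_>(0))^p \mathbf{1}\{0 \in \mathcal C_>\}] \leq m_p$ with $m_p$ arbitrarily small.

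I split the failure event using the maximum cluster diameter in $\cu_{n+1}$: write $\{\cu_n \text{ bad}\} \subset B_n^{\max} \cup B_n^{\mathrm{bulk}}$ where $B_n^{\max} = \{\max_i \diam \mathcal C_i \geq 3^{n/2}\}$. For $B_n^{\max}$, a union bound over $v \in \cu_{n+1}$ together with the cluster tail gives
\[
\P[B_n^{\max}] \leq |\cu_{n+1}|\cdot C(2dp_0)^{3^{n/2}} \leq e^{-c_1\, 3^{n/2}}
\]
for $n$ large, absorbing the polynomial prefactor. On $B_n^{\mathrm{bulk}}$ all cluster diameters are below $3^{n/2}$, so I decompose dyadically $\sum_i (\diam \mathcal C_i)^p \lesssim \sum_{j: 2^j \leq 3^{n/2}} 2^{j(p+1)} N_{2^j}$, where $N_k$ is the number of clusters of diameter at least $k$ meeting $\cu_{n+1}$. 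From \eqref{e.verysub} one derives (by expanding $\prod_e (1+(e^\lambda-1)\mathbf{1}\{e\text{ bad}\})$ over subsets of edges and invoking lattice-animal enumeration) the Chernoff moment-generating bound
\[
\E[e^{\lambda N_{k}}] \leq \exp\bigl(C (e^\lambda-1)(2dp_0)^{k}|\cu_{n+1}|\bigr),
\]
which, after tuning $p_0$ so that $\E[\sum_j 2^{j(p+1)} N_{2^j}] < \tfrac{1}{200}|\cu_n|$ (respectively $< \tfrac12 C_p|\cu_n|$ for \eqref{e.pmoment}), yields concentration $\P[N_{2^j} > 2\E[N_{2^j}]] \leq e^{-c_2 |\cu_{n+1}|(2dp_0)^{2^j}}$. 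Taking a union bound over the $O(\log n)$ dyadic scales produces $\P[B_n^{\mathrm{bulk}}] \leq e^{-c_3\, 3^{n/2}}$, and combining with the $B_n^{\max}$ bound completes the single-scale estimate.

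\textbf{Main obstacle.} The delicate step is concentrating the dependent dyadic sums $\sum_j 2^{j(p+1)} N_{2^j}$: because $N_{2^j}$ is determined by whole clusters rather than single edges, it depends on overlapping subsets of edges across scales, so naive Chernoff for independent variables does not apply. The resolution is that \eqref{e.verysub} is precisely the joint-probability bound needed to stochastically dominate the set of bad edges by an i.i.d.\ Bernoulli($p_0$) percolation process on $\mathcal E(\Zd)$ for every increasing cylinder event; classical subcritical-percolation tools (lattice-animal enumeration combined with the Chernoff MGF bound above) then go through and give the required concentration. Notably this step does not require any quantitative ergodicity of $\mu_\tau^\delta$—only the pointwise bound \eqref{e.verysub}.
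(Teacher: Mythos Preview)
Your overall architecture---reduce to a single-scale bad-cube probability, split off the event $B_n^{\max}$ that some cluster exceeds diameter $3^{n/2}$, then concentrate the bulk sum---is exactly the paper's strategy. The paper also passes to subcritical Bernoulli percolation via \eqref{e.verysub}, bounds $\P[B_n^{\max}]$ by a union bound over vertices and lattice animals, and then handles the remaining sum on $(B_n^{\max})^c$.

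The gap is in your concentration step for the bulk. The identity you invoke, expanding $\prod_e (1+(e^\lambda-1)\mathbf{1}\{e\text{ bad}\})$, computes the moment generating function of the \emph{total number of bad edges}, not of $N_k$; so it does not yield the displayed bound $\E[e^{\lambda N_k}] \leq \exp(C(e^\lambda-1)(2dp_0)^k|\cu_{n+1}|)$. Writing $N_k\le \sum_v \mathbf{1}\{\diam\mathcal C(v)\ge k\}$ does not help directly either: those indicators are increasing events which, under an FKG measure, are \emph{positively} correlated, so the product MGF bound goes the wrong way, and disjoint-occurrence (BK) arguments become awkward because several $v$'s can sit in the same cluster. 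The paper sidesteps this entirely. On the event $E_{n/2}=(B_n^{\max})^c$ every cluster has diameter at most $3^{n/2}$, so for the dominating i.i.d.\ measure the subcube statistics $|\cu_{n/2}|^{-1}\diam(\mathcal C(z+\cu_{n/2}))^{d+2}$ are measurable with respect to disjoint edge sets once the centers $z$ are at distance $>4\cdot 3^{n/2}$; splitting the family $\{z+\cu_{n/2}\}$ into $4^d$ checkerboards gives genuine independence within each board, and the single-cube tail $\diam(\mathcal C_x)^{d+2}=\mathcal O_{1/(d+2)}(Cp_0^{d+2})$ feeds into the off-the-shelf concentration lemma \cite[Lemma~A.11]{AKMbook} for sums of i.i.d.\ variables with stretched-exponential tails. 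This replaces your dyadic-scale MGF computation and is what actually delivers the $e^{-c\,3^{n/2}}$ rate; I recommend you adopt it. (Note also that your final assertion---that \eqref{e.verysub} alone yields stochastic domination by i.i.d.\ Bernoulli$(p_0)$ for \emph{all} increasing events---is not a general fact: two perfectly anticorrelated indicators each of mean $p$ satisfy the joint bound $p^{|S|}$ but fail domination. The paper asserts the domination as well; whatever additional structure justifies it, it is not merely \eqref{e.verysub}.)
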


To prove the lemma, we first introduce some notations.  For a random variable $X$ and $s, \theta \in (0,\infty)$,  we write $X= \mathcal O_s(\theta)$ to mean that 
\begin{align*} 
\E\bigl[e^{(\theta^{-1}X)^s} \bigr] \le 2.
\end{align*} 
 For a random variable $X$ and $s\in (1,2], \theta \in (0,\infty)$,  we write $X= \bar{ \mathcal O}_s(\theta)$ to mean that 
 \begin{align*} 
\log \E\bigl [e^{\lambda \theta^{-1}X} \bigr] \le \lambda^2 \vee |\lambda|^{\frac s{s-1}}.
\end{align*} 

\begin{proof}
Using the percolation estimate~\eqref{e.verysub},  $ \mathcal C_> $ is stochastically dominated by a $p_0$-Bernoulli percolation cluster. It suffices to prove the stochastic integrability for $p_0$-percolation clusters.  Given $x\in\Zd$,  denote by $\mathcal C_x$ the $p_0$-percolation cluster that contains $x$.  Using a union bound,  
$$
\P[\diam( \mathcal C_{x}) =k ] \le
\sum_{\mathcal A\ni x,  |\mathcal A|= k} \P(\mathcal C_x \supseteq\mathcal A )
\le
Cp_0 e^{-Ak},
$$
 for some $A>0$,  where the sum is over all lattice animals that contain $x$.  In particular, by taking $p_0$ sufficiently small, we may assume $A>1 $.  This implies that  $\diam( \mathcal C_{x}) = \mathcal O_1(Cp_0)$.  Therefore $\diam( \mathcal C_{x})^{d+2} = \mathcal O_{\frac 1{d+2}}(Cp_0^{d+2})$.  We denote,  for every cube $\cu$ and $p\in\N$,
 $$
 \diam( \mathcal C(\cu))^p:= \sum_{x\in\cu} \diam( \mathcal C_{x}) ^p.
 $$
 Thus for any $z\in \cu_n$,  $\frac 1{|\cu_{n/2}|} \diam( \mathcal C(z+\cu_{n/2}))^{d+2} = \mathcal O_{\frac 1{d+2}}(Cp_0^{d+2})$. 

Denote the event 
\begin{align*} 
E_{n/2}:= \bigl\{ \text{ there is no  } p_0 \text{ cluster in } \cu_n  \text{ with diameter larger than }  3^{n/2}\bigr\}\,.
\end{align*} 
A similar union bound then yields $\P[E_{n/2}^c] \le e^{-\kappa 3^{n/2}}$ for some $\kappa>0$.  Moreover, given the event $E_{n/2}$,  let $z_1, z_2\in \cu_n$ such that $\dist(z_1+\cu_{n/2}, z_2+\cu_{n/2}) > 4\cdot3^{n/2}$,  then $\frac 1{|\cu_{n/2}|} \diam( \mathcal C(z_1+\cu_{n/2}))^{d+2}  $ and $\frac 1{|\cu_{n/2}|} \diam( \mathcal C(z_2+\cu_{n/2}))^{d+2}  $ are independent,  as the two random variables are measurable with respect to the edges in $z_1+2\cu_{n/2}$ and $z_2+2\cu_{n/2}$ respectively.

We then apply~\cite[Lemma A.11]{AKMbook} separately to $4^d$ checkerboards of $\{(z+\cu_{n/2}): z\in 3^{n/2} \Zd\cap \cu_{n+1}\}$ to conclude that given the event $E_{n/2}$,
 \begin{align*} 
\sum_{x\in [\mathcal C_{>} \cap ( z+ \cu_{n+1})]} (\diam ( \mathcal C_{>}(x))^{d+2}
 \le_{st}
\sum_{z\in 3^{n/2}\Z\cap \cu_{n+1}} (\diam ( \mathcal C(z+\cu_{n/2}))^{d+2}
\le
4^d| \cu_{n/2} | \bar { \mathcal O} _{\frac 1{d+2}} (C3^{nd/4} p_0^{\frac{d+2}2}).
 \end{align*} 
 This implies 
  \begin{align*} 
  \frac{\diam ( \mathcal C_{>} \cap  (\cu_{n}) )^{d+2} }{|\cu_n|}
  =
C3^{-nd/4}  \bar { \mathcal O} _{\frac 1{d+2}} ( p_0^{\frac{d+2}2})\,.
  \end{align*} 
  By choosing $p_0$ sufficiently small, this implies that given $E_{n/2}$,  \eqref{e.d+1moment} holds.  
  
Fix this $p_0$,  the same argument also yields the estimate \eqref{e.pmoment} for $\frac{\diam ( \mathcal C_{>} \cap  (\cu_{n}) )^{p} }{|\cu_n|}$.This concludes~\eqref{e.si}. 
\end{proof}

Using the inverse moments bound~\eqref{e.invmom},  we may establish a Poincar\'e inequality at large-scales.  For every $U\subset \Zd$,  define  $H^1_\a(U)$ the set of measurable functions  $w:U\to \R$ with respect to the norm
\begin{equation*}
\| w \|_{H^{1}_\a(U)}
:=
\Biggl(
\sum_{e\in \mathcal E(U)}
\nabla w(e) \cdot\a(e) \nabla w(e)
+
\sum_{x\in U} |w(x)|^2\Biggr)^{\!\sfrac12}
 \,.
\end{equation*}
We also define
\begin{equation*}
\| w \|_{H^{1}(U)}
:=
\Biggl(
\sum_{e\in \mathcal E(U)}
(\nabla w(e) )^2
+
\sum_{x\in U} |w(x)|^2\Biggr)^{\!\sfrac12}
 \,.
\end{equation*}
And $H_0^1(U)$ the space of functions $u\in H^{1}(U)$ which satisfy $u=0$ on $\partial U$. 

\begin{proposition}[{Large-scale Poincar\'e inequality.}]
\label{p.poincare}
There exists a random variable~$\Sp$ satisfying  
\begin{align}
\label{e.Sfinite}
\P \left[ \Sp < \infty \right] = 1,
\end{align}
and an absolute constant $C<\infty$ 
such that, for every triadic cube~$\cu$ with $0\in \cu$ and $\diam(\cu) \geq \Sp$, we have, for every~$u\in H^1_\a(3\cu)$, 
\begin{equation}
\label{e.ass.poincare}
\inf_{s\in\R}
\frac{1}{|\cu|}\sum_{x\in \cu} 
| u (x)- s|^2
\leq 
C
\diam(\cu)^2
\frac{1}{|\cu|} \sum_{e\in \mathcal E(\cu)} 
\left| \a^{\frac12} (e)\nabla u(e) \right|^{2}
\,.
\end{equation}
\end{proposition}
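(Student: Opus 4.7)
I would take $\Sp$ to be a large multiple of the minimal good-cube scale $\S$ introduced after Definition~\ref{d.good}, chosen so that both the rarity bound \eqref{e.d+1moment} and the inverse-moment bound \eqref{e.invmom} hold on every triadic cube containing $\cu$ (and on $3\cu$, for the detour argument below). The a.s.\ finiteness $\P[\Sp < \infty] = 1$ follows from the stretched-exponential tail of $\S_1$ in Lemma~\ref{l.goodS} together with the ergodic theorem applied to Corollary~\ref{c.mgf}, which ensures \eqref{e.invmom} on all sufficiently large cubes.

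The plan is to reduce the weighted Poincar\'e inequality to the standard (unweighted) discrete Poincar\'e by detouring around the sparse clusters of edges of abnormally small conductance. Introduce $\mathcal{C}_- := \{e \in \mathcal{E}(\Zd) : \a(e) < (KM_0)^{-1}\} \subseteq \mathcal{C}_>$ and split $\mathcal{E}(\cu) = \mathcal{G} \sqcup \mathcal{B}$ with $\mathcal{G} := \mathcal{E}(\cu) \setminus \mathcal{C}_-$ and $\mathcal{B} := \mathcal{E}(\cu) \cap \mathcal{C}_-$. On the good edges $\mathcal{G}$ the pointwise bound $|\nabla u(e)|^2 \leq KM_0 \, \a(e) |\nabla u(e)|^2$ is immediate. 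The main step is to handle $\mathcal{B}$: for each connected component $C_\alpha$ of $\mathcal{C}_-$ meeting $\cu$ and each edge $e = (x,y) \in C_\alpha$, I would choose a detour path $\gamma_e$ from $x$ to $y$ through the good-edge graph $\mathcal{G}$, contained in a shell of width $O(\diam(C_\alpha))$ around $C_\alpha$, and arranged so that each good edge appears in $O(1)$ detour paths. Cauchy-Schwarz then yields
\begin{equation*}
|\nabla u(e)|^2 \leq |\gamma_e| \sum_{e' \in \gamma_e} |\nabla u(e')|^2 \leq C \diam(C_\alpha) \cdot KM_0 \sum_{e' \in \gamma_e} \a(e') |\nabla u(e')|^2.
\end{equation*}
Summing over $e \in C_\alpha$ and $\alpha$, and invoking $\sum_\alpha \diam(C_\alpha)^{d+2} \leq \tfrac{1}{100} |\cu|$ together with $\max_\alpha \diam(C_\alpha) \leq C |\cu|^{1/(d+2)} = o(\diam(\cu))$ (from \eqref{e.d+1moment}), produces $\sum_{e \in \mathcal{B}} |\nabla u(e)|^2 \leq C KM_0 \sum_{e' \in \mathcal{E}(3\cu)} \a(e') |\nabla u(e')|^2$. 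Substituting into the standard discrete Poincar\'e $\inf_s |\cu|^{-1} \sum_{x \in \cu} |u(x) - s|^2 \leq C \diam(\cu)^2 |\cu|^{-1} \sum_{e \in \mathcal{E}(\cu)} |\nabla u(e)|^2$ completes the proof.

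The main obstacle I expect is arranging the detours so that each good edge lies in only $O(1)$ paths, and so that they stay inside $\cu$: the statement involves $\mathcal{E}(\cu)$ on the right, but a cluster $C_\alpha$ lying near $\partial \cu$ may force detours to exit $\cu$, which is precisely why the hypothesis $u \in H^1_\a(3\cu)$ is imposed. A concentric-shell construction around each $C_\alpha$ works provided the clusters are pairwise well separated and each has diameter $\ll \diam(\cu)$; the necessary separation and slack follow from the $p$-moment bound \eqref{e.pmoment} applied with $p$ sufficiently large. Clusters near $\partial \cu$ can be handled either by routing detours through $\mathcal{E}(3\cu) \setminus \mathcal{E}(\cu)$ (absorbing a boundary correction whose total contribution is small by the rarity condition), or by using a one-step inflation argument that derives the stated inequality on $\cu$ from its analogue on slightly larger and smaller cubes.
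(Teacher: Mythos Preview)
Your approach is genuinely different from the paper's and, as written, has a real gap.

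The paper does not use any detour or percolation geometry here. It applies the Sobolev--Poincar\'e inequality with the subcritical exponent~$2^*<2$ (namely $2^*=\tfrac{2d}{d+2}$ for $d\ge 3$, $2^*=3/2$ for $d=2$), writes $|\nabla u|^{2^*}=\a^{-2^*/2}|\a^{1/2}\nabla u|^{2^*}$, and then uses H\"older with exponents $\tfrac{2}{2^*}$ and $\tfrac{2}{2-2^*}$ to separate the weight from the gradient:
\[
\inf_s \frac{1}{|\cu|}\sum_{x\in\cu}|u-s|^2
\le C\diam(\cu)^2
\biggl(\frac{1}{|\cu|}\sum_e \a^{-\frac{2^*}{2-2^*}}\biggr)^{\!\frac{2}{2^*}-1}
\frac{1}{|\cu|}\sum_e \a|\nabla u|^2.
\]
The first factor is then controlled by the inverse-moment condition~\eqref{e.invmom} and Corollary~\ref{c.mgf}. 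Only the averaged bound on $\a^{-p}$ is needed; the rarity condition~\eqref{e.d+1moment} plays no role.

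Your route, by contrast, tries to prove the much stronger pointwise-type estimate $\sum_{e\in\mathcal E(\cu)}|\nabla u(e)|^2 \le C\sum_{e'\in\mathcal E(3\cu)}\a(e')|\nabla u(e')|^2$ and then feed it into the ordinary $L^2$ Poincar\'e inequality. The claimed ``$O(1)$ overlap'' of detour paths is not achievable in general: if a bad component $C_\alpha$ is a solid block of side~$D$, it contains $\sim D^d$ bad edges, each requiring a detour of length $\sim D$, so the total detour edge-usage is $\sim D^{d+1}$, while the surrounding shell has only $\sim D^d$ good edges. Hence each shell edge is used $\sim D$ times, and after the Cauchy--Schwarz step you pick up an extra factor $\diam(C_\alpha)^2$ in front of $\sum_{e'\in\text{shell}_\alpha}\a(e')|\nabla u(e')|^2$. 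Summing over $\alpha$ then yields a constant of order $\max_\alpha \diam(C_\alpha)^2$, which by~\eqref{e.d+1moment} can be as large as $|\cu|^{2/(d+2)}$ and is not uniform in the cube. The separation and $p$-moment considerations you raise concern interactions \emph{between} clusters, but the obstruction already appears for a single cluster.

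One can salvage a detour strategy by building it directly into the Poincar\'e proof (choosing, for each pair $x,y$, a path through good edges and counting edge multiplicities globally), but that is a different argument from the two-step reduction you wrote. The paper's H\"older argument sidesteps all of this and uses only~\eqref{e.invmom}.
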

\begin{proof}
Define $2^* := \frac{2d}{d+2}$ for $d\ge 3$ and $2^* := 3/2$ for $d=2$.  
We observe that by applying the Sobolev-Poincar\'e inequality to any $u$, 
\begin{align*}
\inf_{s\in\R}
\frac{1}{|\cu|}\sum_{x\in \cu} 
| u (x)- s |^2
&
\leq 
C\diam(\cu)^2
\bigg(
\frac{1}{|\cu|}\sum_{e\in \mathcal E( \cu)} 
\left| \nabla u (e)\right|^{2^*}
\bigg)^{\frac{2}{2^*}} 
\\ &
\leq 
C \diam(\cu)^2
\bigg(
\frac{1}{|\cu|}\sum_{e\in \mathcal E( \cu)} 
\a^{-\frac {2^{*}}{2}}(e)
\left| \a^{\frac12}(e) \nabla u(e) \right|^{2^*}
\bigg)^{\frac{2}{2^*}}
\\ & 
\leq
C\diam(\cu)^2
\bigg(
\frac{1}{|\cu|}\sum_{e\in \mathcal E( \cu)} 
\a^{-\frac {2^*}{2-2^*}}(e)
\bigg)^{\frac 2 {2^*} -1}
\bigg(
\frac{1}{|\cu|}\sum_{e\in \mathcal E( \cu)} 
\left| \a^{\frac12}(e) \nabla u(e) \right|^{2}
\bigg)\,,
\end{align*}
where we applied the H\"older inequality with $q= \frac 2 {2^*}$ and $p = \frac 2 {2-2^*}$ to obtain the last line.  
Notice that,  if we have that the cube~$\cu$ past a minimal scale $\mathcal S$ in Definition~\ref{d.good},  then  by the ergodic theorem,  we have for every $p\ge 1$, 
\begin{align*}
\bigg(
\frac{1}{|\cu|}\sum_{e\in \mathcal E( \cu)} 
\a^{-p}(e)
\bigg)^{-\frac 1p}
\geq \frac12\left\langle 
\a^{-p}(e)
\right\rangle^{-\frac 1p}_{\mu_\tau},
\end{align*}
Apply the above line with $p=\frac {2^*}{2-2^*}$,  we have 
\begin{align*}
\inf_{s\in\R}
\frac{1}{|\cu|}\sum_{x\in \cu} 
| u (x)- s |^2
\leq 
\frac{C \diam(\cu)^2}{\Bigl \langle 
\a^{-\frac {2^*}{2-2^*}}(e)
\Bigr \rangle_{\mu_\tau}^{1-\frac 2{2^*}}}
\bigg(
\frac{1}{|\cu|}\sum_{e\in \mathcal E( \cu)} 
\left| \a^{\frac12}(e) \nabla u(e) \right|^{2}
\bigg)
\,.
\end{align*}
The conclusion follows from the bound on the inverse moments of $\a$ stated in Corollary~\ref{c.mgf}.
\end{proof}

It is clear that we can take $\S_P$ to be the minimal scale for the good cube, thus have the same stochastic integrability~\eqref{e.si}.

\subsubsection{Subadditive quantities}

We now define subadditive quantities with respect to the Gibbs measure $\mu_\tau$. For every subset $U \subset \Z^d$,  define the set of all edges in $U$ by
\begin{equation*}
\mathcal{E}(U)
:=
\left\{ 
(x,y) \in U\times U \,:\, 
x\sim y, \
x \ll y
\right\}.
\end{equation*}
For $f : \Z^d \to \R$,  we define 
\begin{equation}
\label{e.hat}
\hat f(x) := 
\left\{
\begin{aligned}
& f(x),  \quad \text{ if }  x\in \Z^d \setminus  \mathcal C_{>},
\\
& \frac 1{|\partial \mathcal C_{>}(x)|} \sum_{y \in \partial \mathcal C_{>}(x)} f(y), \quad \text{ if }  x\in  \mathcal C_{>}
\,.
\end{aligned}
\right.
\end{equation}
Namely,  $\hat f$ replaces the value of $f$ on each connected cluster of large conductances by the average of $f$ over the boundary of the cluster.  

\smallskip

We define, for every subset~$U$ and~$p,q\in\Rd$, 

\begin{equation}
\label{e.nu}
\nu(U,p) = \inf_{v\in \hat{\ell_p }+ H_0^1(U)} \frac{1}{|U|} \sum_{e\in \mathcal{E}(U)}
\frac12 \nabla v \cdot \a \nabla v 
\end{equation}
and 
\begin{equation}
\label{e.nustar}
\nu^*(U,q) = \sup_{v\in H^1(U)} \frac{1}{|U|} \sum_{e\in \mathcal{E}(U)}
\left(-\frac12 \nabla v \cdot \a \nabla v  + q\cdot \nabla v \right)
\,.
\end{equation}
It is clear that the infimum and supremum above are (uniquely) attained, by noticing that $\a(e,e) = e^{\tau_e}>0$ with probability one, which gives coercivity of the energy functionals with respect to $H^1(U)$. 

We devote the rest of this section to collecting some basic properties of the quantities $\nu$ and $\nu^*$.  We show first that~$\nu$ and $\nu^*$ are actually quadratic polynomials and compute the first and second variations of their defining optimization problems. The following lemma is analogous to~\cite[Lemma 2.2]{AKMbook} and~\cite[Lemma 4.1]{AKbook}.  Given $U\subset \Zd$,  define 
\begin{equation}
\A(U):= \bigl\{u\in H^1(U): -\nabla\cdot \a \nabla u=0 \text { in } U\bigr\}\,.
\end{equation}
We also define,  for $w \in  H^1(U)$,
\begin{equation*}
\mathsf{E}_{U}\left[ w \right] = \frac{1}{|U|} \sum_{e\in \mathcal{E}(U)}
\frac12 \nabla w \cdot \a \nabla w.
\end{equation*}

\begin{lemma}
[Properties of $\nu$ and $\nu^*$]
\label{l.basicprops}
Fix a cube~$U \subseteq \Zd$. The quantities $\nu(U,p)$ and $\nu^*(U,q)$ and their respective optimizing functions $v(\cdot,U,p)$ and $u(\cdot,U,q)$ satisfy the following:

\begin{itemize}

\item \emph{Quadratic representation.}
There exist symmetric matrices $\ahom(U), \ahom_*(U) \in \R^{d\times d}$ such that
\begin{equation}
\label{e.quadrep}
\left\{
\begin{aligned}
& \nu(U,p)
= \frac12 p\cdot \ahom(U) p  \quad \forall p\in\Rd, \quad \mbox{and}
\\ & 
\nu^*(U,q) 
= \frac 12  q \cdot \ahom_*^{\,-1}(U)q  \quad \forall q\in\Rd. 
\end{aligned}
\right.
\end{equation}

\item \emph{First variation.} The optimizing functions are characterized as follows: $v(\cdot,U,p)$ is the unique element of $\hat \ell_p+H^1_0(U)$ satisfying 
\begin{align}
\label{e.firstvar.nu}
\sum_{e\in \mathcal{E}(U)} 
 \nabla v(e,\cdot) \a (e,\cdot) \nabla w(e,\cdot)
=
0
, \quad \forall w \in H^1_0(U);
\end{align}
$u(\cdot,U,q)$ is the unique element of $H^1(U)$ satisfying $\sum_{e\in \mathcal{E}(U)} 
 u(e,\cdot) =0$ and
\begin{align}
\label{e.firstvar.nustar}
\frac1{|U|} \! \sum_{e\in \mathcal{E}(U)} 
\nabla u(e,\cdot) \a (e,\cdot) \nabla w(e,\cdot)
=
\frac1{|U|} \sum_{e\in \mathcal{E}(U)} 
\nabla \ell_q(e)
\nabla w(e,\cdot) ,
 \quad \forall w \in H^1(U).
\end{align}

\item \emph{Second variation.}
For every~$w\in\hat \ell _{p}+H_{0}^{1}\left( U\right)$, 
\begin{equation}
\label{e.quadresp.nu}
\mathsf{E}_{U}\left[ w \right] -  \nu(U,p)
= 
\frac1{|U|} \sum_{e\in \mathcal{E}(U)} 
\frac 12 \nabla (v- w) \a (e,\cdot) \nabla (v-w)
\end{equation}
and, for every $w\in H^1(U)$, 
\begin{align}
\label{e.quadresp.nustar}
 \nu^*(U,q)- 
\biggl ( 
\frac1{|U|}\! \sum_{e\in \mathcal{E}(U)} \!
\nabla \ell_q(e)
\nabla w(e,\cdot) 
- 
\mathsf{E}_{U}[w] 
\biggr )
=
\frac1{|U|} \!\sum_{e\in \mathcal{E}(U)} \!
\frac 12 \nabla (u- w) \a (e,\cdot) \nabla (u-w).
\end{align}

\item \emph{Spatial averages of gradients and fluxes.}  For every $p,q\in\R$,  
\begin{equation}
\label{e.spatial}
\ahom_*^{-1} (U) q = \ahom_*^{-1} (U) \frac1{|U|} \sum_{e\in \mathcal{E}(U)}  \a \nabla u (e,q)=\frac1{|U|} \sum_{e\in \mathcal{E}(U)}  \nabla u (e,q)
\end{equation}

\item\emph{Lower bound for the energy.}  For every $u\in H^1(U)$,
\begin{equation}
\label{e.gradupper}
\frac 12 \biggl ( \frac 1{|U|}\sum_{e\in\mathcal E(U)}\nabla u(e)\biggr ) \cdot \ahom_*(U) \biggl ( \frac 1{|U|}\sum_{e\in\mathcal E(U)}\nabla u(e)\biggr )
\leq
\frac1{|U|} \sum_{e\in \mathcal{E}(U)} 
\frac 12 \nabla u \cdot \a (e,\cdot) \nabla u.
\end{equation}
\end{itemize}
\end{lemma}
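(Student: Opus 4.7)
The plan is to follow the standard template for such subadditive energy quantities, mirroring~\cite[Lemma 2.2]{AKMbook} and~\cite[Lemma 4.1]{AKbook}, exploiting that $\nu(U,p)$ and $\nu^*(U,q)$ are extrema of strictly convex, respectively strictly concave, quadratic functionals. The key structural observation is that $\a(e)=e^{\tau_e}>0$ almost surely, so on a fixed realization the energy $\mathsf{E}_U[\cdot]$ is a positive semidefinite quadratic form on $H^1(U)$ which is strictly coercive on $H^1_0(U)$ and on $H^1(U)/\R$ (the only $w\in H^1(U)$ with $\nabla w\equiv 0$ is a constant). Combined with the affine structure of the constraint $\hat\ell_p+H^1_0(U)$, this yields existence and uniqueness of the minimizer $v(\cdot,U,p)$ for $\nu$ and of the normalized maximizer $u(\cdot,U,q)$ for $\nu^*$, where the stated mean-zero normalization removes the additive-constant invariance.

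To obtain the first variation identities, I would differentiate the functionals along admissible perturbations. For $\nu(U,p)$, taking $w\in H^1_0(U)$ and setting $\tfrac{d}{dt}\big|_{t=0}\mathsf{E}_U[v+tw]=0$ yields~\eqref{e.firstvar.nu}. For $\nu^*(U,q)$, differentiating the concave functional along an arbitrary $w\in H^1(U)$ gives~\eqref{e.firstvar.nustar}. Uniqueness, both with boundary constraint and modulo constants, is immediate from the strict convexity of the quadratic part of the energy.

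Next, for the quadratic representation~\eqref{e.quadrep}, I would use that $p\mapsto v(\cdot,U,p)-\hat\ell_p$ is linear, by strict convexity together with linearity of the data $p\mapsto\hat\ell_p$. Writing $v=\hat\ell_p+\varphi_p$ with $\varphi_p\in H^1_0(U)$, I test~\eqref{e.firstvar.nu} against $\varphi_p$ itself to collapse $\mathsf{E}_U[v]$ to $\tfrac{1}{2|U|}\sum_e\nabla v\cdot\a\nabla\hat\ell_p$, which is manifestly a quadratic form in $p$; symmetry of $\ahom(U)$ follows by polarization of the bilinear form underlying $\mathsf{E}_U$. An analogous argument, using the Legendre-transform structure of the supremum, identifies $\ahom_*(U)$ for $\nu^*$. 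The second variation identities~\eqref{e.quadresp.nu} and~\eqref{e.quadresp.nustar} are then immediate: write $w=v+(w-v)$ (respectively $w=u+(w-u)$), expand the quadratic form, and use the first variation to annihilate the cross term.

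For the spatial averages~\eqref{e.spatial}, I would insert the affine test functions $w=\ell_{e_i}$, $i=1,\ldots,d$, into~\eqref{e.firstvar.nustar} and read off the two stated identities using the definition of $\ahom_*(U)$ from~\eqref{e.quadrep}. The lower bound~\eqref{e.gradupper} is then a Legendre-type inequality: given $u\in H^1(U)$, set $\bar p:=\tfrac{1}{|U|}\sum_e\nabla u(e)$, evaluate the supremum defining $\nu^*(U,q)$ at $q:=\ahom_*(U)\bar p$ with test function $u$, and identify the two sides using~\eqref{e.spatial} and~\eqref{e.quadrep}. There is no serious obstacle to any of this; the only subtlety is that, because $\a$ is not uniformly elliptic, coercivity of $\mathsf{E}_U$ holds only realization-by-realization and not uniformly in $\omega$, but the lemma is stated pointwise in $\omega$ and so this causes no trouble.
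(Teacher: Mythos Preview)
Your proposal is correct and follows essentially the same approach as the paper's proof: both derive the first and second variation identities by expanding the quadratic energy along admissible perturbations, deduce the quadratic representation from linearity of the optimizers in~$p$ (resp.\ $q$), obtain~\eqref{e.spatial} by testing~\eqref{e.firstvar.nustar} with affine functions and with~$u$ itself, and prove~\eqref{e.gradupper} via Legendre duality. The only cosmetic difference is that the paper packages the first and second variations into a single expansion in~$t$, whereas you separate them.
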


\begin{proof}

We fix a domain $U$ and $p,q\in\Rd$ and set $v:=v(\cdot,U,p)$, $u:=u(\cdot,U,q)$ to ease the notation.  
\smallskip

\emph{Step 1.} We prove the first and second variation formulas. Observe that, for~$t\in\R$ and $w\in H^1_0(U)$,  
\begin{align*}
\lefteqn{
 \frac{1}{|U|} \sum_{e\in \mathcal{E}(U)}
\frac12 \nabla (v+tw) \cdot \a \nabla (v +tw)
} \qquad & \\ &
= 
 \frac{1}{|U|} \sum_{e\in \mathcal{E}(U)}
\frac12 \nabla v \cdot \a \nabla v 
+
t \frac{1}{|U|} \sum_{e\in \mathcal{E}(U)}
 \nabla v \cdot \a \nabla w
 +
 t^2 \frac{1}{|U|} \sum_{e\in \mathcal{E}(U)}
\frac12 \nabla w \cdot \a \nabla w.
\end{align*}
Using that~$v$ is the minimizer,  we may divide by $|t|$ in the previous display and send~$t\to 0+$ as well as~$t\to 0-$ to find that the coefficient of $t$ in the previous display vanishes. This yields~\eqref{e.firstvar.nu} as well as 
\begin{equation*}
 \frac{1}{|U|} \sum_{e\in \mathcal{E}(U)}
\frac12 \nabla (v+tw) \cdot \a \nabla (v +tw)
= 
 \frac{1}{|U|} \sum_{e\in \mathcal{E}(U)}
\frac12 \nabla v \cdot \a \nabla v 
+
t^2 \frac{1}{|U|} \sum_{e\in \mathcal{E}(U)}
\frac12 \nabla w \cdot \a \nabla w.
\end{equation*}
As $\mathsf{E}_{U}\left[ w \right] = \frac{1}{|U|} \sum_{e\in \mathcal{E}(U)}
\frac12 \nabla w \cdot \a \nabla w$,  this also gives~\eqref{e.quadresp.nu}. The argument for~\eqref{e.firstvar.nustar} and~\eqref{e.quadresp.nustar} is similar and we omit it. 

\smallskip
\emph{Step 2.} We prove the quadratic representations. Observe that as a consequence of the first variations,  the map
\begin{equation}
\label{e.tildelinearmapping}
p\mapsto 
v(\cdot,U,p) \quad \mbox{and} \quad q\mapsto 
u(\cdot,U,q) \quad \mbox{are linear.}
\end{equation}
Using the definition~\eqref{e.nu} and~\eqref{e.nustar},  we deduce that $\nu(U,p)$ and $\nu^*(U,q)$ are quadratic forms in $p$ and $q$ respectively.   Namely,  there exist matrices $\ahom(U) $ and $\ahom_*(U)$ such that~\eqref{e.quadrep} holds.

\smallskip
\emph{Step 3.} We prove~\eqref{e.spatial}.  By testing the first variation~\eqref{e.firstvar.nustar} with $w = u(\cdot,  U, q)$,  and applying~\eqref{e.quadrep},   we conclude $\ahom_*^{-1} (U) q =\frac1{|U|} \sum_{e\in \mathcal{E}(U)}  \nabla u (e,q)$.  Applying~\eqref{e.firstvar.nustar} with $w = \ell_p$ yields $ q =  \frac1{|U|} \sum_{e\in \mathcal{E}(U)}  \a \nabla u (e,q)$.

\smallskip
\emph{Step 4.} We prove~\eqref{e.gradupper}.  By the definition of $\nu^*$,  we have that for every $u\in H^1(U)$,  if we set $p:= \frac 1{|U|} \sum_{e\in \mathcal E(U)} \nabla u(e)$ then we have 
\begin{multline*}
\frac 12 p\cdot \ahom_*(U) p = \sup_{q\in\Rd} \left(p\cdot q - \frac 12 q\cdot \ahom_*^{-1}(U)q \right)\\
= \sup_{q\in\Rd} \inf_{v\in H^1(U)}  \frac 1{|U|} \sum_{e\in \mathcal E(U)} \left( q\cdot (\nabla u -\nabla v) +\frac 12 \nabla v \cdot \a \nabla v \right)
\leq
\frac1{|U|} \sum_{e\in \mathcal{E}(U)} 
\frac 12 \nabla u \cdot \a (e,\cdot) \nabla u.
\end{multline*}
This completes the proof. 
\end{proof}

We show next that the energy quantities $\nu$ and $\nu^*$ are bounded above and below by quadratic functions,  if we go beyond a random scale. 

\begin{lemma}[Upper bound for $\nu$ and $\nu^*$]
\label{l.moment}
Let $\S$ be the minimal scale for good cubes in Definitoin \ref{d.good}.  Let $\cu\subset \Zd$ be a good cube, namely $\diam(\cu) > \S$.  Then there exists $C<\infty$,  such that for any $p,q\in\Rd$ , 
\begin{equation}
 \nu(\cu,p) \leq   C |p|^2
\end{equation}
and 
\begin{equation}
\nu^*(\cu,q) \leq   C|q|^2\,.
\end{equation}
\end{lemma}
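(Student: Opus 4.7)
The plan is to bound each quantity by plugging in a good test function into its variational characterization.  For $\nu(\cu,p)$ the natural choice is the function $\hat\ell_p$ itself (which lies in $\hat\ell_p + H^1_0(\cu)$ trivially), and for $\nu^*(\cu,q)$ we use a pointwise Young inequality.  The bound for $\nu$ is the nontrivial one because the conductances can be very large; the point is that $\hat\ell_p$ has been engineered to be constant on each bad cluster in $\mathcal C_>$ and hence has vanishing gradient on precisely those edges where $\a$ is uncontrolled.

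\textbf{Bound for $\nu$.}  I would partition $\mathcal E(\cu)$ into three types: (i) edges with neither endpoint in $\mathcal C_>$, (ii) edges with both endpoints lying in a common cluster $\mathcal C_>(x_0)$, and (iii) boundary edges, with exactly one endpoint in some cluster $\mathcal C_>(x_0)$ and the other in $\partial\mathcal C_>(x_0)$.  Crucially, an edge of type (iii) is \emph{not} in $\mathcal C_>$ (else both endpoints would sit in the same cluster), so $\a(e)\le KM_0$ on types (i) and (iii), while on type (ii) edges $\hat\ell_p$ is constant so $\nabla\hat\ell_p(e)=0$.  Thus
\begin{equation*}
\nu(\cu,p)\le \mathsf{E}_\cu[\hat\ell_p] \le \frac{KM_0}{2|\cu|}\Bigl(\sum_{e\text{ of type (i)}}|\nabla\ell_p(e)|^2 + \sum_{e\text{ of type (iii)}}|\nabla\hat\ell_p(e)|^2\Bigr).
\end{equation*}
Type (i) contributes at most $C KM_0|p|^2|\mathcal E(\cu)|/|\cu|\le C|p|^2$.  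On type (iii), $|\nabla\hat\ell_p(e)|\le |p|(1+\diam \mathcal C_>(x_0))$, and the number of boundary edges attached to a cluster is at most $C\diam(\mathcal C_>(x_0))^d$; summing gives a total of at most $C|p|^2\sum_{x\in[\mathcal C_>\cap\cu_{n+1}]}\diam(\mathcal C_>(x))^{d+2}$, which by the good-cube bound \eqref{e.d+1moment} is $\le C|p|^2|\cu|$.  Dividing by $|\cu|$ yields $\nu(\cu,p)\le C|p|^2$.

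\textbf{Bound for $\nu^*$.}  For each edge $e$ and each $v$, Young's inequality gives
\begin{equation*}
q\cdot\nabla v(e)-\tfrac12\a(e)|\nabla v(e)|^2\le \frac{|q|^2}{2\a(e)}.
\end{equation*}
Summing over $\mathcal E(\cu)$ and taking the supremum,
\begin{equation*}
\nu^*(\cu,q)\le \frac{|q|^2}{2|\cu|}\sum_{e\in\mathcal E(\cu)}\a^{-1}(e).
\end{equation*}
Since $\cu$ is a good cube, the inverse-moment bound \eqref{e.invmom} applied with $p=1$, together with the finiteness of $\langle\a^{-1}\rangle_{\mu_\tau}$ from Corollary~\ref{c.mgf}, yields $\frac{1}{|\cu|}\sum_e \a^{-1}(e)\le 2\langle\a^{-1}\rangle_{\mu_\tau}\le C$, and thus $\nu^*(\cu,q)\le C|q|^2$.

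The only real obstacle is the accounting for type (iii) edges in the $\nu$ bound: one must check that the worst-case gradient $\sim|p|\diam(\mathcal C_>(x_0))$ on each of $\sim\diam(\mathcal C_>(x_0))^d$ boundary edges sums exactly to the moment $\diam^{d+2}$ controlled by \eqref{e.d+1moment}.  Everything else is direct from the definitions and the percolation-based good-cube hypotheses.
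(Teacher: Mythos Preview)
Your proof is correct and follows essentially the same approach as the paper: plugging $\hat\ell_p$ into the $\nu$ variational problem and using the good-cube bound \eqref{e.d+1moment} on $\sum\diam^{d+2}$, and applying Young's inequality together with the inverse-moment bound \eqref{e.invmom} for $\nu^*$. Your pointwise application of Young's inequality for $\nu^*$ is slightly more direct than the paper's route through the first variation \eqref{e.firstvar.nustar}, but the underlying idea is identical.
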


\begin{proof}
We first show the upper bound for $\nu$.  Take $v = \hat\ell_p$ in the definition of $\nu$,  we have 
\begin{align*}
 \nu(\cu,p) &\leq   
 \frac{1}{|\cu|} \sum_{e\in \mathcal{E}(\cu)}
\frac12 \nabla  \hat\ell_p \cdot \a \nabla  \hat\ell_p
\\ &
\leq
 \frac{1}{|\cu|} \sum_{e\in \mathcal{E}(\cu) \setminus \mathcal C_>}
\frac12 \nabla  \ell_p \cdot \a \nabla  \ell_p
+ 
 \frac{1}{|\cu|} \sum_{e\in \mathcal{E}(\cu) \cap \partial \mathcal C_>}
\frac12 \nabla ( \hat\ell_p-\ell_p) \cdot \a \nabla ( \hat\ell_p-\ell_p)
\,.
\end{align*}
Note that for $e=(x,y)\in \mathcal{E}(\cu) \cap \partial \mathcal C_>$
\begin{equation}
\label{e.graddiff}
|\nabla \hat\ell_p(e)- \nabla \ell_p(e)|
\leq 
p \cdot \diam ( \mathcal C_>(x)).
\end{equation}
Using that the boundary of a set in $\Zd$ can be bounded by its volume,  we have $|\partial \mathcal C_>| \leq |  \mathcal C_>| \leq C_d \diam ( \mathcal C_>(x))^{d}$,  for every $x\in \partial \mathcal C_>$,  therefore
\begin{equation}
\label{e.energyub}
 \frac{1}{|\cu|} \sum_{e\in \mathcal{E}(\cu) \cap \partial \mathcal C_>}
\frac12 \nabla ( \hat\ell_p-\ell_p) \cdot \a \nabla ( \hat\ell_p-\ell_p)
\leq 
C_d |p|^2 \frac{\diam ( \mathcal C_>\cap \cu)^{d+2}}{|\cu|}.
\end{equation}
Using the definition of good cubes Definition~\ref{d.good},  we see that the above quantity is bounded by 
$\frac 1{100}C_d |p|^2$,  and the upper bound for $\nu$ follows.  

To show the upper bound for $\nu^*$, we notice that its first variation~\eqref{e.firstvar.nustar} implies 
\begin{equation}
\nu^*(\cu,q) = \frac 12 \frac1{|\cu|} \sum_{e\in \mathcal{E}(\cu)} 
\nabla \ell_q(e)
\nabla u (e,\cdot)
=\frac1{|\cu|} \sum_{e\in \mathcal{E}(\cu)} 
\frac 12 \nabla u(e,\cdot) \a (e,\cdot) \nabla u(e,\cdot)
\,.
\end{equation}
By Young's inequality, 
\begin{equation}
\frac1{|\cu|} \sum_{e\in \mathcal{E}(\cu)} 
\nabla \ell_q(e)
\nabla u (e,\cdot)
\leq 
\frac1{|\cu|} \sum_{e\in \mathcal{E}(\cu)} 
\Bigl  (  q \cdot \a^{-1} q +  \frac 14\nabla u(e,\cdot) \a (e,\cdot) \nabla u(e,\cdot)
\Bigr  )
\,.
\end{equation}
Therefore 
\begin{equation}
\nu^* (\cu,q) 
\leq 4q\cdot \Biggl( \frac1{|\cu|} \sum_{e\in \mathcal{E}(\cu)} \a^{-1} (e,\cdot) \Biggr) q
\,.
\end{equation}
By the definition of good cubes Definition~\ref{d.good},   $\frac1{|\cu|} \sum_{e\in \mathcal{E}(\cu)} \a^{-1} (e,\cdot) \leq 2\langle \a^{-1}(e,\cdot)\rangle _{\mu_\tau}\leq C_1$,  thus implies the upper bound for $\nu^*$.

\end{proof}

\begin{lemma}[Fenchel inequality]
\label{l.fenchel}
There exists $n_0\in \N$ and $\kappa>0$,  and a random minimal scale $\S_F$  such that for every $m\in\N$,  we have
\begin{align} 
\label{e.s1}
\P[\S_F > 3^{n_0+m}] \leq e^{-\kappa 3^{m/2}},
\end{align}
and that, for every~$n\in\N$ with $3^n > \S_F$,  and
for every~$p,q\in\Rd$, we have
\begin{equation} 
\label{e.fenchel}
\nu(\cu_n,p)+\nu^*(\cu_n,q) \geq p\cdot q - C  3^{-\frac12 n(d-2+ \kappa \frac{4-d}{2(d-1)})} |p||q| 
\,.
\end{equation}
\end{lemma}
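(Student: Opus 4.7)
The plan is to execute the Fenchel--Young duality strategy underlying quantitative homogenization, adapted to the modified affine boundary datum $\hat\ell_p$. Substituting the minimizer $v:=v(\cdot,\cu_n,p)$ of $\nu(\cu_n,p)$ as a trial function in~\eqref{e.nustar} yields
\begin{equation*}
\nu^*(\cu_n,q)\;\ge\;\frac{1}{|\cu_n|}\sum_{e\in\mathcal E(\cu_n)}\Bigl(q\cdot\nabla v(e)-\tfrac12\nabla v(e)\cdot\a(e)\nabla v(e)\Bigr)\;=\;q\cdot P_n\;-\;\nu(\cu_n,p),
\end{equation*}
where $P_n:=|\cu_n|^{-1}\sum_{e\in\mathcal E(\cu_n)}\nabla v(e)$. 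Because $v-\hat\ell_p\in H^1_0(\cu_n)$, telescoping along each coordinate line of $\cu_n$ shows that $P_n=|\cu_n|^{-1}\sum_e\nabla\hat\ell_p(e)$, so the Fenchel inequality reduces to the quantitative estimate $|P_n-p|\le C\,3^{-\sigma n}|p|$ with $\sigma=\tfrac12(d-2+\kappa(4-d)/(2(d-1)))$.

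\smallskip

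To bound $|P_n-p|$, write $g:=\hat\ell_p-\ell_p$, which is supported on $\mathcal C_{>}$ and obeys $|g(x)|\le |p|\diam(\mathcal C_{>}(x))$ on its support. A second telescoping reduces $\sum_e\nabla g(e)$ to a sum over the one-dimensional lines of $\cu_n$ whose boundary endpoints lie in some cluster $\mathcal C\subset\mathcal C_{>}$ meeting $\partial\cu_n$. Since a cluster of diameter $D$ contributes at most $O(D^{d-1})$ such lines per coordinate direction, each with a contribution of size at most $|p|D$, this yields
\begin{equation*}
|P_n-p|\;\le\;\frac{C|p|}{|\cu_n|}\sum_{\substack{\mathcal C\subset\mathcal C_{>}\cap\cu_{n+1}\\ \mathcal C\cap\partial\cu_n\neq\emptyset}}\diam(\mathcal C)^d\,.
\end{equation*}
The base capacity-type decay $3^{-n(d-2)/2}$ is obtained for $d\ge 4$ by combining the $(d{+}2)$-th moment bound~\eqref{e.d+1moment} (and the higher moments~\eqref{e.pmoment}) with a surface-to-volume argument that restricts attention to clusters touching the $(d-1)$-dimensional set $\partial\cu_n$. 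For the low-dimensional regime $d\in\{2,3\}$ this base rate is too weak, and the additional gain $3^{-n\kappa(4-d)/(4(d-1))}$ is extracted by choosing the parameter $p_0$ in~\eqref{e.verysub} so small that Proposition~\ref{p. percolation} furnishes exponential-type moments on $\diam(\mathcal C_{>}(x))$, which are then interpolated against the fixed moment~\eqref{e.d+1moment} via H\"older's inequality with a dimension-optimized exponent.

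\smallskip

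The main obstacle is precisely this low-dimensional regime, where the deterministic capacity-type rate $3^{-n(d-2)/2}$ is trivial ($d=2$) or borderline ($d=3$), so extra algebraic decay must be squeezed out of the subcriticality of $\mathcal C_{>}$ in order to propagate through the homogenization argument in the sequel. Once the estimate on $|P_n-p|$ is established, the minimal scale $\mathcal S_F$ is defined as the maximum of $\mathcal S$, $\mathcal S_1$, and the (random) scale at which the ergodic averages appearing in Definition~\ref{d.good} saturate; the stretched-exponential tail~\eqref{e.s1} then follows from Lemma~\ref{l.goodS}, since the maximum of finitely many random variables with stretched-exponential tails again has a stretched-exponential tail.
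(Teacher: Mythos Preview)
Your opening reduction is exactly the paper's: test $\nu^*(\cu_n,q)$ with the minimizer $v$ of $\nu(\cu_n,p)$, and reduce~\eqref{e.fenchel} to the estimate
\[
\Bigl|\,\frac{1}{|\cu_n|}\sum_{e\in\mathcal E(\cu_n)}\bigl(\nabla\hat\ell_p(e)-\nabla\ell_p(e)\bigr)\Bigr|
\;\le\; C|p|\,3^{-\sigma n}\,.
\]
Both you and the paper convert this to a boundary sum over $\partial\cu_n$ by telescoping. Where you diverge is in bounding that boundary sum. The paper does \emph{not} pass through the cluster sum $\sum_{\mathcal C\cap\partial\cu_n}\diam(\mathcal C)^d$ and then argue by cases on the dimension; instead it covers $\partial\cu_n$ by mesoscopic subcubes $z+\cu_m$ of a single intermediate scale $m$, applies a Poincar\'e-type inequality in each subcube to pass to $\|\nabla(\hat\ell_p-\ell_p)\|_{L^2(z+\cu_m)}$, invokes the good-cube bound~\eqref{e.d+1moment} on each $z+\cu_m$, and then \emph{chooses} $m=n(1-\kappa/(2(d-1)))$. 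This single choice of $m$ simultaneously produces the exponent $\sigma=\tfrac12\bigl(d-2+\kappa\frac{4-d}{2(d-1)}\bigr)$ and, by a union bound over the $3^{(n-m)(d-1)}$ boundary subcubes (each failing to satisfy~\eqref{e.d+1moment} with the tail of Lemma~\ref{l.goodS}), the stretched-exponential bound~\eqref{e.s1} for $\S_F$. The specific algebraic form of $\sigma$ is thus an artifact of this mesoscale balancing and of nothing else; your sketch of ``H\"older interpolation against~\eqref{e.d+1moment} with a dimension-optimized exponent'' does not explain where that exponent comes from, and indeed a direct cluster-sum bound would naturally produce a rate like $3^{-n}$ (in all dimensions) rather than the stated one.

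There is also a genuine error in your construction of the minimal scale. You set $\S_F=\max\{\S,\S_1,\ldots\}$ and then claim~\eqref{e.s1} follows because a maximum of random variables with stretched-exponential tails has a stretched-exponential tail. But $\S$ from Definition~\ref{d.good} includes the inverse-moment condition~\eqref{e.invmom}, which is established only via the ergodic theorem; the paper asserts merely $\P[\S<\infty]=1$ and never claims a quantitative tail for it. So your $\S_F$ would \emph{not} satisfy~\eqref{e.s1}. The paper avoids this by defining $\S_F$ \emph{only} in terms of the percolation-type condition~\eqref{e.d+1moment} holding on every boundary subcube $z+\cu_m$; this is why the mesoscale parameter $m$ has to be kept large enough (namely $m\ge n/2>n_0$) for Lemma~\ref{l.goodS} to apply subcube-by-subcube and for the union bound to close.
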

\begin{proof}
It is clear that any function~$w\in \ell_p+H^1_0(\cu_n)$ satisfies, for every $q\in\Rd$, 
\begin{equation} 
\frac1{|\cu_n|} 
\sum_{e\in \mathcal{E}(\cu_n)} 
\nabla \ell_q(e)\nabla w(e) = p\cdot q. 
\end{equation}
The lemma thus follows by testing the definition of~$\nu^*(\cu_n,q)$ with the minimizer of~$\nu(\cu_n,p)$,  which yields 
\begin{align*} 
\label{e.stokesie}
\frac1{|\cu_n|} 
\sum_{e\in \mathcal{E}(\cu_n)} 
\nabla \ell_q(e)\nabla v(e) = p\cdot q + \frac1{|\cu_n|} 
\sum_{e\in \mathcal{E}(\cu_n)} 
\nabla \ell_q(e) (\nabla \hat{ \ell_p}(e)- \nabla \ell_p(e)) \\
= p\cdot q +q\cdot \frac1{|\cu_n|}  \sum_{e\in \mathcal{E}(\cu_n)} (\nabla \hat{ \ell_p}(e)- \nabla \ell_p(e)).
\end{align*}
Notice that for $x \in \cu$,  
\begin{equation*} 
|\hat \ell_p(x) - \ell_p(x) | \leq 2p \diam(\mathcal C_{>} \cap 3\cu) .
\end{equation*} 
Let $\mathcal S_1$ be the minimal scale as in Lemma~\ref{l.goodS},  and $3^{n_0} = \mathcal S_1$. 
For $n_0 < m< n$,  denote by $\{z+\cu_{m}: z\in 3^m\Zd\cap \cu_n,  (z+\cu_{m})\cap \partial \cu_n \neq \emptyset\}$ the collection of the subcubes of size $3^m$ that touches $\partial \cu_n$.  Notice that the number of the boundary cubes of the form $z+\cu_m$ is $(3^{n-m})^{d-1}$.  We apply the Stokes Theorem and the Poincar\'e inequality  to conclude for every $m \ge n_0$,
\begin{align*} 
\frac1{|\cu_n|}  \Biggl|\sum_{e\in \mathcal{E}(\cu_n)} (\nabla \hat{ \ell_p}(e)- \nabla \ell_p(e)) \Biggr|
&= \frac1{|\cu_n|}  \Biggl|\sum_{x\in \partial \cu_n} ( \hat{ \ell_p}(x)-  \ell_p(x))\Biggr|
\leq
 \frac1{|\cu_n|} \Biggl |  \sum_z \Biggl ( \sum_{x\in z+\cu_{m}}| \hat{ \ell_p}(x)-  \ell_p(x)|\Biggr ) \Biggr|
\\
&\leq
\frac1{|\cu_n|}  \sum_z \Biggl(C 3^{2m} \sum_{e\in \mathcal{E}(z+\cu_{m})}  (\nabla \hat{ \ell_p}(e)- \nabla \ell_p(e))^2 \Biggr)^\frac 12 
\\ &
\leq 
C_d \frac{3^m }{3^{nd}} 3^{(n-m)(d-1)} \left( 3^{md}|p| ^2+ \diam ( \mathcal C_{>} \cap ( z+ \cu_{m}) )^{d+2} |p|^2) \right)^\frac 12
\,.
\end{align*} 
By the definition of $\mathcal S_1$,  $\diam ( \mathcal C_{>} \cap ( z+ \cu_{m}) )^{d+2} \le C3^{md}$,  therefore 
\begin{align}
\label{e.gradhatave}
\frac1{|\cu_n|}  \Biggl|\sum_{e\in \mathcal{E}(\cu_n)} (\nabla \hat{ \ell_p}(e)- \nabla \ell_p(e)) \Biggr|
\le
C \frac{3^m }{3^{nd}} 3^{(n-m)(d-1)}  3^{md/2}|p| \,.
\end{align} 
 Let $\S_F$ be the minimal scale at which all the cubes  in  $\{z+\cu_{m}: z\in 3^m\Zd\cap \cu_n,  (z+\cu_{m})\cap \partial \cu_n \neq \emptyset\}$ satisfy \eqref{e.d+1moment}. By the stochastic integrability \eqref{e.si},   we have 
$$
\P[\S_F >3^n]\leq  3^{-(m-n_0)\kappa} (3^{n-m})^{d-1}\,.
$$
Take $m = n(1- \frac\kappa{2(d-1)})\ge \frac n2 >n_0$,  we have 
\begin{align}
\label{e.S1}
\P[\S_F >3^n]\leq  3^{(-(\frac 12 - \frac \kappa{2(d-1)} )n +n_0)\kappa}\,,
\end{align}
which is summable in $n$. This implies $\S_F <\infty$ a.s.,  and for $3^n >\S_F$,  we use~\eqref{e.gradhatave} to conclude 
\begin{align} 
\frac1{|\cu_n|}  \Biggl|\sum_{e\in \mathcal{E}(\cu_n)} (\nabla \hat{ \ell_p}(e)- \nabla \ell_p(e))\Biggr|
\leq 
C_d \frac{3^m }{3^{nd}} 3^{(n-m)(d-1)} 3^{md/2}|p|
\leq C |p| 3^{-\frac12 n(d-2+ \kappa \frac{4-d}{2(d-1)})}
\label{e.gradavewiggle}
\end{align} 
and therefore the conclusion follows.  
\end{proof}

Combining the previous two lemmas,  we conclude the lower bound for $\nu$ and $\nu^*$ beyond some random scale. 
\begin{corollary}[Lower bound for $\nu$ and $\nu^*$]
\label{c.lowerbd}
There exists a random minimal scale $\S$ with $\P[S<\infty]=1$ and a  constant $\kappa>0$ such that the following statement holds.
For $3^n > \S$, let $\cu_n\subseteq\Zd$ be a cube.   Then,  
for every~$p,q\in\Rd$, 
\begin{equation*} 
 \nu(\cu_n,p) \geq c_1 |p|^2 -C  3^{-\frac12 n(d-2+ \kappa \frac{4-d}{2(d-1)})} |p|^2,
 \end{equation*} 
 and
 \begin{equation*} 
 \nu^*(\cu_n, q) \geq c_1 |q|^2 - C  3^{-\frac12 n(d-2+ \kappa \frac{4-d}{2(d-1)})}|q|^2.
 \end{equation*} 
\end{corollary}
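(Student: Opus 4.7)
The plan is to combine the Fenchel inequality from Lemma~\ref{l.fenchel} with the quadratic upper bounds from Lemma~\ref{l.moment} via a standard convex-duality optimization. First, I would take the minimal scale $\S$ in the corollary to be the maximum of the good-cube scale from Definition~\ref{d.good} (needed for Lemma~\ref{l.moment}) and the scale $\S_F$ from Lemma~\ref{l.fenchel}. By the stochastic integrability bounds~\eqref{e.si} and~\eqref{e.s1}, both scales are finite almost surely, so $\P[\S<\infty]=1$.

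For the lower bound on $\nu(\cu_n,p)$: given $3^n>\S$ and any $q\in\Rd$, rearranging~\eqref{e.fenchel} yields
\begin{equation*}
\nu(\cu_n,p) \geq p\cdot q - \nu^*(\cu_n,q) - C\, 3^{-\alpha n}|p||q|,
\end{equation*}
where for brevity I set $\alpha := \tfrac12(d-2 + \kappa(4-d)/(2(d-1)))$. Applying the upper bound $\nu^*(\cu_n,q) \leq C_*|q|^2$ from Lemma~\ref{l.moment} and then choosing $q = p/(2C_*)$ gives
\begin{equation*}
\nu(\cu_n,p) \geq \frac{|p|^2}{2C_*} - \frac{|p|^2}{4C_*} - \frac{C}{2C_*}\, 3^{-\alpha n}|p|^2 = \frac{|p|^2}{4C_*} - \frac{C}{2C_*}\, 3^{-\alpha n}|p|^2,
\end{equation*}
which is the claimed bound with $c_1 = 1/(4C_*)$. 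The lower bound on $\nu^*(\cu_n,q)$ is obtained by the symmetric argument: one uses $\nu(\cu_n,p) \leq C|p|^2$ from Lemma~\ref{l.moment} in~\eqref{e.fenchel} and optimizes by taking $p = q/(2C)$.

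The only point that deserves a quick verification is that $\alpha>0$ in all dimensions $d\geq 2$, so that the error term genuinely decays in $n$. For $d=2$ one has $\alpha = \kappa/2>0$; for $d=3,4$ the exponent is manifestly positive; and for $d\geq 5$ one has $\alpha \geq \tfrac12(d-2) - \tfrac14\kappa(d-4)/(d-1)$, which is positive for the (small) $\kappa$ inherited from Lemma~\ref{l.goodS}. Given the two preceding lemmas I do not anticipate any substantive obstacle: the argument is a routine Legendre-transform manipulation, with the random scale $\S$ inheriting its stochastic integrability directly from~\eqref{e.si} and~\eqref{e.s1}.
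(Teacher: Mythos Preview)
Your proposal is correct and is exactly the argument the paper has in mind: the paper states the corollary as an immediate consequence of ``combining the previous two lemmas'' (Lemma~\ref{l.moment} and Lemma~\ref{l.fenchel}) without spelling out the details, and your Legendre-transform optimization with $q=p/(2C_*)$ is the standard way to carry this out. Your extra check that the exponent $\alpha$ is positive in every dimension is a welcome sanity verification not made explicit in the paper.
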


We next show that the quantity $\nu$ and $\nu^*$ are approximately subadditive.    

\begin{lemma}[{Subadditivity of $\nu$ and $\nu^*$}]
\label{l.nussubadd}
There exists an a.s. finite random minimal scale $\S$  such that,  let $U_1, \cdots U_n$ be disjoint triadic cubes with $\diam(U_i) \ge \S$ for each $i = 1,\cdots, n$,  and $U = \bigcup_{i=1}^n U_i$.  Let $\mathcal{E}': = \mathcal{E}(U) \setminus \bigcup_{i=1}^n \mathcal{E}(U_i)$. For every $p\in\Rd$, 
\begin{equation}
\label{e.nu.subadd}
\nu(U, p) 
\leq 
\sum_{i=1}^n \frac{|U_i|}{|U|} \nu(U_i, p) \,.
\end{equation}
Moreover, 
for every~$q\in\Rd$, we have
\begin{equation}
\label{e.nus.subadd}
\sum_{i=1}^n \frac{|U_i| }{|U|}\nu^*(U_i, q)
\geq 
\nu^*(U,q)
-\frac{C| \mathcal E' |^\frac14}{|U|^\frac14}   |q|^2 .
\end{equation}
\end{lemma}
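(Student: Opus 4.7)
The plan is to prove the two inequalities by distinct variational arguments: for $\nu$ I use concatenation of minimizers (giving exact subadditivity), and for $\nu^*$ I use restriction of the maximizer (giving approximate subadditivity with the $1/4$ error term). In both cases the decomposition $\mathcal{E}(U) = \bigsqcup_i \mathcal{E}(U_i) \sqcup \mathcal{E}'$ generates the desired subcube contributions cleanly, and the remaining task is to handle a cross-edge remainder using the good-cube structure.

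For the first inequality, I would set $v_i := v(\cdot, U_i, p)$ and define $v : U \to \R$ by $v|_{U_i} = v_i$. Since $v_i \in \hat\ell_p + H^1_0(U_i)$ matches $\hat\ell_p$ on $\partial U_i$, the gluing is consistent and $v \in \hat\ell_p + H^1_0(U)$ is an admissible competitor in~\eqref{e.nu}. Substituting and splitting the edge sum gives the subcube energies $\sum_i \tfrac{|U_i|}{|U|}\nu(U_i, p)$ plus a cross-edge term $\tfrac{1}{|U|}\sum_{e\in\mathcal{E}'} \tfrac12 \nabla v \cdot \a \nabla v$. On each cross-edge both endpoints lie on subcube boundaries where $v$ agrees with $\hat\ell_p$, so $\nabla v = \nabla \hat\ell_p$ there. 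The good-cube rarity estimate~\eqref{e.d+1moment} together with the computation already carried out in~\eqref{e.energyub} in the proof of Lemma~\ref{l.moment} absorbs this cross-edge contribution into the subcube sum.

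For the second inequality, let $u := u(\cdot, U, q)$ be the maximizer. Testing the first variation~\eqref{e.firstvar.nustar} against $w = u$ gives
\begin{equation*}
\nu^*(U, q) = \frac{1}{|U|} \sum_{e \in \mathcal{E}(U)} \bigl(\nabla \ell_q \cdot \nabla u - \tfrac12 \nabla u \cdot \a \nabla u\bigr).
\end{equation*}
The restriction of $u$ to each $U_i$, recentered by a constant (which leaves gradients unchanged), is admissible in~\eqref{e.nustar}, so $\nu^*(U_i, q) \ge \tfrac{1}{|U_i|} \sum_{\mathcal{E}(U_i)} (\nabla \ell_q \cdot \nabla u - \tfrac12 \nabla u \cdot \a \nabla u)$. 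Weighted summation and subtraction of the identity above produces a cross-edge remainder; the quadratic part $\tfrac{1}{|U|} \sum_{\mathcal{E}'} \tfrac12 \nabla u \cdot \a \nabla u$ has the favorable sign and may be discarded, reducing matters to bounding
\begin{equation*}
\frac{1}{|U|} \biggl|\sum_{e \in \mathcal{E}'} \nabla \ell_q \cdot \nabla u\biggr| \le \frac{|q|}{|U|} \biggl(\sum_{e\in\mathcal{E}'} \a^{-1}\biggr)^{\!1/2} \biggl(\sum_{e\in\mathcal{E}(U)} \a |\nabla u|^2\biggr)^{\!1/2}
\end{equation*}
via Cauchy-Schwarz with weights $\a^{\pm 1/2}$. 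The energy factor is $\le C|U|^{1/2}|q|$ by Lemma~\ref{l.moment}, and a second Cauchy-Schwarz together with the inverse second moment estimate $\sum_{\mathcal{E}'} \a^{-2} \le C|U|$, available on good cubes via Corollary~\ref{c.mgf} and~\eqref{e.invmom}, gives $\sum_{\mathcal{E}'} \a^{-1} \le C|\mathcal{E}'|^{1/2} |U|^{1/2}$. Combining yields the desired error $C|q|^2 |\mathcal{E}'|^{1/4}/|U|^{1/4}$.

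The main obstacle is producing the precise $1/4$ exponent in the $\nu^*$ estimate. A single Cauchy-Schwarz would only yield a $1/2$ exponent \emph{in the wrong direction}, which is too crude for the geometric iteration in Section~\ref{s.homog} that passes to the homogenized limit. The iterated argument hinges on the finiteness of the \emph{second} negative moment of $\a$, which is exactly what the subcritical percolation estimates of Section~\ref{s.perco} deliver through Corollary~\ref{c.mgf}. Because $\a$ is unbounded from above, no direct ellipticity-based argument is available, and this detour through negative moments is the only substitute for uniform ellipticity on cross-edges. The subadditivity of $\nu$, by contrast, is essentially routine once the good-cube framework is in place, since the modification $\hat\ell_p$ is designed precisely so that cluster-crossing edges contribute negligibly.
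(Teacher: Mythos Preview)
Your approach matches the paper's in both parts: concatenate the Dirichlet minimizers for $\nu$, restrict the Neumann maximizer $u(\cdot,U,q)$ to each $U_i$ for $\nu^*$, discard the quadratic cross-edge term by sign, and bound the linear cross-edge term $\sum_{\mathcal{E}'} q\cdot\nabla u$ using the energy identity together with the inverse second moment of $\a$ on good cubes. Your two successive Cauchy--Schwarz steps are algebraically the same as the paper's single H\"older with exponents $(4,2,4)$; for the $\nu$ cross-edge remainder the paper simply drops it (with what appears to be a sign slip in the displayed identity) whereas you invoke~\eqref{e.energyub}, but the argument is otherwise identical.
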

\begin{proof}
Define a function $v_g \in \hat\ell_p+H^1_0(U)$  by 
\begin{equation*} \label{}
v_g(x):= v(x,U_i,p),\quad \text{if }x\in U_i .
\end{equation*}
Testing the definition of $\nu(U,p) $ with $v_g$ gives 
\begin{align*}
\nu(U,p) &\le  \frac{1}{|U|} \sum_{e\in \mathcal{E}(U)}
\frac12 \nabla  v_g \cdot \a \nabla  v_g   \\
&= \sum_{i=1}^n \frac{1}{|U|} \sum_{e\in \mathcal{E}(U_i)}\frac12 \nabla  v_g \cdot \a \nabla  v_g
- \frac{1}{|U|} \sum_{e\in \mathcal E'} \frac12 \nabla \hat \ell_p(e) \cdot \a \nabla \hat \ell_p(e) \leq
\sum_{i=1}^n \frac{|U_i|}{|U|} \nu(U_i, p) .
\end{align*}

We next show the subadditivity of $\nu^*$.  Testing the definition of $\nu^*(U_i,q)$ with the function $u(\cdot,  U,q)$ and summing over $i=1, \cdots, n$ yields,
\begin{align*}
\sum_{i=1}^n  |U_i| \nu^*(U_i,  q)
&\geq 
\sum_{i=1}^n \sum_{e\in \mathcal E(U_i)} \Bigl( -\frac 12 \nabla u \cdot \a\nabla u + q\cdot \nabla u   \Bigr)\\
&= |U| \nu^* (U, q) - \sum_{e\in \mathcal E'} \Bigl( -\frac 12 \nabla u \cdot \a\nabla u + q\cdot \nabla u    \Bigr)
\geq
|U| \nu^* (U,q) - \sum_{e\in \mathcal E'} q\cdot \nabla u .
\end{align*}
Let $\S$ be the minimal random scale in Definition \ref{d.good}.  We therefore have, by the H\"older inequality, 
\begin{align*}
\sum_{e\in \mathcal E'} q \cdot \nabla u
&\leq 
\Biggl(  \sum_{e\in \mathcal E' } |q|^4 \Biggr)^{\!\frac 14}
\Biggl( \sum_{e\in\mathcal E(U) } \a(e)( \nabla u(e))^2 \Biggr)^{\!\frac 12}
\Biggl(\sum_{e\in \mathcal E'  } \a^{-2}(e)\Biggr)^{\!\frac 14} \\
&\leq
C|q|| \mathcal E' |^{\!\frac 14} \Biggl( |U|  \nu^* (U,q) \Biggr)^{\!\frac 12}
\Biggl( \sum_{i=1}^n\sum_{e\in \mathcal E(U_i)  } \a^{-2}(e)\Biggr)^{\!\frac 14} .
\end{align*}
Since $\diam(U_i)> \S$,  $(U_i)$  are good cubes as in Definition~\ref{d.good},  we have 
\begin{equation*} 
 \sum_{i=1}^n\sum_{e\in \mathcal E(U_i)  } \a^{-2}(e)
\leq
C \sum_{i=1}^n |U_i| \bigl\langle \a^{-2}(e) \bigr\rangle_{\mu_\tau} 
\\
\leq 
C|U|  \bigl\langle \a^{-2}(e) \bigr\rangle_{\mu_\tau} ,
\end{equation*}
and therefore 
\begin{align*}
\sum_{i=1}^n \frac{|U_i| }{|U|}\nu^*(U_i,q)
\geq 
\nu^*(U,q)
-\frac{C | \mathcal E' |^{\frac 14}|U|^\frac 12 |U|^\frac 14 }{|U|}  \langle \a^{-2}(e) \rangle_{\mu_\tau}^\frac 14 |q|^2
\geq 
\nu^*(U,q)
-\frac{C'| \mathcal E' |^\frac 14}{|U|^\frac 14}   |q|^2\,.
\end{align*}
The proof is complete.
\end{proof}

\subsection{Proof of homogenization}

Based on the properties of the subadditive quantities in the previous section,  we use a variational argument to prove qualitative homogenization for the equation 
\begin{equation*}
-
\nabla \cdot \a\nabla u = \nabla \cdot \mathbf{f} \quad \mbox{in} \ \Zd\,, 
\end{equation*}
where~$\f$ is a smooth, macroscopic function,
to the constant-coefficient, homogenized equation 
\begin{equation*}
-
\nabla \cdot \ahom\nabla \bar u = \nabla \cdot \mathbf{f}
\quad \mbox{in} \ \Rd\,.
\end{equation*}
Together with moments bound that follows from the Brascamp-Lieb inequality Proposition~\ref{p.BL},  we will conclude the  converge in law of the continuous SOS field to a GFF. 

\subsubsection{The subadditive ergodic theorem}

We recall a version of the multiparameter Kingman's subadditive ergodic theorem,  obtained in~\cite{AkKr}.  Denote by $\A:=\{ z+\cu_n:  z\in\Zd, n\in\N\}$ the collection of triadic cubes in $\Zd$. 

\begin{proposition}[Ergodic theorem]
\label{p.ergodic}
Let $(F_I)_{I\in\A}$ be a family of real-valued random variables satisfying:
\begin{itemize}
\item $F_I \circ \tau_z = F_{z+I}$ \quad \text{for every } $z\in\Zd, I\in \A$.

\item  If $I_1, I_2, \cdots I_n$ are disjoint sets in $\A$ and $I = \bigcup_{i=1}^n I_i$,  then 
$F_I \leq \sum_{i=1}^n \frac{|I_i|}{|I|} F_{I_i}$.

\item  $\inf_{I\in\A} \E F_I >-\infty$,
\end{itemize}
Then $\lim_{|I|\to \infty} F_I$ exists a.e. 
\end{proposition}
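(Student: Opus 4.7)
The plan is to establish this multiparameter subadditive ergodic theorem along the Akcoglu--Krengel strategy, reducing the almost-sure convergence claim to the classical (multiparameter) pointwise ergodic theorem combined with the subadditivity hypothesis. First, I would exploit the fact that each triadic cube $\cu_{n+1}$ can be tiled by a family of $3^d$ disjoint translates of $\cu_n$; applying subadditivity together with translation invariance (the first hypothesis) shows that $n\mapsto \E F_{\cu_n}$ is non-increasing, and the third hypothesis ensures the limit $\mu:=\lim_{n\to\infty}\E F_{\cu_n}$ is finite. The goal is then to upgrade convergence in expectation to almost-sure convergence.

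The easier half is the upper bound $\limsup_n F_{\cu_n}\le F^\infty$ for a suitable translation-invariant random variable $F^\infty$ with $\E F^\infty=\mu$. Iterating the subadditivity, for $n\gg m$ one tiles $\cu_n$ by $N\sim 3^{d(n-m)}$ disjoint translates $z_j+\cu_m$, so that
\begin{equation*}
F_{\cu_n} \;\le\; \frac{1}{N}\sum_{j=1}^N F_{z_j+\cu_m}.
\end{equation*}
The multiparameter Birkhoff pointwise ergodic theorem applied to the observable $F_{\cu_m}$ identifies the almost-sure limit of the right-hand side as $\E[F_{\cu_m}\mid \mathcal I]$, where $\mathcal I$ is the translation-invariant $\sigma$-algebra. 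Letting $m\to\infty$ and setting $F^\infty:=\lim_{m\to\infty}\E[F_{\cu_m}\mid \mathcal I]$, which exists as a decreasing limit of integrable variables, yields the upper bound and the identity $\E F^\infty=\mu$.

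The main obstacle, and what I expect to be the decisive step, is the matching lower bound $\liminf_n F_{\cu_n}\ge F^\infty$ almost surely. I would argue by contradiction using a triadic covering scheme: if $\P[\liminf_n F_{\cu_n}<F^\infty-\varepsilon]>0$ for some $\varepsilon>0$, then for each such $\omega$ there exist arbitrarily large scales $n$ with $F_{\cu_n}(\omega)<F^\infty(\omega)-\varepsilon/2$. By a Vitali-type selection adapted to the triadic lattice, one covers a very large cube $\cu_N$ by a disjoint family of such ``small'' subcubes, leaving only a residual set of relative volume $\eta_N\to 0$ (controlled by applying the pointwise ergodic theorem to the indicator of the set of good scales). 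Summing via subadditivity and bounding the residual cells using the third hypothesis (which forbids arbitrarily negative $\E F_I$), one obtains $\E F_{\cu_N}\le \mu-\varepsilon/4$ for all sufficiently large $N$, contradicting the definition of $\mu$.

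Finally, the passage from the triadic subsequence $\cu_n$ to a general $I=z+\cu_n\in\A$ is immediate from $F_{z+\cu_n}(\omega)=F_{\cu_n}(\tau_z\omega)$ together with the translation invariance of $F^\infty$. All ingredients except the covering argument are standard manipulations using subadditivity, translation covariance, and Birkhoff's theorem; the combinatorial covering estimate in the lower bound is the only genuinely non-routine step, and its delicacy is essentially the same technical heart as in the classical proof of the Akcoglu--Krengel theorem.
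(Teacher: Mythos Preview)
The paper does not prove this proposition at all: it is stated as a recalled result from the literature, attributed to Akcoglu and Krengel via the citation~\cite{AkKr}, and then applied as a black box. Your proposal is therefore not competing with anything in the paper---you are sketching the proof of the cited theorem itself, and indeed your outline follows the classical Akcoglu--Krengel strategy you name.

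As a sketch of that classical argument, your outline is broadly right, but one point in the lower bound is misstated. When you cover $\cu_N$ by ``good'' subcubes where $F$ is small and a residual set of relative volume $\eta_N$, subadditivity gives an \emph{upper} bound on $F_{\cu_N}$ in terms of the good and residual pieces; to derive the contradiction $\E F_{\cu_N} \le \mu - \varepsilon/4$, you need the residual contribution to be bounded from \emph{above} by something that vanishes with $\eta_N$. The third hypothesis $\inf_I \E F_I > -\infty$ is a lower bound and points the wrong way for this. What one actually uses is that the residual set can be tiled by unit cubes in $\A$, and the resulting average of translates of $F_{\cu_0}$ over a set of density $\eta_N$ is controlled by the integrability of $F_{\cu_0}^+$ together with the pointwise ergodic theorem (or a maximal inequality). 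This is standard in the Akcoglu--Krengel proof, but it is not the third hypothesis doing the work.
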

Recall that for $z\in\Zd$, the random variable $\S(z)$ denotes the minimal scale above which the cube centered at $z$ is a good cube as in Definition~\ref{d.good}.  Notice that,  if we take $U = \cu_n$ and $U_1, \cdots,  U_{3^d}$ be one if its subcube of size $3^{n-1}$,~\eqref{e.nu.subadd} and~\eqref{e.nus.subadd} becomes 
\begin{equation*}
\nu(U,p) 
\leq 
\sum_{i=1}^n \frac{|U_i|}{|U|} \nu(U_i, p)  .
\end{equation*}
and 
\begin{equation*}
\sum_{i=1}^n \frac{|U_i| }{|U|}\nu^*(U_i, q)
\geq 
\nu^*(U,q)
-\frac{C}{|U|^\frac{1}{4d}}   |q| ^2 .
\end{equation*}
We apply Proposition~\ref{p.ergodic} above with
$$
F_{z+I} = \frac 1{|p|^2} \nu(z+I, p)\indc_{\{ \text{diam}(z+I) \ge \S_1(z)\}} + \infty \indc_{\{ \text{diam}(z+I) < \S_1(z)\}},
$$
and
$$F_{z+I}^* = \frac 1{|q|^2} \nu^*(z+I, q)\indc_{\{\text{diam}(z+I) \ge \S_1(z)\}} + \infty \indc_{\{\text{diam}(z+I) < \S_1(z)\}}+ C_{d} |z+I|^{-\frac{1}{4d}}.$$
 The conditions of $(F_I)_{I\in\A}$ and $(F_I^*)_{I\in\A}$ follows from the stationarity of $\nu$ and $\nu^*$,  the subadditivity Lemma~\ref{l.nussubadd} and the lower bound Corollary~\ref{c.lowerbd}.    Thus as $n\to \infty$, the subadditive quantity $\nu(\cu_n, q)$ and $\nu^*(\cu_n, q)$ convergences for every $p, q\in\Rd$ to some $\overline{\nu}(p)$ and $\overline{\nu}^*(q)$. Define,  for some positive definite matrices $\ahom$ and  $\ahom_* $ ,    
 \begin{equation}
\label{e.quadrep.homs}
\overline{\nu}(p) = \frac12p\cdot \ahom p ,
\end{equation}
and 
  \begin{equation}
\overline{\nu}^*(q) = \frac12 q\cdot \ahom_*^{-1} q ,
\end{equation}
such that 
 \begin{equation}
 \label{e.conv.bars}
| \ahom(\cu_n) -\ahom| + | \ahom_*(\cu_n) -\ahom_*| \to 0 ,  \quad \text{$\P$--a.s.}
 \end{equation}
 
 We will show below that $p \mapsto \bar \nu(p) $ and $q \mapsto \bar \nu^*(q) $ are convex dual functions,  and the two sets of the limiting coefficient are equal: $\ahom = \ahom_*$.  
 
 Our proof of homogenization will be based on approximating the limiting energy quantity by simplicial domains.  Below we introduce some notations.  We denote the set of permutations of~$\{ 1,\ldots, d\}$ by~$\mathcal{P}$. 
Given~$\pi \in \mathcal{P}$, we define the simplex~$\triangle_0^\pi$ by
\begin{equation*}
\triangle_0^\pi := \Bigl\{ (x_1,\ldots,x_d) \in\Rd \,:\, -\frac12 < x_{\pi(1)} < x_{\pi(2)} < \cdots < x_{\pi(d)} < \frac12  \Bigr\} 
\,.
\end{equation*}
We scale and translate these by defining, for every~$n\in\Z$ and~$z \in\Rd$, 
\begin{equation*}
\triangle_n^\pi(z) := z+3^n\triangle_0^\pi\,.
\end{equation*}
Just as for cubes, each simplex is Lipschitz domains and the set of simplexes have the nice property that larger simplexes are partitioned by smaller ones. Indeed, a triadic simplex~$\triangle_m^\sigma(z)$ can be written as the disjoint union (up to a set of Lebesgue measure zero) of~$3^{d(m-n)}$ many simplexes of the form~$\triangle_n^\omega(z)$. Moreover, a triadic cube~$\cu_n(z)$ is the disjoint union of~$\{ \triangle_n^\sigma(z) \,:\, \sigma\in \mathcal{P} \}$. Note that, since~$|\mathcal{P}| = d!$, we deduce that~$| \triangle_n^\pi| = 3^n/d!$.  We denote by~$S_n := \{ \triangle_n^\pi(z)  \,:\, n\in\Z\,,\, z\in 3^n\Zd\,, \, \triangle_n^\pi(z)  \subset \cu_m\}$ the collection of \emph{triadic adapted simplexes} in~$\cu_m$. 
We call~$3^n$ the \emph{size} of~$\triangle_n^\pi(z)$. 
If~$\pi$ is the trivial permutation, then we write~$\triangle_n = \triangle_n^\pi$ . 
Observe that a simplex has exactly~$d+1$ many extreme points; we call these the \emph{vertices} of the simplex.

\begin{lemma}
\label{l.simpconv}
Let $\S$ be the minimum scale for good cubes in Definition \ref{d.good}. 
For every $p,q\in\Rd$, $3^n> 2\S$ and $\pi \in \mathcal{P}$ ,
\begin{equation}
\label{e.convergence}
\left|\nu(\triangle_n^\pi,p)  - \overline{\nu}(p) \right| 
+ \left| \nu^*(\triangle_n^\pi,q) - \overline{\nu}^*(q) \right| 
\to 0. 
\end{equation}
\end{lemma}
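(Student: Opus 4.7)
\medskip

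\noindent\textbf{Proof plan.}
The strategy is to upgrade the cube convergence~\eqref{e.conv.bars} to simplexes by combining: (i) the one-sided subadditivity of $\nu$ and $\nu^*$ (Lemma~\ref{l.nussubadd}), which transforms a simplex energy into a spatial average of cube energies, (ii) Birkhoff's ergodic theorem applied to the stationary random field $z \mapsto \nu(z+\cu_m,p)$ (respectively $z \mapsto \nu^*(z+\cu_m,q)$), which relies on the ergodicity of $\mu_\tau$ established in Lemma~\ref{l.ergodic} and the upper bound from Lemma~\ref{l.moment}, and (iii) the Fenchel inequality (Lemma~\ref{l.fenchel}), extended to simplexes, combined with a duality argument to convert the upper bounds produced by (i)--(ii) into matching lower bounds.

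\medskip

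\noindent\emph{Step 1 (Upper bounds).} Fix $m < n$ with $3^m > \S$. Cover $\triangle_n^\pi$ from inside by the family $\{\cu_m^{(i)}\}_i$ of triadic subcubes of size $3^m$ fully contained in $\triangle_n^\pi$, and let $B_{n,m}$ be the residual boundary layer, of volume $|B_{n,m}|\leq C\cdot 3^{m}\cdot 3^{n(d-1)}$. Testing the definition of $\nu(\triangle_n^\pi,p)$ with the function equal to $v(\cdot,\cu_m^{(i)},p)$ on each subcube $\cu_m^{(i)}$ and to $\hat\ell_p$ on $B_{n,m}$, and then estimating the extra edges as in~\eqref{e.energyub} using the percolation estimate~\eqref{e.pmoment}, yields
\begin{equation*}
\nu(\triangle_n^\pi,p)
\leq
\frac{1}{|\triangle_n^\pi|}\sum_i |\cu_m|\,\nu(\cu_m^{(i)},p)
+ C|p|^2\, 3^{-(n-m)}.
\end{equation*}
An analogous estimate holds for $\nu^*(\triangle_n^\pi,q)$ via~\eqref{e.nus.subadd}. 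Birkhoff's ergodic theorem, applied to the stationary field $z\mapsto\nu(z+\cu_m,p)\mathbf{1}_{\{3^m\geq\S(z)\}}$ with the integrable dominating bound from Lemma~\ref{l.moment}, shows that the right-hand side converges $\P$-a.s.\ as $n\to\infty$ (for fixed $m$) to $\langle \nu(\cu_m,p)\rangle_{\mu_\tau} + C|p|^2\cdot 3^{-(n-m)}$. Sending then $m\to\infty$ and using~\eqref{e.conv.bars} gives
\begin{equation*}
\limsup_{n\to\infty}\nu(\triangle_n^\pi,p)\leq \overline{\nu}(p),
\qquad
\limsup_{n\to\infty}\nu^*(\triangle_n^\pi,q)\leq \overline{\nu}^*(q).
\end{equation*}

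\medskip

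\noindent\emph{Step 2 (Lower bounds via Fenchel duality).} The Fenchel inequality of Lemma~\ref{l.fenchel} extends to a simplex $\triangle_n^\pi$ by exactly the same proof: testing~\eqref{e.nustar} with the minimizer of $\nu(\triangle_n^\pi,p)$ and bounding the boundary defect $\nabla\hat\ell_p - \nabla\ell_p$ via~\eqref{e.pmoment}. This produces
\begin{equation*}
\nu(\triangle_n^\pi,p)+\nu^*(\triangle_n^\pi,q)\geq p\cdot q - o(1)|p||q|.
\end{equation*}
Combined with the $\nu^*$ upper bound from Step 1, this gives $\liminf_n \nu(\triangle_n^\pi,p)\geq p\cdot q - \overline{\nu}^*(q)$ for every $q\in\RR^d$, hence $\liminf_n \nu(\triangle_n^\pi,p)\geq \overline{\nu}^{**}(p)$. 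It therefore remains to establish the duality $\overline{\nu}=\overline{\nu}^{**}$, equivalently $\ahom=\ahom_*$. One inequality ($\ahom\geq\ahom_*$) follows directly from passing to the limit in the cube Fenchel inequality~\eqref{e.fenchel} using~\eqref{e.conv.bars}. The reverse inequality is obtained by testing $\nu^*(\cu_n,\ahom(\cu_n)p)$ with the minimizer $v(\cdot,\cu_n,p)$ of $\nu(\cu_n,p)$: using the first variation~\eqref{e.firstvar.nu} and the spatial-average identity~\eqref{e.spatial}, one extracts a reverse Fenchel inequality $\nu(\cu_n,p)+\nu^*(\cu_n,\ahom(\cu_n)p)\leq p\cdot \ahom(\cu_n)p + o(1)|p|^2$, and letting $n\to\infty$ yields $\overline{\nu}(p)+\overline{\nu}^*(\ahom p)\leq p\cdot \ahom p$, which combined with the direct Fenchel inequality forces $\ahom=\ahom_*$. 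The lower bound $\liminf_n \nu^*(\triangle_n^\pi,q)\geq \overline{\nu}^*(q)$ follows by the symmetric argument.

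\medskip

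\noindent\emph{Main obstacle.} The genuinely delicate point is controlling the boundary corrections. Unlike triadic cubes, the simplex boundary $\partial\triangle_n^\pi$ has area of order $3^{n(d-1)}$ while cutting across many clusters of $\mathcal C_>$. Consequently, both the error in the subadditivity test-function construction (where one must patch $v(\cdot,\cu_m^{(i)},p)$ with $\hat\ell_p$ across a layer of thickness $3^m$) and the error in the simplex Fenchel inequality (where one must estimate $\nabla\hat\ell_p-\nabla\ell_p$ on the boundary) must be shown to vanish in the limit. The estimate~\eqref{e.pmoment} of Definition~\ref{d.good} is precisely what permits these boundary sums to be bounded on each tile and then averaged away via~\eqref{e.gradhatave}-type arguments; the resulting stochastic integrability from Lemma~\ref{l.goodS} ensures the random scale $\S$ governing the applicability is a.s.\ finite, which is what makes the simultaneous limits $n\to\infty$ and $m\to\infty$ legitimate.
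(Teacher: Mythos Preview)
Your upper bound strategy is workable, though it differs from the paper's: rather than a single mesoscopic scale $3^m$ together with Birkhoff's ergodic theorem, the paper uses a Whitney decomposition of $\triangle_n^\pi$ into triadic cubes of sizes ranging from $3^{n/2}$ to $3^{n_d}$ and invokes the already-established almost-sure convergence $\nu(z+\cu_m,p)\to\bar\nu(p)$ directly on each tile. Either route should work.

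The lower bound, however, has a genuine gap. Your ``reverse Fenchel'' argument does not do what you claim. Testing the supremum defining $\nu^*(\cu_n,q)$ with \emph{any} competitor $w$, in particular with $w=v(\cdot,\cu_n,p)$, yields a \emph{lower} bound on $\nu^*$, namely
\[
\nu^*(\cu_n,q) \;\ge\; -\nu(\cu_n,p) + q\cdot (\nabla v)_{\cu_n} \;=\; -\nu(\cu_n,p) + q\cdot p - o(1)|p||q|\,,
\]
which is again the direct Fenchel inequality, not a reverse one. Neither the first variation~\eqref{e.firstvar.nu} nor the spatial-average identity~\eqref{e.spatial} provides an upper bound on $\nu^*(\cu_n,\ahom(\cu_n)p)$ of the form you state. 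In the paper, the identity $\ahom=\ahom_*$ is proved only \emph{after} Lemma~\ref{l.simpconv}, via the corrector flatness estimate (Lemma~\ref{l.corrector}) and a cutoff argument; it is not available at this point, and it is not a soft consequence of first variations.

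The paper's lower bound avoids duality entirely and is much simpler: since the cube $\cu_n$ is partitioned by the $d!$ simplexes $\{\triangle_n^\pi : \pi\in\mathcal P\}$, subadditivity (Lemma~\ref{l.nussubadd}) gives
\[
\nu(\cu_n,p)\;\le\;\frac{1}{d!}\sum_{\pi\in\mathcal P}\nu(\triangle_n^\pi,p)\,.
\]
Combining this with the upper bound $\nu(\triangle_n^\pi,p)\le\bar\nu(p)+o_n(1)$ for each $\pi$ and the known convergence $\nu(\cu_n,p)\to\bar\nu(p)$ forces every term in the average to converge to $\bar\nu(p)$. The same argument works for $\nu^*$. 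You should replace your Step~2 with this partition argument.
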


\begin{proof}
The convergence of $\nu$ and $\nu^*$ quantity of the simplexes follows from the convergence along the triadic cubes,  and the subadditivity Lemma~\ref{l.nussubadd}.  We focus on the case of $\nu$ and the proof of $\nu^*$ is almost identical. We claim,  separately, that
\begin{equation*}
\nu(\triangle_n^\pi,p)  \le \overline{\nu}(p)+ o_n(1) ,
\end{equation*}
and 
\begin{equation*}
\sum_{\pi \in \mathcal{P}}\nu(\triangle_n^\pi,p)  \ge d! \overline{\nu}(p)- o_n(1) ,
\end{equation*}
from which we conclude $\nu(\triangle_n^\pi,p)  \to \overline{\nu}(p)$.  

\smallskip

To prove the upper bound,  take $n$ such that $3^n> 2\S$,  we tile each $\triangle_n^\pi$ with triadic  cubes and apply Lemma~\ref{l.nussubadd}.   Specifically,  it suffices to consider $\pi$ being the identical permutation (the other cases can be done by permutation symmetry).  Let $n_d$ be the largest integer such that $d 3^{n_d} \le 3^n$.  We tile $\triangle_n^{Id}$ using triadic cubes of size $3^{n_d},  3^{n_{d-1}}, \cdots ,  3^{n/2}$ following a Whitney type decomposition, which we now define. 

\smallskip

For any bounded Lipschitz domain $U \subset \Rd$,  we recursively define 
\begin{equation}
\left\{
\begin{aligned}
& T_0:
= \{ z+ \cu_0: \dist(z, \partial U ) \ge 3, z\in \Zd \cap U\}
\\ & 
\tilde U_0
:= T_0.
\end{aligned}
\right.
\end{equation}
And, having defined $T_0, \cdots,  T_j$ we set 
\begin{equation}
\left\{
\begin{aligned}
& T_{j+1}:
= \{ z+ \cu_{-(j+1)}: \dist(z, \partial U ) \ge 3^{-j}, z\in 3^{-j-1} \Zd \cap (U\setminus \tilde U_j)\}
\\ & 
\tilde U_{j+1}
:= \tilde U_{j} \cap T_{j+1}.
\end{aligned}
\right.
\end{equation}
Then $U$ can be written as disjoint union of $\{ z+ \cu_{-j}: z+ \cu_{-j} \in T_j \}$, $j\in \N$.  We also note that 

$$
\tilde U_{j+1} \setminus \tilde U_{j}  \subset
\{ x\in U: \dist(x, \partial U ) \in [C3^{-j}, C3^{-j+1}] \},
$$
thus $| \tilde U_{j+1} \setminus \tilde U_{j} | \le C_U 3^{-j}$.  

 Define the triadic cubes $\T_{n_d},  \cdots,  \T_{n/2}$ as rescaled version of $T_j$, namely,  $\T_{n_d-j}:= 3^{n_d} T_j$,  for $j = 0, \cdots,  n_d- n/2$.    Notice that there exists a constant $C<\infty$, such that for each $m = n_d,  \cdots,  n/2$,   $|\partial \T_m| \le C 3^{n(d-1)}$.  Moreover,  the volume of the boundary layer which is not covered by the tiles is bounded by 
\begin{equation*}
|\mathcal B_n| := \left|\triangle_n^{Id} \setminus \bigcup_{m=n/2}^{n_d} \T_m \right|
\leq
C3^{n(d-1)+n/2}.
\end{equation*}
Apply Lemma \ref{l.nussubadd},  we conclude that 
\begin{equation*}
\nu(\triangle_n^\pi,p)  \le
\sum_{m=n/2}^{n_d} \sum_{z+\cu_m \in \T_m} \frac{|z+\cu_m|}{|\triangle_n^\pi|} \nu(z+\cu_m,p) + \frac {|\mathcal B_n| }{|\triangle_n^\pi|}  \nu(\mathcal B_n,p).
\end{equation*}
Since $|\mathcal{E}'|  \le \sum_{m=n/2}^{n_d} |\partial \T_m|  \leq C n 3^{n(d-1)}$,  the above inequality can be simplied as 
\begin{equation}
\label{e.upptriangle}
\nu(\triangle_n^\pi,p)  \le
\sum_{m=n/2}^{n_d} \sum_{z+\cu_m \in \T_m} C3^{(m-n)d} \nu(z+\cu_m,p) + 3^{-n/2}  \nu(\mathcal B_n,p).
\end{equation}
Using the convergence of the energy along the triadic cubes,  for each $m = n_d,  \cdots,  n/2$, we have that $\nu(z+\cu_m,p)  \le \overline{\nu}(p)+o_n(1)$.  Moreover,
by plugging in  $ \hat\ell_p$ in the definition of $\nu(\mathcal B_n,p)$,  we have 
\begin{align*}
\nu(\mathcal B_n,p) &\leq   
 \frac{1}{|\mathcal B_n|} \sum_{e\in \mathcal{E}(\mathcal B_n)}
\frac12 \nabla  \hat\ell_p \cdot \a \nabla  \hat\ell_p 
\\ &
\leq
 \frac{1}{|\mathcal B_n|} \sum_{e\in \mathcal{E}(\mathcal B_n) \setminus \mathcal C_>}
\frac12 \nabla  \ell_p \cdot \a \nabla  \ell_p
+ 
 \frac{1}{|\mathcal B_n|} \sum_{e\in \mathcal{E}(\mathcal B_n) \cap \partial \mathcal C_>}
\frac12 \nabla ( \hat\ell_p-\ell_p) \cdot \a \nabla ( \hat\ell_p-\ell_p) 
\,.
\end{align*}
Note that for $e=(x,y)\in \mathcal{E}(\mathcal B_n) \cap \partial \mathcal C_>$
\begin{equation}
\label{e.graddiff.2}
|\nabla \hat\ell_p(e)- \nabla \ell_p(e)|
\leq 
p \cdot \diam ( \mathcal C_>(x)).
\end{equation}
Controlling the boundary using the volume,  which gives $|\partial \mathcal C_>| \leq |\mathcal C_>| \leq C_d \diam ( \mathcal C_>(x))^{d}$,  for every $x\in \partial \mathcal C_>$,  we have 
\begin{equation}
 \frac{1}{|\mathcal B_n|} \sum_{e\in \mathcal{E}(\mathcal B_n) \cap \partial \mathcal C_>}
\frac12 \nabla ( \hat\ell_p-\ell_p) \cdot \a \nabla ( \hat\ell_p-\ell_p) \\
\leq 
C_d |p|^2 \frac{\diam ( \mathcal C_>\cap \mathcal B_n)^{d+2}}{|\mathcal B_n|}  .
\end{equation}
We may cover $ \mathcal B_n$ by the cubes of the form $(z+\cu_{n/2})$,  this yields 
$\diam ( \mathcal C_>\cap \mathcal B_n)^{d+2} \leq \sum_z \diam ( \mathcal C_>\cap (z+\cu_n))^{d+2}.$
Using the definition of good cubes Definition~\ref{d.good},  we see that the above quantity is bounded by $C |\mathcal B_n|$,  which implies that $\nu(\mathcal B_n,p)$ is bounded from above.  
Substitute these estimates into~\eqref{e.upptriangle} we conclude $\nu(\triangle_n^\pi,p)  \le \overline{\nu}(p)+ o_n(1) $. 

\smallskip

To obtain the lower bound,  we apply Lemma~\ref{l.nussubadd} with $U= \cu_n$ and $U_\pi = \triangle_n^\pi$ and noticing that $(\triangle_n^\pi)_{\pi \in \mathcal P}$ are disjoint to conclude 
\begin{equation*}
\nu(\cu_n, p) \le 
\frac 1{d!} \sum_{\pi \in \mathcal P} \nu(\triangle_n^\pi,p),
\end{equation*}
which concludes the lower bound. 
\end{proof}

\subsubsection{Estimates on finite-volume correctors}

We now prove that the solution to the Neumann problem is approximately affine on large scales. Given a cube $Q\subset \Zd$,  we denote as $u(\cdot, Q, p)$ the solution of a Neumann boundary condition: 
  \begin{equation}
\label{e.BVP.v}
\left\{ 
\begin{aligned}
&  \nabla \cdot \a \nabla  u(\cdot,Q,q) 
 =0
& \mbox{in} & \ Q^\circ , 
\\ & 
\a \nabla u(\cdot,Q,q) - \ell_q = 0& \mbox{on} & \ \partial Q.
\end{aligned}
\right.
\end{equation}
For $m<n$, we also denote by $\mathcal{Z}_{m}=3^{m}\mathbb{Z}^{d}\cap \cu _{n}$.  
\begin{lemma}
\label{l.corrector}
There exist~$\alpha (d),\kappa(d)>0$, $n_0(d)\in\N$,~$C(d)<\infty$ and a random variable~$\S'$ satisfying
\begin{equation*}
\P \bigl[ \S'<\infty \bigr]
=1,
\end{equation*}
such that, for every $n>2n_0$, 
\begin{equation}
\label{e.H-1u}
3^{-(n+1)} \left\|\nabla u(\cdot, \cu_n,  \ahom_*(\cu_n)p) - \nabla \hat \ell_p \right\|_{\underline{H}^{-1}(\cu_{n+1})}
\leq
C|p| \sum_{m=n_0}^{n}3^{m-n} |\ahom_*(\cu_m) -\ahom_*|^\frac 12 
 + C|p| 3^{-\alpha n}
 \,.
\end{equation}
Consequently, 
\begin{equation}
\label{e.corrasconv}
\P \biggl[ 
\lim_{n\to \infty} 
3^{-n} \left\|\nabla u(\cdot, \cu_n,  \ahom_*(\cu_n)p) -  \nabla \hat \ell_p  \right\|_{\underline{H}^{-1}(\cu_{n+1})}
=
0 \biggr] = 1\,.
\end{equation}
\end{lemma}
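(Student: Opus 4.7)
The plan is to adapt the multiscale iteration from Chapter~2 of~\cite{AKMbook} to the present degenerate setting, using the quadratic response formula~\eqref{e.quadresp.nustar} and the convergence~\eqref{e.conv.bars} of the coarse-grained matrices as the main inputs. Write $u_n := u(\cdot, \cu_n, \ahom_*(\cu_n)p)$ for brevity. I first split
\begin{equation*}
\nabla u_n - \nabla \hat \ell_p = \bigl(\nabla u_n - (\nabla u_n)_{\cu_n}\bigr) + \bigl((\nabla u_n)_{\cu_n} - \nabla \hat \ell_p\bigr).
\end{equation*}
By the spatial-average identity~\eqref{e.spatial}, $(\nabla u_n)_{\cu_n}=p$, while $(\nabla \hat \ell_p)_{\cu_n}$ differs from $p$ only through the high-conductance clusters, which contribute at most $C|p|3^{-\alpha n}$ to the normalized $\underline{H}^{-1}(\cu_{n+1})$ norm by the cluster-volume bound~\eqref{e.d+1moment}. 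This absorbs the second term into the $3^{-\alpha n}$ remainder.

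For the first term I use a dyadic-in-scale decomposition. For each $m\in[n_0,n]$ and each subcube $z+\cu_m\subseteq \cu_n$, let $p_{z,m}$ denote the spatial average of $\nabla u_n$ on $z+\cu_m$, set $q_{z,m}:=\ahom_*(z+\cu_m)p_{z,m}$, and compare $u_n$ restricted to $z+\cu_m$ against the local Neumann maximizer $u(\cdot,z+\cu_m,q_{z,m})$. By the quadratic response~\eqref{e.quadresp.nustar},
\begin{equation*}
\frac{1}{|z+\cu_m|}\sum_{e\in\mathcal{E}(z+\cu_m)}\tfrac12\bigl(\nabla u_n-\nabla u(\cdot,z+\cu_m,q_{z,m})\bigr)\cdot\a\cdot\bigl(\nabla u_n-\nabla u(\cdot,z+\cu_m,q_{z,m})\bigr)\;\leq\;D_m(z),
\end{equation*}
where $D_m(z)$ is the gap between $\nu^*(z+\cu_m,q_{z,m})$ and the slice energy of $u_n$ on $z+\cu_m$. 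Averaging over $z$ and comparing with the global Neumann energy via~\eqref{e.nus.subadd} applied to the partition of $\cu_n$ into subcubes of size $3^m$, the aggregate $L^2$ gradient oscillation at scale $3^m$ is controlled by $C|p|^2\bigl(|\ahom_*(\cu_m)-\ahom_*(\cu_n)|+|\ahom_*(\cu_n)-\ahom_*|+3^{-\alpha(n-m)}\bigr)$, where the last term comes from the boundary-layer factor $(|\mathcal E'|/|U|)^{1/4}$ in Lemma~\ref{l.nussubadd}.

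Converting these $L^2$ bounds into the $\underline{H}^{-1}(\cu_{n+1})$ norm is the next step: at scale $3^m$ the mean-zero oscillation of $\nabla u_n$ on each subcube contributes with weight $3^{m-(n+1)}$ (one Poincar\'e factor combined with the volume-normalization). Taking square roots, summing over $m\in[n_0,n]$, and bounding $|\ahom_*(\cu_m)-\ahom_*(\cu_n)|^{1/2}\leq |\ahom_*(\cu_m)-\ahom_*|^{1/2}+|\ahom_*(\cu_n)-\ahom_*|^{1/2}$ (with the second term absorbed into the tail since $\sum_{m\leq n}3^{m-n}\lesssim 1$) yields precisely~\eqref{e.H-1u}. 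The a.s.\ limit~\eqref{e.corrasconv} then follows from dominated convergence on the geometric weights $3^{m-n}$ together with~\eqref{e.conv.bars}.

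The main obstacle will be the Poincar\'e conversion step: passing from an $\a$-weighted $L^2$ gradient estimate to an unweighted $\underline{H}^{-1}$ bound requires the large-scale Poincar\'e inequality of Proposition~\ref{p.poincare}, which holds only on cubes of diameter exceeding the random scale $\Sp$. Consequently $n_0$ must be chosen with $3^{n_0}\geq \Sp\vee\S_F\vee \S$, and the random scale $\S'$ in the statement is essentially this maximum. A secondary technical point is the bookkeeping near $\partial\mathcal C_>$, where $\hat\ell_p\neq\ell_p$ and the Caccioppoli-type estimate must be supplemented by~\eqref{e.d+1moment}; this contribution is absorbed into the $C|p|3^{-\alpha n}$ remainder by the same argument used in Lemma~\ref{l.moment}.
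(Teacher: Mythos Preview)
Your outline follows the paper's strategy closely: multiscale decomposition, comparison with local Neumann maximizers via the second variation~\eqref{e.quadresp.nustar}, and summation with geometric weights $3^{m-n}$. However, the step you flag as the ``main obstacle'' is misidentified, and this is where a genuine gap appears.

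The multiscale Poincar\'e inequality (Proposition~\ref{p.MP}) bounds $\|\nabla u_n-\nabla\hat\ell_p\|_{\underline H^{-1}(\cu_n)}$ in terms of the \emph{unweighted spatial averages} $\bigl|(\nabla u_n-\nabla\hat\ell_p)_{y+\cu_m}\bigr|$. From the quadratic response you only have \emph{$\a$-weighted} $L^2$ bounds on $\nabla u_n-\nabla u(\cdot,y+\cu_m,q)$. The bridge between these is not the large-scale Poincar\'e inequality of Proposition~\ref{p.poincare} (which controls $\|w\|_{L^2}$ by $\|\a^{1/2}\nabla w\|_{L^2}$ and does not localize to give spatial averages of gradients), but rather the coercivity estimate~\eqref{e.gradupper}: applied to $w=u_n-u(\cdot,y+\cu_m,q)$ on $y+\cu_m$, it yields
\[
\bigl|(\nabla w)_{y+\cu_m}\bigr|^2 \;\le\; C\,\ahom_*^{-1}(y+\cu_m)\cdot\frac{1}{|\cu_m|}\sum_{e}\tfrac12\nabla w\cdot\a\nabla w
\;\le\; C\bigl(\nu^*(y+\cu_m,q)-\nu^*(\cu_n,q)\bigr),
\]
after averaging over $y$. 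This is the step the paper actually performs, and Proposition~\ref{p.poincare} plays no role in the argument.

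Two smaller points. First, the paper takes the single flux $q=\ahom_*(\cu_n)p$ at every scale and subcube; your varying choice $q_{z,m}=\ahom_*(z+\cu_m)p_{z,m}$ with $p_{z,m}=(\nabla u_n)_{z+\cu_m}$ makes the local maximizer match spatial averages by construction, but then you must separately control $|p_{z,m}-p|$, which is precisely the quantity you are trying to bound---this is circular unless you revert to the uniform $q$. Second, the subadditivity error from~\eqref{e.nus.subadd} is not needed here: with the uniform $q$, the averaged second variation collapses directly to $\frac{1}{|\mathcal Z_m|}\sum_y\nu^*(y+\cu_m,q)-\nu^*(\cu_n,q)$, which is bounded by $C|p|^2|\ahom_*(\cu_m)-\ahom_*|$ via~\eqref{e.conv.bars}.
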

\begin{proof}
We first notice, that by~\eqref{e.spatial},  
 \begin{equation}
 \label{e.gradave}
(\nabla u(\cdot, \cu_m,  \ahom_*(\cu_n)p))_{\cu_m} =  \ahom_*(\cu_n)  \ahom_*^{-1}(\cu_m)p.
\end{equation}
We apply the multiscale Poincar\'e inequality (Proposition~\ref{p.MP}) to obtain

\begin{align}
\label{e.gradMP}
3^{-n} \left\|  \nabla u(\cdot, \cu_n,\ahom_*(\cu_n) p) -  \nabla \hat \ell_p  \right\|_{\underline{H}^{-1}(\cu_{n})}
 \leq
C3^{-n} \left(   \nabla u(\cdot, \cu_n, \ahom_*(\cu_n)p) -   \nabla \hat \ell_p \right)_{\cu_{n}} \notag \\
+C\sum_{m=0}^{n-1}3^{m-n} \Biggl( 
\frac1{\left|\mathcal{Z}_{m}\right|}
\sum_{y\in \mathcal{Z}_{m}}
 \left| \left(  \nabla u(\cdot, y+\cu_n, \ahom_*(\cu_n)p) -   \nabla \hat \ell_p \right) _{y+\cu _{m}}\right|^{2}  \Biggr)^{\!\sfrac12}. 
 \end{align}
 We apply the triangle inequality to estimate the second term of the right side:
\begin{align*}
\lefteqn{ \frac1{\left|\mathcal{Z}_{m}\right|}
\sum_{y\in \mathcal{Z}_{m}}
 \left| \left(  \nabla u(\cdot, y+\cu_n, \ahom_*(\cu_n)p) -   \nabla \hat \ell_p \right) _{y+\cu _{m}}\right|^{2}
} \qquad &
\notag \\ &   
\leq 
\frac1{\left|\mathcal{Z}_{m}\right|}
\sum_{y\in \mathcal{Z}_{m}}
 \left| \left(  \nabla u(\cdot, y+\cu_m, \ahom_*(\cu_n)p) -  \nabla \hat \ell_p \right) _{y+\cu _{m}}\right|^{2}
\\ & \qquad 
+
\frac1{\left|\mathcal{Z}_{m}\right|}
\sum_{y\in \mathcal{Z}_{m}}
 \left| \left(\nabla u(\cdot, y+\cu_n, \ahom_*(\cu_n)p) -\nabla u(\cdot, y+\cu_m, \ahom_*(\cu_n)p)\right) _{y+\cu _{m}}\right|^{2}  \,.
\end{align*}
  By~\eqref{e.gradave},
  \begin{equation}
  \label{e.firsttermMP}
\bigl| \bigl(  \nabla u(\cdot, y+\cu_m, \ahom_*(\cu_n)p) -  \nabla \hat \ell_p \bigr) _{y+\cu _{m}}\bigr|^{2}
  \leq 
  \left|  \ahom_*(\cu_n)  \ahom_*^{-1}(\cu_m)p - p\right|^2
  +  \bigl| \bigl( -  \nabla \hat \ell_p  +  \nabla  \ell_p\bigr) _{y+\cu _{m}}\bigr|^{2}\,.
  \end{equation}
We let~$\S_F$ be the minimal scale defined in Lemma~\ref{l.fenchel}. Applying~\eqref{e.gradavewiggle},  we conclude that there exists $C(d) <\infty$ and $\kappa(d)>0$, such that for every $3^m\ge  \S_F$,  
  $$ \bigl|\bigl( -  \nabla \hat \ell_p  +  \nabla  \ell_p\bigr) _{y+\cu _{m}}\bigr|^{2}
  \le
  C |p|^2 3^{- m(d-2+ \kappa \frac{d+2}{4(d-1)})} \,.
  $$
By taking $m=n$ in~\eqref{e.firsttermMP},  we also have, for every~$3^n\geq \S_F$, 
   \begin{equation}
   \label{e.graduave}
\bigl|\bigl( \nabla u(\cdot, \cu_n, \ahom_*(\cu_n)p) -  \nabla \hat \ell_p \bigr) _{\cu _{n}} \bigr|
  \leq 
 C |p| 3^{- \frac 12 n(d-2+ \kappa \frac{d+2}{4(d-1)})}\,.
\end{equation}
We now bound the left side of \eqref{e.firsttermMP} for $3^m < \S_F$.  Define the random variable
\begin{align}
\label{e.S2}
\S_2 := \min\{ n\in\N:  \text{ there is no  } \mathcal C_> \text{ cluster in } \cu_n  \text{ with diameter larger than }  3^{n/4}\} \,.
 \end{align}
By the percolation estimate~\eqref{e.verysub} and a union bound,  there exists~$C(d)<\infty$ and~$\kappa(d)>0$, such that, for every $t \ge 1$, 
\begin{equation} 
\P[\S_2 > Ct] \leq Ce^{-\kappa \sqrt t} 
\,.
\end{equation}
In particular, $\S_2$ is almost surely finite and
\begin{equation} 
\label{e.si2}
\P[\S_2 > C3^m] \leq Ce^{-\kappa 3^{m/2}}.
\end{equation}
Let $\S':= \max\{ \S_F, \S_2\}$. 
Using~\eqref{e.firsttermMP} and the definition of $\S_2$ in~\eqref{e.S2},  we obtain, for every~$3^n \ge \S'$, and $m \in \N$ with~$3^m\leq \min\{ 3^n, \S_F \vee \S_2\}$,  we have 
\begin{equation*}
 \bigl| \bigl( -  \nabla \hat \ell_p  +  \nabla  \ell_p\bigr) _{y+\cu _{m}}\bigr|^{2}
 \le
 |p|^2 \max_{x\in \cu_m} \diam(\mathcal C_>(x))^2
 \le
 3^{n/2}|p|^2,
\end{equation*}
which implies
 \begin{equation}
\bigl| \bigl(  \nabla u(\cdot, y+\cu_m, \ahom_*(\cu_n)p) -  \nabla \hat \ell_p \bigr) _{y+\cu _{m}}\bigr|^{2}
\leq 
C 3^{n/2}|p|^2.
\end{equation}
By the second variation~\eqref{e.quadresp.nustar} and~\eqref{e.gradupper},  
  \begin{align*}
\lefteqn{ 
\left| \left(\nabla u(\cdot, y+ \cu_n, \ahom_*(\cu_n)p) -\nabla u(\cdot, y+ \cu_m, \ahom_*(\cu_n)p)\right) _{y+\cu _{m}}\right|^{2}
} \qquad & 
\notag 
\\
&
\leq 
C\ahom_*^{-1}(\cu_m) 
\frac1{|\cu_m|} \sum_{e\in \mathcal{E}(\cu_m)} 
\frac 12  (\nabla u(\cdot,  y+\cu_n, \ahom_*(\cu_n)p) -\nabla u(\cdot, y+\cu_m, \ahom_*(\cu_n)p) )^2 \cdot \a (e,\cdot)\\
&\leq
C \ahom_*^{-1}(\cu_m)  (\nu^*(\cu_m,\ahom_*(\cu_n)p) - \nu^*(\cu_n,\ahom_*(\cu_n)p)) \\
&\leq
2C \ahom_*^{-1}(\cu_m)  |\ahom_*(\cu_n)p|^2 |\ahom_*(\cu_m) -\ahom_*|\,.
   \end{align*}
   Substitute the above estimates into~\eqref{e.gradMP},  summing over $m$ and using the fact that $ \ahom_*(\cu_m)$ is uniformly bounded for all $m> n_0$,  we conclude that there exists $\alpha = \alpha (d) >0$ and $\kappa = \kappa (d) >0$,  such that 
\begin{align*}
\lefteqn{
3^{-n} \left\|  \nabla u(\cdot, \cu_n,\ahom_*(\cu_n) p) -  \nabla \hat \ell_p  \right\|_{\underline{H}^{-1}(\cu_{n})} 
} \qquad &
\notag \\ &
\leq
C3^{-n}|p| +
C\sum_{m=0}^{n_0}3^{m-n} \biggl( 
\frac1{\left|\mathcal{Z}_{m}\right|}
\sum_{y\in \mathcal{Z}_{m}}
 \left| \left(  \nabla u(\cdot, y+\cu_n, \ahom_*(\cu_n)p) -   \nabla \hat \ell_p \right) _{y+\cu _{m}}\right|^{2}  \biggr)^{\!\sfrac12}. \notag\\
 &\qquad +C\sum_{m=n_0}^{n-1}3^{m-n} \biggl( 
\frac1{\left|\mathcal{Z}_{m}\right|}
\sum_{y\in \mathcal{Z}_{m}}
 \left| \left(  \nabla u(\cdot, y+ \cu_n, \ahom_*(\cu_n)p) -   \nabla \hat \ell_p \right) _{y+\cu _{m}}\right|^{2}  \biggr)^{\!\sfrac12}. \notag\\
& \leq
 C |p| 3^{-n}+
C\sum_{m=0}^{n_0}3^{m-n} 3^{n/4}|p| + C\sum_{m=n_0}^{n-1}3^{m-n} |p| 3^{- \frac 12 m(d-2+ \kappa \frac{d+2}{4(d-1)})}
\\ & \qquad 
+ 
C\sum_{m=n_0}^{n-1}3^{m-n}  |\ahom_*(\cu_m) -\ahom_*|^\frac 12 |p| \\
&\leq
C\sum_{m=n_0}^{n-1}3^{m-n} |\ahom_*(\cu_m) -\ahom_*|^\frac 12 |p|  + C 3^{-\alpha n} |p|.
    \end{align*}
This completes the proof of~\eqref{e.H-1u}.  
We obtain~\eqref{e.corrasconv} immediately from~\eqref{e.conv.bars} and~\eqref{e.H-1u}. 
\end{proof}

We show next that the two limiting matrices~$\ahom$ and~$\ahom_*$ in~\eqref{e.conv.bars} are in fact equal. 

\begin{lemma}
We have that~$\ahom_* = \ahom$.
\end{lemma}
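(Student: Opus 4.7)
The plan is to establish the two opposite inequalities $\ahom\ge\ahom_*$ and $\ahom\le\ahom_*$ separately, which by symmetry of both matrices forces equality.

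For $\ahom\ge\ahom_*$, I would pass to the limit $n\to\infty$ in the Fenchel inequality of Lemma~\ref{l.fenchel}. The error term $3^{-\frac12 n(d-2+\kappa(4-d)/(2(d-1)))}$ vanishes in every dimension $d\ge 2$, since the exponent equals $\kappa>0$ for $d=2$ and remains strictly positive for $d\ge 3$ (provided $\kappa$ is small enough when $d$ is large). Combining with the convergences $\ahom(\cu_n)\to\ahom$ and $\ahom_*(\cu_n)\to\ahom_*$ from~\eqref{e.conv.bars}, this yields the limiting bound $\tfrac12 p\cdot\ahom p+\tfrac12 q\cdot\ahom_*^{-1}q\ge p\cdot q$ for every $p,q\in\R^d$. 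Specializing to $q=\ahom_* p$ gives $p\cdot\ahom p\ge p\cdot\ahom_* p$, i.e.\ $\ahom\ge\ahom_*$.

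For the reverse bound $\ahom\le\ahom_*$, I would construct for each $p\in\R^d$ an admissible Dirichlet test function whose energy is asymptotic to $\nu^*(\cu_n,\ahom_*(\cu_n)p)=\tfrac12 p\cdot\ahom_*(\cu_n)p$. Set $u_n:=u(\cdot,\cu_n,\ahom_*(\cu_n)p)$, so that $\mathsf E_{\cu_n}[u_n]=\nu^*(\cu_n,\ahom_*(\cu_n)p)$. By Lemma~\ref{l.corrector} combined with the large-scale Poincar\'e inequality (Proposition~\ref{p.poincare}), there exists a constant $c_n$ with $3^{-n}\|u_n-c_n-\hat\ell_p\|_{\underline L^2(\cu_n)}\to 0$. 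Pick a slow $\delta_n\to 0$ and a cutoff $\eta_n:\cu_n\to[0,1]$ with $\eta_n\equiv 1$ on $(1-\delta_n)\cu_n$, $\eta_n\equiv 0$ on $\partial\cu_n$ and $|\nabla\eta_n|\le C(\delta_n 3^n)^{-1}$, and define
\begin{equation*}
w_n:=\hat\ell_p+\eta_n(u_n-c_n-\hat\ell_p)\in\hat\ell_p+H^1_0(\cu_n).
\end{equation*}
On the interior $\{\eta_n\equiv 1\}$ one has $\nabla w_n=\nabla u_n$, contributing at most $\nu^*(\cu_n,\ahom_*(\cu_n)p)$ to $\mathsf E_{\cu_n}[w_n]$. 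The boundary-layer error $B_n:=\{\eta_n<1\}$ decomposes, via the product rule for $\nabla w_n$, into three pieces: a $\a|\nabla\hat\ell_p|^2$ contribution (controlled by~\eqref{e.d+1moment} and the vanishing of $\nabla\hat\ell_p$ on $\mathcal C_>$-interior edges), a $\a|\nabla u_n-\nabla\hat\ell_p|^2$ contribution (redistributed via the large-scale Poincar\'e inequality on a dyadic partition of $B_n$ into sub-cubes above $\S_P$), and a $(\nabla\eta_n)^2\a(u_n-c_n-\hat\ell_p)^2$ contribution (absorbed against the corrector $L^2$ decay by the slow choice of $\delta_n$, using the inverse-moment bound~\eqref{e.invmom}). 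Each piece is $o(1)$ as $n\to\infty$, so $\nu(\cu_n,p)\le\mathsf E_{\cu_n}[w_n]\le\tfrac12 p\cdot\ahom_*(\cu_n)p+o(1)$, and taking $n\to\infty$ gives $\ahom\le\ahom_*$.

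The main obstacle is the boundary-layer estimate on $\sum_{e\in B_n}\a(e)|\nabla u_n(e)|^2$: since $\a$ is unbounded above, a naive volume-fraction bound gives only $O(|p|^2)$ rather than the required $o(1)$. This is resolved via the subcritical percolation structure from Section~\ref{s.perco}: the high-conductance clusters in $B_n$ are suppressed by the good-cube property, and on the complement the inverse-moment bound together with the large-scale Poincar\'e inequality provides effective uniform ellipticity after coarse-graining above $\S_P$. This is precisely the kind of ``ellipticity after renormalization'' highlighted in the introduction, and it is what makes the coarse-graining strategy succeed despite the lack of any a priori upper bound on $\a$.
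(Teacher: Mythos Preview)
Your argument for $\ahom\ge\ahom_*$ via the Fenchel inequality is correct and is exactly what the paper does.

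For the reverse inequality your test function $w_n=\hat\ell_p+\eta_n(u_n-c_n-\hat\ell_p)$ is the right object, but two steps in the boundary-layer analysis do not go through as written. First, the term $\frac{1}{|\cu_n|}\sum_{e}(\nabla\eta_n)^2\a(e)(u_n-c_n-\hat\ell_p)^2$ cannot be controlled by the inverse-moment bound~\eqref{e.invmom}: that bound concerns $\a^{-p}$, whereas here you must dominate $\a$ itself, which has no finite positive moments. The paper's remedy is to replace $\eta_n$ by its hatted version $\hat\eta_n$ (via~\eqref{e.hat}); then $\nabla\hat\eta_n$ vanishes on every edge interior to a high-conductance cluster, and on the remaining edges one has $\a(e)\le KM_0$ deterministically, so $\a|\nabla\hat\eta_n|^2$ has a pointwise bound involving only cluster diameters. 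This is then put in $L^{(2+\delta)/\delta}$ and paired by H\"older with $\|u_n-\hat\ell_p\|_{\underline L^{2+\delta}}^2$, which in turn requires an interpolation between the $\underline L^2$ smallness from Lemma~\ref{l.corrector} and an $\underline L^q$ bound obtained via Sobolev and the inverse moments of~$\a$.

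Second, your plan to make the piece $\frac{1}{|\cu_n|}\sum_{e\in B_n}\a(e)|\nabla u_n-\nabla\hat\ell_p|^2$ small ``via the large-scale Poincar\'e inequality'' does not work: Poincar\'e controls function oscillation by energy, not the other way around, and there is no a priori reason the energy of $u_n$ avoids the boundary layer. The paper sidesteps this entirely by a different splitting: one writes $(\nabla w)^2\le(1+\eps)\hat\eta^{\,2}(\nabla u_n)^2+C_\eps\bigl[(\nabla\hat\eta)^2(u_n-\hat\ell_p)^2+(1-\hat\eta)^2(\nabla\hat\ell_p)^2\bigr]$ and absorbs the first term, summed over all of $\cu_n$, directly into $(1+\eps)\nu^*(\cu_n,\ahom_*(\cu_n)p)$ using $\hat\eta\le 1$. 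Only the two remaining error terms need to vanish, and they do by the mechanisms above and by~\eqref{e.d+1moment}. Sending $n\to\infty$, then the boundary-layer parameter $N_0\to\infty$, then $\eps\to 0$ finishes the proof.
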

\begin{proof}
The inequality~$\ahom_* \leq \ahom$ is easy. To see this, we apply the Fenchel duality~\eqref{e.fenchel} and the representation of~$\nu(\cu_n,p)$ and~$\nu^*(\cu_n,p)$,  to deduce that, for every 
$n\in\N$ with $3^n > \S_F$ and~$p,q\in\Rd$,
\begin{equation} 
\frac 12 p \cdot \ahom(\cu_n) p+\frac 12 q \cdot \ahom_*^{-1}(\cu_n) q
\ge
pq- C3^{-\frac12 n(d-2+ \kappa \frac{4-d}{2(d-1)})} |p||q| 
\,.
\end{equation}
Taking $q= \ahom(\cu_n)p$ we conclude $(1+C3^{-\frac12 n(d-2+ \kappa \frac{4-d}{2(d-1)})}) \ahom(\cu_n) \ge \ahom_*(\cu_n)  $.  Sending $n\to \infty$ yields~$\ahom_* \le \ahom$.

\smallskip

We turn to the proof of the reverse inequality,~$\ahom_* \le \ahom$. Define the boundary layer
$$
\mathcal B_{n,m}: = \{z+\cu_m: z\in 3^m \Zd \cap \cu_n,  \dist(z, \partial \cu_n) = 3^m\}\,. 
$$
We will work with sufficiently large~$n$ for which~$3^n$ is larger than an appropriate random minimal scale.

 Let $N_0<\infty$ be a large enough constant to be fixed later.  Recall the definition of $\S_2$ in~\eqref{e.S2}.  For every $3^n > \S_2$, take $\eta\in C_c^\infty(\Rd)$ to be a smooth cutoff function such that $\eta =1 $ in $\cu_n \setminus \mathcal B_{n,n-N_0}$ and $\eta =0 $ on $\mathcal B_{n,n/4}$.  Define 
$$
w := \hat \eta u (\cdot, \cu_n,  \ahom_*(\cu_n)p)+ (1-\hat \eta) \hat \ell_p\, ,
$$
where we recall the hat operation of a function defined in~\eqref{e.hat}.  
Clearly,  $w \in  \hat \ell_p + H_0^1 (\cu_n)$.  
Testing the minimizer of $\nu(\cu_n, p) $ with $w$ then yields
\begin{align*}
\frac 12 p \cdot \ahom(\cu_n) p
&\le
  \frac{1}{|\cu_n|} \sum_{e\in \mathcal{E}(\cu_n)}
\frac12 \nabla w \cdot \a \nabla  w\,.
\end{align*}
Notice that 
$$
\nabla w = \nabla \hat \eta  (u (\cdot, \cu_n,  \ahom_*(\cu_n)p) - \hat \ell_p) +  \hat \eta \nabla u (\cdot, \cu_n,  \ahom_*(\cu_n)p) +  (1-\hat \eta) \nabla \hat \ell_p .
$$
Therefore by Young's inequality,  we have for any $\eps>0$,  there exists $C_\eps <\infty$,  such that 
\begin{align*}
(\nabla w )^2 &\leq 
(1+\eps) (\hat \eta )^2 (\nabla u (\cdot, \cu_n,  \ahom_*(\cu_n)p) )^2 + 
C_\eps \left(( \nabla \hat \eta)^2  (u (\cdot, \cu_n,  \ahom_*(\cu_n)p) - \hat \ell_p)^2  
+ (1-\hat \eta)^2 (\nabla \hat \ell_p)^2\right) \\
&\leq
(1+\eps)  (\nabla u (\cdot, \cu_n,  \ahom_*(\cu_n)p) )^2 + 
C_\eps \left(( \nabla \hat \eta)^2  (u (\cdot, \cu_n,  \ahom_*(\cu_n)p) - \hat \ell_p)^2  
+ (1-\hat \eta)^2 (\nabla \hat \ell_p)^2\right),
\end{align*}
where we used the deterministic bound $0\le \hat \eta \le 1$ to obtain the last inequality.  Thus 
\begin{align}
\frac 12 p \cdot \ahom(\cu_n) p
&\le 
(1+\eps)   \frac{1}{|\cu_n|} \sum_{e\in \mathcal{E}(\cu_n)}
\frac12 \nabla u (\cdot, \cu_n,  \ahom_*(\cu_n)p) \cdot \a \nabla  u (\cdot, \cu_n,  \ahom_*(\cu_n)p) \notag \\
& \qquad + 
C_\eps \frac{1}{|\cu_n|} \sum_{e\in \mathcal{E}(\cu_n)} ( \nabla \hat \eta(e))^2  \a(e) (u (\cdot, \cu_n,  \ahom_*(\cu_n)p) - \hat \ell_p)^2
\notag \\ & \qquad   
+ 
C_\eps \frac{1}{|\cu_n|} \sum_{e\in \mathcal{E}(\mathcal B_{n, n-N_0+1})}  \a(e)  (\nabla \hat \ell_p(e))^2
\notag \\ &
\leq
(1+\eps) \frac12 p \cdot \ahom_*(\cu_n) p+ C_\eps \frac{1}{|\cu_n|} \sum_{e\in \mathcal{E}(\cu_n)} ( \nabla \hat \eta(e))^2  \a(e)(u (\cdot, \cu_n,  \ahom_*(\cu_n)p) - \hat \ell_p)^2  
\notag  \\ & \qquad  
+ 
C_\eps \frac{1}{|\cu_n|} \sum_{e\in \mathcal{E}(\mathcal B_{n, n-N_0+1})}  \a(e) (\nabla \hat \ell_p(e))^2.
\label{e.ahomupper}
\end{align}
Here we used the fact that since $n >\S_2$,  by taking $n$ large enough such that $3^{n-N_0+1}>3^{n/4}$,  $ \hat \eta (x)= 1$ if $x\in \cu_n \setminus B_{n, n-N_0+1}  $ to obtain the first inequality,  and using the definition of $\nu^*$ to obtain the second.  

For the rest of the proof,  we take $n\in\N$ such that $3^n > \max\{\S, 3^{N_0}\S_1,\S_2\}$,  where we recall the defition of $\S$ and $\S_1$ in Definition \ref{d.good} and Lemma \ref{l.goodS}.   By H\"older's inequality, for any $\delta>0$ we have 
\begin{multline}
\label{e.Holder}
\frac{1}{|\cu_n|} \sum_{e\in \mathcal{E}(\cu_n)} ( \nabla \hat \eta(e))^2 \a(e)  (u (\cdot, \cu_n,  \ahom_*(\cu_n)p) - \hat \ell_p)^2 
\\ 
\leq
\| u (\cdot, \cu_n,  \ahom_*(\cu_n)p) - \hat \ell_p\|_{\underline L^{2+\delta}(\cu_n)}^{2}
\| \nabla \hat \eta  \a \nabla \hat \eta  \|_{\underline L^{\frac{2+\delta}{\delta}}(\cu_n)} \,.
\end{multline}
We also notice that 
\begin{equation}
\nabla \hat \eta (e)  \a (e)\nabla \hat \eta (e) \leq 
\left\{
\begin{aligned}
& C_1 3^{2N_0}3^{-2n},  \quad \text{ if }  e\in \Z^d \setminus  \mathcal C_{>},
\\
&C_2 3^{2N_0} 3^{-2n}\diam{\mathcal C_{>}(x)}^2, \quad \text{ if }  e=(x,y) \in \partial \mathcal C_{>}
\,.
\end{aligned}
\right.
\end{equation}
Therefore 
\begin{align*}
\| \nabla \hat \eta  \a \nabla \hat \eta  \|_{\underline L^{\frac{2+\delta}{\delta}}(\cu_n)}
\leq
C_13^{2N_0}3^{-2n} + C_2 3^{2N_0} 3^{-2n} \biggl( \frac{1}{|\cu_n|}  \sum_{x\in \cu_n\cap \partial \mathcal C_{>} }\diam{\mathcal C_{>}(x)}^{\frac{4+2\delta}{\delta}} \biggr)^{\frac \delta{2+\delta}}\,.
\end{align*}
Upper bound the sum over the boundary by the volume,  and using the definition of good cube,  we conclude there exists $C_{\delta,d} <\infty$,  such that 
\begin{align*}
\sum_{x\in \cu_n\cap \partial \mathcal C_{>} }\diam{\mathcal C_{>}(x)}^{\frac{4+2\delta}{\delta}} 
 \leq
 |\cu_n\cap \mathcal C_{>} | \diam(\mathcal C_{>}\cap \cu_n)^{\frac{4+2\delta}{\delta}}
 \le
 \diam(\mathcal C_{>}\cap \cu_n)^{d+\frac{4+2\delta}{\delta}}
\le
C_{\delta,d} |\cu_n|.
\end{align*}
Thus 
$$
\| \nabla \hat \eta  \a \nabla \hat \eta  \|_{\underline L^{\frac{2+\delta}{\delta}}(\cu_n)}
\le
(C_1 + C_2C_{\delta,d})3^{2N_0}3^{-2n}\,.
$$
To bound the other term in~\eqref{e.Holder}, we use~$L^p$ interpolation in the form
\begin{align}
\label{e.interpolate}
\lefteqn{
\| u (\cdot, \cu_n,  \ahom_*(\cu_n)p) -  \hat \ell_p\|_{\underline L^{2+\delta}(\cu_n)}
} \qquad & 
\notag \\ & 
\leq
C\|u (\cdot, \cu_n,  \ahom_*(\cu_n)p) -  \hat \ell_p\|_{\underline L^{2}(\cu_n)}^\frac 12 \| u (\cdot, \cu_n,  \ahom_*(\cu_n)p) -  \hat \ell_p\|_{\underline L^{q}(\cu_n)}^\frac 12\,,
\end{align}
where $q:= \frac {2(2+\delta)}{2-\delta} $. 
We apply Lemma~\ref{l.H-1} and Lemma~\ref{l.corrector} to deduce that 
\begin{multline*}
3^{-n}\|u (\cdot, \cu_n,  \ahom_*(\cu_n)p) -  \hat \ell_p - (u (\cdot, \cu_n,  \ahom_*(\cu_n)p) -  \hat \ell_p)_{\cu_n}\|_{\underline L^{2}(\cu_n)}
\\
\leq 
C 3^{-n} \left\|\nabla u(\cdot, \cu_n,  \ahom_*(\cu_n)p) - \nabla \hat \ell_p \right\|_{\underline{H}^{-1}(\cu_{n+1})}
\longrightarrow 0 
\quad \mbox{as} \ n \to \infty\,.
\end{multline*}
Applying the Poincar\'e inequality and~\eqref{e.graduave}, we obtain 
\begin{align*}
3^{-n}\| (u (\cdot, \cu_n,  \ahom_*(\cu_n)p) -  \hat \ell_p)_{\cu_n}\|_{\underline L^{2}(\cu_n)}
&
\le
C \| (\nabla u (\cdot, \cu_n,  \ahom_*(\cu_n)p) - \nabla  \hat \ell_p)_{\cu_n}\|_{\underline L^{2}(\cu_n)}
\\ & 
\le
C |p| 3^{- \frac 12 n(d-2+ \kappa \frac{d+2}{4(d-1)})}
\to 
0 \quad \mbox{as} \ n\to \infty\,.
\end{align*}
Therefore
\begin{equation*}
\lim_{n\to \infty} 3^{-n}\| u (\cdot, \cu_n,  \ahom_*(\cu_n)p) -  \hat \ell_p\|_{\underline L^{2}(\cu_n)}
= 
0\,.
\end{equation*}
To bound the $\underline L^q$ norm in the right side of~\eqref{e.interpolate}, we apply the Sobolev inequality which yields a constant $C_q <\infty$,  such that 
\begin{align*}
\| u (\cdot, \cu_n,  \ahom_*(\cu_n)p) - \hat \ell_p\|_{\underline L^{q}(\cu_n)}
\le
C_q 3^n \| \nabla u (\cdot, \cu_n,  \ahom_*(\cu_n)p) - \nabla \hat \ell_p\|_{\underline L^{q^*}(\cu_n)},
\end{align*}
where $\frac 1{q^*} = \frac 1q + \frac 1d$,  $q^* = \frac{2d(2+\delta)}{2d+4+(2-d)\delta}$.  Taking $\delta = \frac 1d$,  we have $q^* = \frac{4d+2}{2d+4+\frac{2-d}d}\in (1,2)$.   We then apply the H\"older inequality which yields
\begin{align*}
\| \nabla u (\cdot, \cu_n,  \ahom_*(\cu_n)p) - \nabla \hat \ell_p\|_{\underline L^{q^*}(\cu_n)}
\leq
\bigl\| \a^\frac 12 (\nabla u (\cdot, \cu_n,  \ahom_*(\cu_n)p) - \nabla \hat \ell_p) \bigr\|_{\underline L^{2}(\cu_n)}^{q^*} \bigl\| \a^{-q^*/2} \bigr\|_{\underline L^{p}(\cu_n)}^\frac 1p\,,
\end{align*}
where $p : = \frac{2d+4+\frac{2-d}d}{3+(d-2)/d}$. By the definition of the good cubes and Corollary~\ref{c.mgf}, which gives bounds on the moments of~$\a^{-1}$,  there exists $C(d)<\infty$ such that 
$$
\| \a^{-q^*/2}\|_{\underline L^{p}(\cu_n)}^\frac 1p
\le
2\,\bigl\langle \a^{-pq^*/2} \bigr \rangle_{\mu_\tau} ^\frac 1p
\le 
C\,.
$$
Combining the estimates above,  we conclude that 
\begin{multline*}
\frac{1}{|\cu_n|} \sum_{e\in \mathcal{E}(\cu_n)} ( \nabla \hat \eta(e))^2  \a(e) (u (\cdot, \cu_n,  \ahom_*(\cu_n)p) - \hat \ell_p)^2
\\  
\le
C_{d }3^{2N_0-n} \bigl\|\nabla u(\cdot, \cu_n,  \ahom_*(\cu_n)p) - \nabla \hat \ell_p \bigr\|_{\underline{H}^{-1}(\cu_{n+1})}\,,
\end{multline*}
which vanishes~$\P$--a.s.~as $n\to \infty$ by Lemma~\ref{l.corrector}. 

Finally,  we notice that by~\eqref{e.graddiff.2}, and controlling the boundary of $\mathcal C_>$ by its volume,   we have 
\begin{equation}
 \frac{1}{|\cu_n|} \sum_{e\in \mathcal{E}(\mathcal B_{n, n-N_0+1}) \cap \partial \mathcal C_>}
\a(e) (\nabla  \hat\ell_p (e))^2 \\
\leq 
C_d KM|p|^2 \frac{\diam ( \mathcal C_>\cap \mathcal B_{n,n-N_0+1})^{d+2}}{|\cu_n|} \,.
\end{equation}
Recall that $n>N_0+ \log_3 \S_1$,  where $\S_1$ is the minimal scale at which \eqref{e.d+1moment} holds. We may therefore tile $\mathcal B_{n,n-N_0+1}$ with cubes of size $\cu_{n-N_0}$ to conclude 
\begin{align*}
C_d KM|p|^2 \frac{\diam ( \mathcal C_>\cap \mathcal B_{n,n-N_0+1})^{d+2}}{|\cu_n|}  
\leq
C_d KM  |p|^2 3^{-N_0}\,.
\end{align*}
This implies 
\begin{equation}
\frac{1}{|\cu_n|} \sum_{e\in \mathcal{E}(\mathcal B_{n, n-N_0+1})}   \a(e) (\nabla \hat \ell_p(e))^2
\leq
(C_d+1) KM |p|^2 3^{-N_0+1}\,.
\end{equation}
Therefore by first sending $n \to \infty$ and then $N_0 \to \infty$ in~\eqref{e.ahomupper}  we obtain 
$$
\frac 12 p \cdot \ahom p 
\le
(1+\eps) \frac 12 p \cdot \ahom_* p \,.
$$
Sending $\eps \to 0$ yields $\ahom \le \ahom_*$ and completes the proof of the lemma.
\end{proof}

\subsubsection{Homogenization theorem}

In this section we state Theorem~\ref{t.homogenize},   which shows a degenerate elliptic PDE~\eqref{e.HSfullvol} homogenizes to a constant coefficient equation~\eqref{e.homogpde} on large-scales.  We will also explain the connection between the PDE~\eqref{e.HSfullvol} and the central limit theorem and derive Theorem~\ref{t.CLT}.

\begin{theorem}
\label{t.homogenize}
Let~$\f$ satisfies the conditions in Theorem~\ref{t.CLT}, and let~$\a$ be the diagonal matrix with~$\a(e,e) = e^{\tau_e}$, where~$\tau$ is sampled from the Gibbs measure~\eqref{e.mutau}. Let~$u_R$ and~$u$ denote, respectively, the solutions of the equations
\begin{equation}
\label{e.HSfullvol}
- \nabla \cdot \a \nabla u_R =  \nabla\cdot \f\Bigl(\frac{\cdot}{R}\Bigr)
\quad \mbox{in}  \ \Zd
\end{equation}
and 
\begin{equation}
\label{e.homogpde}
- \nabla \cdot \ahom \nabla \bar u = \nabla\cdot \fhom  \quad \mbox{in} \ \Rd
\,. 
\end{equation}
Let $\S,\S'$ be the minimal scales defined in Definition~\ref{d.good} and Corollary~\ref{c.lowerbd} respectively.  Let $n_0\in\N$ be such that $3^{n_0} \ge \max\{\S, \S'\}$. Then,  for every $R>3^{n_0}$ , we have
\begin{equation}
\label{e.energy}
\left|  ( \nabla u_R, \a \nabla u_R) 
- (\nabla \bar u, \ahom \nabla \bar u) \right|
=o_R(1),
\end{equation}
\end{theorem}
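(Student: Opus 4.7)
My plan is to exploit the variational characterization of $u_R$ together with the convergence of the subadditive quantities $\nu,\nu^*$ to the dual quadratic forms $\tfrac12 p\cdot\ahom p$ and $\tfrac12 q\cdot\ahom^{-1} q$, which is the content of Lemma \ref{l.simpconv} combined with the identification $\ahom_*=\ahom$. The solution $u_R$ is the unique minimizer of $J_R(v):=\tfrac12\sum_e \a\nabla v\cdot \nabla v+\sum_e \f_R\cdot \nabla v$, while $\bar u$ minimizes $\bar J(\bar v):=\tfrac12 \int \ahom\nabla\bar v\cdot \nabla\bar v + \int \f\cdot \nabla \bar v$; in both cases testing the equation against the solution gives $J(\mathrm{min})=-\tfrac12(\text{energy})$, so it suffices to bracket the energy of $u_R$ from above and below by the energy of $\bar u$ plus $o_R(1)$.

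The main step is to pick an intermediate mesoscopic scale $3^m$ with $3^m\ge\S\vee\S'$ and $3^m/R\to 0$ as $R\to\infty$, and partition a large cube containing $\supp \f(\cdot/R)$ into disjoint triadic sub-cubes $\{z_i+\cu_m\}_i$ on which the macroscopic data are essentially constant; set $p_i:=\nabla\bar u(z_i/R)$ and $q_i:=\ahom p_i$. For the upper bound on $\sum_e \a\nabla u_R\cdot\nabla u_R$, I build a two-scale test function $w_R$ by gluing the Dirichlet minimizers $v(\cdot,z_i+\cu_m,p_i)$ of $\nu$ through a smooth partition of unity. A Riemann-sum argument using $\nu(z_i+\cu_m,p_i)\to \tfrac12 p_i\cdot\ahom p_i$ gives $\tfrac12\sum_e \a\nabla w_R\cdot\nabla w_R\to \tfrac12\int\ahom\nabla\bar u\cdot\nabla\bar u$ (after the usual $R^{-d}$ renormalization), while summation by parts applied to $\sum_e \f_R\cdot\nabla w_R$ produces $-\int\ahom\nabla\bar u\cdot\nabla\bar u$ in the limit. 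The minimality $J_R(u_R)\le J_R(w_R)$ then yields the upper bound.

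For the matching lower bound I dualize through $\nu^*$: apply the Fenchel inequality (Lemma \ref{l.fenchel}) cube by cube against the Neumann minimizers $u(\cdot,z_i+\cu_m,q_i)$, whose mean gradient equals $p_i$ and mean flux equals $q_i$ by \eqref{e.spatial}. Gluing these test functions through a partition of unity and using $\nu^*(z_i+\cu_m,q_i)\to\tfrac12 q_i\cdot\ahom^{-1}q_i$ gives $\sum_e \a\nabla u_R\cdot\nabla u_R\ge \sum_e \ahom\nabla\bar u_R\cdot\nabla\bar u_R-o_R(1)$, where $\bar u_R$ denotes the canonical discrete interpolant of $\bar u$.

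The main obstacle will be controlling the two error sources produced by cube-by-cube patching: the boundary-layer contribution where the cutoffs have gradient of order $3^{-m}$, and the high-conductance clusters $\mathcal C_>$ straddling the cube boundaries. The boundary layer is handled exactly as in the proof of $\ahom_*=\ahom$ above, via Lemma \ref{l.corrector} (to bound $\nabla v(\cdot,z_i+\cu_m,p_i)-\nabla\hat\ell_{p_i}$ in $\underline H^{-1}$), the large-scale Poincaré inequality (Proposition \ref{p.poincare}), and $L^p$-interpolation against the inverse-moment bounds from Corollary \ref{c.mgf}; the $\mathcal C_>$-clusters are handled by working with the hat functions $\hat\ell_{p_i}$ from \eqref{e.hat} and invoking the percolation estimate \eqref{e.verysub} to bound $\diam(\mathcal C_>\cap (z_i+\cu_m))^{d+2}/|\cu_m|$. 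The mesoscopic scale $3^m$ must be chosen to grow slowly enough with $R$ to make boundary contributions $o(1)$, yet fast enough for the mesoscopic convergence of $\nu,\nu^*$ to have taken hold; since all the underlying scales ($\S,\S',\S_F$) are $\P$-a.s. finite, such a choice exists.
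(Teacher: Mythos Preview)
Your overall variational strategy---upper bound on the energy via a glued test function built from the $\nu$-minimizers, lower bound via the dual quantity $\nu^*$---is exactly the paper's approach. Two points of divergence are worth noting.

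\smallskip

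\emph{Cubes versus simplexes.} The paper partitions into triadic \emph{simplexes} $\triangle_n^\pi(z)$ rather than cubes, and this is not cosmetic. The Dirichlet minimizer $v(\cdot,U,p)$ equals $\hat\ell_p$ on $\partial U$, so when you glue minimizers with varying slopes $p_i=\nabla\bar u(z_i)$ across adjacent regions, you need the piecewise-affine boundary data to match on shared faces. On a simplicial mesh this is automatic: the piecewise-linear interpolant of $\bar u$ at simplex vertices is continuous and has constant gradient on each simplex. On a cubical mesh it is not, which is why you are forced into a partition of unity. That route is workable, but the cutoff gradients produce cross-terms of the form $\a|\nabla\eta_i|^2|v_i-\hat\ell_{p_i}|^2$ \emph{summed over all cube boundaries}, not just a single boundary layer as in the proof of $\ahom_*=\ahom$; you would need the $\underline H^{-1}$ corrector bound (Lemma~\ref{l.corrector}) uniformly over the $\sim(R/3^m)^d$ subcubes, together with the interpolation and inverse-moment arguments, and then still show the accumulated error is $o_R(1)$. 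The paper sidesteps all of this by using simplexes and a small additive correction $\tilde\ell-\hat\ell$ supported on $\mathcal C_>$.

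\smallskip

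\emph{The lower bound.} Your description here is muddled. There is no need to invoke the Fenchel inequality or to glue Neumann minimizers. The lower bound comes directly from the \emph{definition} of $\nu^*$: for each simplex (or cube) you test $\nu^*(\triangle_n^\pi(z),q)$ with $u_R$ itself, obtaining
\[
\frac{1}{|\triangle_n^\pi(z)|}\sum_{e\in\mathcal E(\triangle_n^\pi(z))}\tfrac12\nabla u_R\cdot\a\nabla u_R
\;\ge\;
\frac{1}{|\triangle_n^\pi(z)|}\sum_{e}q\cdot\nabla u_R-\nu^*(\triangle_n^\pi(z),q),
\]
then choose $q=\ahom\nabla\bar u(z_\pi)$, sum over the partition, and use the convergence $\nu^*\to\tfrac12 q\cdot\ahom^{-1}q$ together with summation by parts on the linear term. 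No patching of test functions is required on this side; the solution $u_R$ is already globally defined. The Neumann minimizers play no role in the lower bound.
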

\begin{proof}
By integration by parts, we have $( \nabla u_R, \a \nabla u_R)  = -(\f,   \nabla u_R) $ and $ (\nabla \bar u, \ahom \nabla \bar u)  =- (\fhom,  \nabla \bar u)$.  
Denote the energy quantities associated with~\eqref{e.HSfullvol} and~\eqref{e.homogpde} by
\begin{align*}
\left\{ 
\begin{aligned}
&
\mathsf E_{R, \f} [u_R] = ( \nabla u_R, \a \nabla u_R) - (\f,   \nabla u_R) \, , \\ & 
\mathsf E_{\fhom} [\bar u] = (\nabla \bar u, \ahom \nabla \bar u)- (\fhom,  \nabla \bar u) \, .
\end{aligned}
\right. 
\end{align*}
Therefore~\eqref{e.energy} is equivalent to the convergence of $|R^{-d}\mathsf E_{R, \f} [u] - \mathsf E_{\fhom} [\bar u] |$.  In the proof we work with three scales $R\gg n\gg n_0$,  where $n$ is the scale at which the subadditive quantities and its minimizers almost converge,  and $R$ is the scale at which homogenization occurs.  

\emph{The lower bound.} To prove the lower bound,  take $[\log_3 R] > n> 2n_0$,  and tile $\Zd$ with the simplexes $S_n := \{ \triangle_n^\pi(z)  \,:\, n\in\Z\,,\, z\in 3^n\Zd\}$.   Notice that,  for every $z\in 3^n\Zd$ and $\pi \in \mathcal P$,  by testing the definition 
of $\nu^*$ with $u_R$,  we have 
\begin{equation*}
\nu^* ( \triangle_n^\pi(z) , q) \ge
\frac 1 {| \triangle_n^\pi(z)|}  \sum_{e\in \mathcal E( \triangle_n^\pi(z))} \Bigl( -\frac 12 \nabla u_R \cdot \a\nabla u_R + q\cdot \nabla u_R    \Bigr)\,.
\end {equation*}
In other words,
\begin{equation*}
\frac 1 {| \triangle_n^\pi(z)|}  \sum_{e\in \mathcal E( \triangle_n^\pi(z))} \Bigl( \frac 12 \nabla u_R \cdot \a\nabla u_R  + \f \cdot \nabla u_R  \Bigr)
\ge
\frac 1 {| \triangle_n^\pi(z)|}  \sum_{e\in \mathcal E( \triangle_n^\pi(z))}  (q+\f) \cdot \nabla u_R
- \nu^* ( \triangle_n^\pi(z) , q) \,.
\end {equation*}
Denote by $z_\pi$ the center of the simplex $ \triangle_n^\pi(z)$. By taking $ q= \ahom  \nabla \bar u(z_\pi) $,  and applying Lemma~\ref{l.simpconv} which yields 
\begin{align*}
 \nu^* ( \triangle_n^\pi(z) ,  \ahom  \nabla \bar u(z_\pi)) &= \bar \nu^* (  \ahom  \nabla \bar u(z_\pi) ) + o_n(1) |\nabla \bar u(z_\pi)|^2  = \frac 12  \nabla \bar u(z_\pi) \cdot \ahom  \nabla \bar u(z_\pi)+ o_n(1)|\nabla \bar u(z_\pi)|^2\,,
\end{align*}
 we obtain that 
\begin{align*}
\lefteqn{ \frac 1 {| \triangle_n^\pi(z)|}  \sum_{e\in \mathcal E( \triangle_n^\pi(z))} \left( \frac 12 \nabla u_R \cdot \a\nabla u_R  + \f \cdot \nabla u_R  \right) 
} \qquad & 
\notag \\ &
\ge
\frac 1 {| \triangle_n^\pi(z)|}  \sum_{e\in \mathcal E( \triangle_n^\pi(z))} \left( -\frac 12 \nabla \bar u (z_\pi)\cdot \ahom \nabla \bar u (z_\pi)+ ( \ahom \nabla \bar u (z_\pi) +\f) \cdot \nabla u_R   \right)+o_n(1)|\nabla \bar u(z_\pi)|^2 
\,.
 \end {align*}
Summing over the simplexes, we obtain 
\begin{align*}
\lefteqn{ R^{-d} \mathsf E_{R, \f} [u_R] } \qquad & 
\notag \\ & 
\ge
R^{-d}  \sum_{e\in \mathcal E( \Zd)} \left( -\frac 12 \nabla \bar u (z_\pi)\cdot \ahom \nabla \bar u (z_\pi)+ ( \ahom \nabla \bar u (z_\pi) +\f) \cdot \nabla u_R   \right)+o_n(1) \sum_{z\in 3^n \Zd} |\nabla \bar u(z_\pi)|^2 
\,.
\end {align*}
To estimate the cross term,  define the modulus of continuity 
\begin{align}
\label{e.omegaR}
\omega_R := \max_{x\in \triangle_n^\pi(z)} \left|  \f'(x)+ \nabla\nabla \bar u(x) \right|\,,
 \end {align}
 it is clear that for every $x\in \triangle_n^\pi(z)$,  $|\nabla\bar u(x) - \nabla\bar u (z_\pi) | \le \frac{3^n}R \omega_R$.  We may then write 
 \begin{align*}
& R^{-d}  \sum_{e\in \mathcal E( \Zd)} ( \ahom \nabla \bar u (z_\pi) +\f) \cdot \nabla u_R 
 \ge
  R^{-d}  \sum_{e\in \mathcal E( \Zd)} ( \ahom  \nabla \bar u+ \f)   \cdot \nabla u_R
  - \frac{3^n}R   R^{-d}  \sum_{e\in \mathcal E( \Zd)} \omega_R | \nabla u_R  |\,.
 \end {align*} 
 Using~\eqref{e.homogpde} we have for the lattice approximation 
  \begin{align*}
 | \sum_{e\in \mathcal E( \Zd)}   ( \ahom  \nabla \bar u+ \f)   \cdot \nabla u_R  |
 \le 
  \frac 1R \sum_{e\in \mathcal E( \Zd)} \omega_R(x) | \nabla u_R|\,.
   \end {align*} 
Therefore, by  Cauchy-Schwarz 
    \begin{align*}
     |R^{-d} \sum_{e\in \mathcal E( \Zd)}  ( \ahom  \nabla \bar u+ \f)   \cdot \nabla u_R  |
     &\le
      \Biggl( R^{-d}  \sum_{e\in \mathcal E( \Zd)} \a(e) | \nabla u_R(e)|^2 \Biggr)^\frac 12 
         \Biggl( R^{-d}  \sum_{e\in \mathcal E( \Zd)} \a^{-1}(e) \frac 1{R^2} | \omega_R(x)|^2 \Biggr)^\frac 12  \\
        & \le 
     \left( 2  R^{-d}  \mathsf E_{R,f}[u_R] \right)^\frac 12 
         \Biggl( R^{-d}  \sum_{e\in \mathcal E( \Zd)} \a^{-1}(e)\frac 1{R^2} | \omega_R(x)|^2 \Biggr)^\frac 12 .
           \end {align*} 
Denote by $\mathcal{E}(z+\cu_{n_0})$ the collection of the subcubes of size $3^{n_0}$ that  partitions $\Zd$,  by definition of the good cube,  
          \begin{align*}
          3^{-n_0 d}  \sum_{e\in \mathcal E( z+\cu_{n_0})} \a^{-1}(e) | \omega_R(x)|^2
          \le
          2 \langle \a^{-1}(e) \rangle_{\mu_\tau} \max_{x\in z+\cu_{n_0}} | \omega_R(x)|^2\,.
           \end {align*} 
Notice that for $|x|> 10R$ and any $k\in\N$,  $ |\nabla^k \bar u(x)| \le C|x|^{-(d-1+k)}$.  Therefore 
  \begin{align*}
\lefteqn{R^{-d}  \sum_{e\in \mathcal E( \Zd)} \a^{-1}(e) \frac 1{R^2}  | \omega_R(x)|^2
} \qquad & 
\notag \\   &     \leq
      R^{-d}  R^{-2}\sum_{z\in 3^{n_0}\Zd}   3^{n_0 d} 2 \langle \a^{-1}(e)  \rangle_{\mu_\tau} \max_{x\in z+\cu_{n_0}}  \omega_R(x)^2
      \\
   &    \leq
        R^{-d}R^{-2}\sum_{|z| < 10R}   C3^{n_0 d} 2 \langle \a^{-1}(e)  \rangle _{\mu_\tau}+  R^{-d}R^{-2}\sum_{|z| \ge 10R}   C3^{n_0 d} 2 \langle \a^{-1}(e)  \rangle_{\mu_\tau} \frac 1{|z|^{2d+2}}
  \leq
         \frac {C(n_0)}{R^2}.
        \end {align*} 
        This implies 
       $$
       |R^{-d} \sum_{e\in \mathcal E( \Zd)}  ( \ahom  \nabla \bar u+ \f)   \cdot \nabla u_R  |
       \le
        \frac {C(n_0)}R.
        $$ 
       Finally,  we notice that 
\begin{align*}
\lefteqn{ - \frac 1 {| \triangle_n^\pi(z)|}  \sum_{e\in \mathcal E( \triangle_n^\pi(z))} \left( \frac 12 \nabla \bar u (z_\pi)\cdot \ahom \nabla \bar u (z_\pi)  \right)
} \qquad &
\notag \\ &  
 \ge
        -   \fint_ {\triangle_n^\pi(z)} \frac 12 \nabla \bar u \cdot \ahom \nabla \bar u   
-
          \frac 1 {| \triangle_n^\pi(z)|}  \sum_{e\in \mathcal E( \triangle_n^\pi(z))} \left( \frac 12 \nabla(\bar u - \bar u (z_\pi))\cdot \ahom \nabla(\bar u - \bar u (z_\pi))  \right) \,.
           \end {align*} 
   The last term in the display above can be bounded by 
    \begin{align*}
    \frac C {| \triangle_n^\pi(z)|}  \sum_{e\in \mathcal E( \triangle_n^\pi(z))}  \frac{3^{2n}}{R^2}\omega_R^2 \,.
     \end {align*} 
Sum over all the simplexes,  and using the fact that $|\omega_R(z) | \le \frac C {|z|^d}$ for $|z|>2R$,  we have 
\begin{align*}
R^{-d}  \sum_{e\in \mathcal E(\Zd)}  \frac{3^{2n}}{R^2}\omega_R^2   \le
\frac {C3^{2n}}{R^2}\,.
 \end {align*} 
 Combining all the displays above and sending $R\gg n \to \infty$,  we conclude the lower bound  by an integration by parts
 \begin{align*}
 R^{-d}\mathsf E_{R, \f} [u_R]  
& \ge  \fint \ -\frac 12 \nabla \bar u \cdot \ahom \nabla \bar u  - o_R(1) = \fint  \Bigl(\frac 12 \nabla \bar u \cdot \ahom \nabla \bar u  + \f \cdot \nabla \bar u \Bigr) - o_R(1) .
  \end {align*} 
This completes the proof of the lower bound.     

\smallskip
   
\emph{The upper bound.}  We notice that $u_R$ is the minimizer for $\mathsf E_{R,\f} [\cdot]$. To find an upper bound,  for every $z\in 3^n\Zd$ and $\pi \in \mathcal P$,  Let $v^\pi_z$ be the minimizer
of $\nu(\triangle_n^\pi(z),  \nabla \bar u(z_\pi))$.  Define $\tilde u $ to be the gluing of $v^\pi(z)$: it is tempting to define
\begin{equation*} \label{}
\tilde u(x):= v^\pi_z(x,  \nabla \bar u(z_\pi)),\quad \text{if }x\in \triangle_n^\pi(z)\,.
\end{equation*}
However,  the function $v^\pi_z$ may not match along the boundary of the simplexes.  We will add perturbations to them so that it is well defined along the boundary.  Define 
\begin{equation}
\tilde \ell (x) := 
\left\{
\begin{aligned}
& \hat \ell_{\nabla \bar u(z_\pi)}(x),  \quad \text{ if }  x\in \triangle_n^\pi(z) \setminus  \mathcal C_{>},
\\
& \frac 1{|\partial \mathcal C_{>}(x)|} \sum_{y \in \partial \mathcal C_{>}(x)} g(y), \quad \text{ if }  x\in   \triangle_n^\pi(z)\cap \mathcal C_{>}
\,.
\end{aligned}
\right.
\end{equation}
where $g(y) = \nabla \bar u(z_\pi)$ if $y \in \triangle_n^\pi(z)$.   We then define 
\begin{equation*} \label{}
\tilde u(x):= v^\pi_z(x,   \nabla \bar u(z_\pi)) + \tilde \ell(x) -\hat \ell_{\nabla \bar u(z_\pi)}(x) ,\quad \text{if }x\in \triangle_n^\pi(z)\,.
\end{equation*}
We then have 
\begin{align*} 
\mathsf E_{R,\f} [u_R]
&\leq
\mathsf E_{R,\f} [\tilde u]
\\ &
= 
\sum_{z\in 3^n\Zd} \sum_{e\in \mathcal E( \triangle_n^\pi(z))} \left( \frac 12 \nabla \tilde u \cdot \a\nabla  \tilde u  + \f \cdot \nabla  \tilde u \right)  \\
&\leq
\sum_{z\in 3^n\Zd} \sum_{e\in \mathcal E( \triangle_n^\pi(z))} |\triangle_n^\pi(z)|  \left(\nu (\triangle_n^\pi(z),   \nabla \bar u(z_\pi)) +\f(z_\pi)   \nabla \bar u(z_\pi) \right)
\\ & \qquad 
+ \sum_{z\in 3^n\Zd} \sum_{e\in \mathcal E( \triangle_n^\pi(z))} |-\f(z_\pi) +  \f| \cdot \nabla  \tilde u \\
& \qquad + \sum_{z\in 3^n\Zd}\sum_{e\in \mathcal E( \triangle_n^\pi(z))} \frac 12 \nabla (\tilde \ell - \hat \ell_{\nabla \bar u(z_\pi)} ) \cdot \a\nabla (\tilde \ell - \hat \ell_{\nabla \bar u(z_\pi)} )
\\ & \qquad 
+\sum_{z\in 3^n\Zd} \sum_{e\in \mathcal E( \triangle_n^\pi(z))} \f(z_\pi) |   \nabla \bar u(z_\pi)-\nabla \tilde u |\,,
\end{align*} 
Applying Lemma~\ref{l.simpconv} yields 
$$ \nu(\triangle_n^\pi(z),  \nabla \bar u(z_\pi)) = \bar \nu(\nabla \bar u(z_\pi) ) + o_n(1)  |\nabla \bar u(z_\pi)|^2= \frac 12  \nabla \bar u(z_\pi)\cdot \ahom \nabla \bar u(z_\pi)  +o_n(1)|\nabla \bar u(z_\pi)|^2\,.$$
A similar argument as in the lower bound gives 
\begin{align*}
\lefteqn{
\Biggl| \frac 1 {| \triangle_n^\pi(z)|}  \sum_{e\in \mathcal E( \triangle_n^\pi(z))}   (-\f(z_\pi) +  \f) \cdot \nabla  v^\pi_z \Biggr|
} \qquad & 
\notag \\ & 
\leq
 \frac{3^n}R \frac 1 {| \triangle_n^\pi(z)|} \Biggl( \sum_{e\in \mathcal E( \triangle_n^\pi(z))} \a(e)  (\nabla v^\pi(z)(e) )^2\Biggr)^\frac 12 
  \Biggl( \sum_{e\in \mathcal E( \triangle_n^\pi(z))} \a^{-1}(e)  \Biggr)^\frac 12  \\
 & \leq
  C\frac{3^n}R\langle \a^{-1}(e) \rangle_{\mu_\tau} \nu(\triangle_n^\pi(z),  \nabla \bar u(z_\pi))
  \le
  C\frac{3^n}R\langle \a^{-1}(e) \rangle_{\mu_\tau} |\nabla \bar u(z_\pi)|^2\,.
  \end{align*} 
  Thus 
   \begin{align*}
  \Biggl| R^{-d}  \sum_{e\in \mathcal E(\Zd)}   (-\f(z_\pi) +  \f) \cdot \nabla  v^\pi_z\Biggr|
   \le
   C\frac{3^n}R\langle \a^{-1}(e) \rangle_{\mu_\tau}  R^{-d}3^n\sum_{z\in 3^n\Zd} |\nabla \bar u(z_\pi)|^2
   \le
   C'\frac{3^{2n}}R\langle \a^{-1}(e) \rangle_{\mu_\tau}\,.
   \end{align*} 
To estimate the perturbation term, we use that $\bar u $ is smooth, which yields, 
for $e=(x,y)\in \mathcal{E}' \cap \partial \mathcal C_>$,
\begin{align}
|\nabla \tilde\ell(e)- \nabla  \hat \ell_{\nabla \bar u(z_\pi)}(e)|
&
\leq 
\max_{|x-y| \le \diam ( \mathcal C_>(x)) + 2\diam|\cu_n|} |\nabla \bar u (x) -\nabla \bar u (y)| 
\notag \\ &
\le
C ( \diam ( \mathcal C_>(x))  + 3^n) R^{-1} \omega_{n,2} (x)\,,
\end{align}
where we apply the definition of good cubes Definition~\ref{d.good} to conclude $ \diam ( \mathcal C_>(x)) \le C_0 3^n$ for some $C_0<\infty$,   and  $$\omega_{n,2} (x): = \max_{|x-y| \le (C_0+  2)\diam|\cu_n|} |\nabla \nabla \bar u (y)|\,.$$
Using the boundary of a set can be controlled by its volume,  ~$|\partial \mathcal C_>| \leq | \mathcal C_>|\leq C_d \diam ( \mathcal C_>(x))^{d}$, and we deduce, for every $x\in \partial \mathcal C_>$,
\begin{align}
\lefteqn{
3^{-nd} \sum_{e\in \mathcal{E}( \triangle_n^\pi(z))) \cap \partial \mathcal C_>}
 \frac 12 \nabla (\tilde \ell - \hat \ell_{\nabla \bar u(z_\pi)} ) \cdot \a\nabla (\tilde \ell - \hat \ell_{\nabla \bar u(z_\pi)} ) 
 } \qquad & 
 \notag \\ &
\leq 
C ( \diam ( \mathcal C_>\cap \cu_n)  + 3^n)^{2}  \diam ( \mathcal C_>\cap \cu_n)^d  3^{-nd}R^{-2}\omega_{n,2}^2 (x)\,.
\end{align}
Using the definition of good cubes in Definition~\ref{d.good},   $ \diam ( \mathcal C_>\cap \cu_n) ^{d+2} \le C|\cu_n|$,  and therefore the above quantity is $O((3^n/R)^2\omega_{n,2}^2 (x))$.  
Using the summability of $\omega_{n,2}^2$,  we may therefore conclude 
\begin{align}
R^{-d} \sum_{e\in \mathcal E(\Zd)}  \frac 12 \nabla (\tilde \ell - \hat \ell_{\nabla \bar u(z_\pi)} ) \cdot \a\nabla (\tilde \ell - \hat \ell_{\nabla \bar u(z_\pi)} ) 
\leq
C (3^n/R)^2 R^{-d}\sum_{x\in\Zd}\omega_{n,2}^2 (x)
\leq
C (3^n/R)^2 .
\end{align}
Similarly,
\begin{align*}
&\biggl| R^{-d} \sum_{e\in \mathcal E(\Zd)}  \f(z_\pi) \cdot (\nabla \bar u(z_\pi) -  \nabla\tilde u )\biggr| \\
& \leq
\biggl| R^{-d} \sum_{e\in \mathcal E(\Zd)}  \f(z_\pi) \cdot (\nabla  \hat \ell_{\nabla \bar u(z_\pi)} -  \nabla  \tilde \ell_{\nabla \bar u(z_\pi)})\biggr|
+
\biggl| R^{-d} \sum_{e\in \mathcal E(\Zd)}  \f(z_\pi) \cdot (\nabla  \hat \ell_{\nabla \bar u(z_\pi)} -  \nabla \bar u(z_\pi)\biggr|
\notag \\ & 
\leq
\frac{C3^{n}}{R}\sum_{x\in\Zd}\omega_{n,2} (x)
\leq
\frac{C3^{n}}{R}
\,.
\end{align*} 
Therefore 
\begin{align}
\label{e.almostupp}
R^{-d}\mathsf E_{R,\f} [u_R]
&  \leq
R^{-d}  \sum_{z\in 3^n\Zd } | \triangle_n^\pi(z)| \left(\frac 12   \nabla \bar u(z_\pi)\cdot \ahom \nabla \bar u(z_\pi) + \f(z_\pi) \cdot \nabla \bar u(z_\pi)  \right)  +C\frac{3^{2n}}R\,.
\end{align} 
Notice that 
\begin{align*}
\lefteqn{
\fint_{| \triangle_n^\pi(z)|}  \left(\frac 12   \nabla \bar u(z_\pi)\cdot \ahom \nabla \bar u(z_\pi) +\f(z_\pi) \cdot \nabla \bar u(z_\pi)  \right) 
} \qquad & 
\notag \\ & 
\leq
\fint_{| \triangle_n^\pi(z)|}  \left(\frac 12   \nabla \bar u\cdot \ahom \nabla \bar u +\f \cdot \nabla \bar u  \right) \\ & \qquad 
+ 
\fint_{| \triangle_n^\pi(z)|}  \left(\frac 12   \nabla (\bar u(z_\pi) - \bar u)\cdot \ahom \nabla ( \bar u(z_\pi) -\bar u)+ \f(z_\pi) \cdot \nabla (\bar u(z_\pi)-\bar u)  \right) \,.
 \end{align*}      
Recall the definition of $\omega_R$ in~\eqref{e.omegaR}.  Similar to the proof of the lower bound, we may sum over all simplexes and using the fact that $|\omega_R(z) | \le \frac C {|z|^d}$ for $|z|>2R$,  to conclude 
\begin{align*}
 & \sum_{z\in 3^n\Zd } | \triangle_n^\pi(z)| \left(\frac 12   \nabla \bar u(z_\pi)\cdot \ahom \nabla \bar u(z_\pi) +\f(z_\pi) \cdot \nabla \bar u(z_\pi)  \right)  \\
 & \qquad \leq
   R^{-d}\mathsf E_{\f} [\bar u]+
R^{-d}  \sum_{e\in \mathcal E(\Zd)} C \frac{3^{2n}}{R^2} \omega_R^2 + \|
\f\|_{L^\infty} \frac{3^{n}}{R} \omega_R  \le
 R^{-d}\mathsf E_{\f} [\bar u]+  \frac {C3^n} R\,.
 \end {align*} 
Combine with \eqref{e.almostupp},  this  yields $R^{-d}\mathsf E_{R,\f} [u_R]\le R^{-d}\mathsf E_{\f} [\bar u]+  \frac {C3^{2n}} R +o_R(1)$.  Sending $R\gg n \to \infty$ completes the proof of the upper bound. 
\end{proof}

\subsubsection{Proof of the CLT}

We turn to the proof of Theorem~\ref{t.CLT}. 

\begin{proof}[{Proof of Theorem~\ref{t.CLT}}]

 Given $f,g: \Zd \to\R$,  define the inner product
$$
(f,g)_R:= R^{-d} \sum_{x\in\Zd} f(x)g(x)\,.
$$
Given $f,g: \Rd \to\R$,  define
$$
(f,g):= \int f(x)g(x) \,dx\,.
$$
We now prove the moments of $F_R$ converges to the moments of the limiting Gaussian random variable. Obviously the odd moments are all zero by symmetry.  To compute the even moments, we notice that since the $\phi$ marginal is Gaussian, integrating out the $\phi$ marginal and apply the Wick's theorem to compute the Gaussian moments, we have
\begin{equation*}
\var_\mu [F_R^{2k}] 
= \left\langle (\nabla\cdot \f_R, \phi)_R^{2k}  \right\rangle_{\mu_\tau}
= (2k-1)!! \left\langle (\nabla\cdot \f_R,  G_\a \nabla\cdot \f_R)_R^k \right\rangle_{\mu_\tau}\,.
\end {equation*}
Alternatively, the right side above can be written as 
\begin{equation*}
(2k-1)!! \left\langle (\f_R, \nabla u_R )_R^k \right\rangle_{\mu_\tau},
\end {equation*}
where $u_R$ is the solution of~\eqref{e.HSfullvol}. The $2k$th moment of the limiting Gaussian variable can be written as
\begin{equation*}
(2k-1)!! \int_{\Rd} \nabla \cdot \f (x) \cdot G_{\ahom}(x,y)\nabla \cdot \f (y)\,dxdy 
= 
(2k-1)!!  (\f, \nabla \bar u )^k,
\end {equation*}
where $\bar u$ is the solution of~\eqref{e.homogpde}. 

\smallskip

For the second moment,  by Theorem~\ref{t.homogenize}, as $R\to\infty$
\begin{align*}
 \left|(\f_R, \nabla u_R )_R - (\f, \nabla \bar u ) \right| 
= 
\left|  ( \nabla u_R, \a \nabla u_R)_R 
- (\nabla \bar u, \ahom \nabla \bar u) \right|
\to 0.
\end{align*}
For general even moments,  we use the Brascamp-Lieb inequality (Proposition~\ref{p.BL}) to obtain
\begin{align*} 
\langle (\f_R, \nabla u_R )_R^k  \rangle_{\mu_\tau}
&
=
\frac 1 {(2k-1)!!} \left\langle (\nabla\cdot \f_R, \phi)_R^{2k}  \right\rangle_\mu
\notag \\ &
\leq 
C_k \Biggl( R^{-d}\sum_{e=(j,k)\in \mathcal E(\Zd)} ((\Delta^{-1} \nabla\cdot \f_R)_j -  (\Delta^{-1} \nabla\cdot \f_R)_k)^2\Biggr)^k.
\end{align*}
Since $\nabla \Delta^{-1}\nabla $ is a bounded operator $L^2 \to L^2$,  we have 
\begin{equation*} 
\langle (\f_R, \nabla u_R )_R^k  \rangle _{\mu_\tau}
\leq 
C'_k  (\f_R,\f_R)_R^k 
<\infty\,.
\end{equation*}
Now fix $k\in\N$,  by interpolating between the $L^1$ convergence of $(\f_R, \nabla u_R )_R$ and the $L^{2k}$ bound,  we conclude that, as $R\to \infty$,
$$
\left\langle (\f_R, \nabla u_R )_R^k \right\rangle_{\mu_\tau}
\to 
(\f, \nabla \bar u )^k\,.
$$
This completes the proof. 
\end{proof}

\appendix
\section{Appendix}

We need the following two standard functional inequalities. For functions defined in subsets of the continuum~$\Rd$, these can be found for instance in~\cite{AKMbook}. The same arguments are easily adapted to functions on the lattice, see for instance~\cite{AD1}. 

\begin{proposition}[Multiscale Poincar\'e inequality]
\label{p.MP}Fix $m\in\N$ and denote $\mathcal{Z}_{n}=3^{n}\mathbb{Z}^{d}\cap \cu _{m}$.  
Then,  for every $u\in L^2(\cu_m)$,
\begin{equation*}
\left\| u  \right\|_{\underline{H}^{-1}( \cu_{m}) }
\leq
C3^m\left|\left(u\right)_{\cu_{m}}\right| 
+
C\sum_{n=0}^{m-1}3^{n}\left( 
\frac1{\left|\mathcal{Z}_{n}\right|}
\sum_{y\in \mathcal{Z}_{n}}
 \left| \left(  u \right) _{y+\cu _{n}}\right|^{2}  \right)^{\!\sfrac12}.
\end{equation*}
\end{proposition}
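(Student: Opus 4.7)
The plan is to pass to the dual formulation of the $\underline{H}^{-1}$ norm and exploit a telescoping decomposition of test functions across the triadic scales. With the appropriate normalization, it suffices to estimate the pairing $(u, v)_{\cu_m}$ uniformly over test functions $v$ with $\|\nabla v\|_{\underline{L}^2(\cu_m)} \le 1$, which by Poincar\'e forces $|(v)_{\cu_m}|$ and $\|v\|_{\underline{L}^2(\cu_m)}$ to be $O(3^m)$.

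The key idea is a multiscale decomposition of $v$. For each $n = 0, 1, \ldots, m$, I would introduce the piecewise constant approximation $[v]_n : \cu_m \to \R$ defined by $[v]_n(x) = (v)_{y + \cu_n}$, where $y \in \mathcal{Z}_n$ is the unique point with $x \in y+\cu_n$ (viewing $\{y + \cu_n : y \in \mathcal{Z}_n\}$ as a tiling of $\cu_m$). Then $[v]_m \equiv (v)_{\cu_m}$, and one has the telescoping identity
\begin{equation*}
v = [v]_m + \sum_{n=0}^{m-1} \bigl([v]_n - [v]_{n+1}\bigr) + \bigl(v - [v]_0\bigr),
\end{equation*}
which splits the pairing $(u, v)_{\cu_m}$ into a constant contribution, a scale-by-scale sum, and a sub-smallest-scale residual.

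The constant piece gives $|(u)_{\cu_m}| \cdot |(v)_{\cu_m}| \leq C 3^m |(u)_{\cu_m}|$, matching the first term in the target inequality. For each scale $n$, the function $[v]_n - [v]_{n+1}$ is constant on each scale-$n$ subcube with value $(v)_{y+\cu_n} - (v)_{y'+\cu_{n+1}}$, where $y' + \cu_{n+1}$ is the unique parent cube at scale $n+1$. Cauchy--Schwarz then yields
\begin{equation*}
\bigl| \bigl( u, [v]_n - [v]_{n+1}\bigr)_{\cu_m}\bigr| \leq \biggl( \frac{1}{|\mathcal{Z}_n|} \sum_{y} |(u)_{y+\cu_n}|^2 \biggr)^{\!1/2}\biggl( \frac{1}{|\mathcal{Z}_n|} \sum_y |(v)_{y+\cu_n} - (v)_{y'+\cu_{n+1}}|^2\biggr)^{\!1/2}.
\end{equation*}
The first factor is exactly the scale-$n$ quantity in the target bound. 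For the second factor, Poincar\'e on each parent cube gives $|(v)_{y+\cu_n} - (v)_{y'+\cu_{n+1}}|^2 \leq C 3^{2n}\, (|\nabla v|^2)_{y'+\cu_{n+1}}$; summing over $y \in \mathcal{Z}_n$, using that each parent in $\mathcal{Z}_{n+1}$ is visited $3^d$ times and that $|\mathcal{Z}_n|/|\mathcal{Z}_{n+1}| = 3^d$, one obtains $C 3^{2n} \|\nabla v\|_{\underline{L}^2(\cu_m)}^2 \leq C 3^{2n}$, so the second factor is $\leq C 3^n$. This reproduces the telescoping sum in the statement.

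The residual $(u, v - [v]_0)_{\cu_m}$ is handled by noting that $v - [v]_0$ has zero mean on every smallest cube: rewriting it as $(u - [u]_0, v - [v]_0)_{\cu_m}$ and applying the scale-$0$ Poincar\'e inequality to $v - [v]_0$ absorbs this contribution into the $n=0$ term up to a universal constant. The main obstacle is the bookkeeping in the scale-by-scale step -- verifying that the cardinality ratio $|\mathcal{Z}_n|/|\mathcal{Z}_{n+1}| = 3^d$, the Poincar\'e constant $\sim 3^{2n}$ at scale $n+1$, and the $\underline{L}^2$ normalization combine precisely to produce the prefactor $3^n$ in the statement. Otherwise the argument is a routine combination of duality, Cauchy--Schwarz, and elementary Poincar\'e estimates, and is the direct lattice analog of the continuum proof found in~\cite{AKMbook} (with the lattice version spelled out in~\cite{AD1}).
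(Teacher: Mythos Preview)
Your sketch is correct and follows exactly the standard argument: duality against $H^1$ test functions, telescoping $v$ along triadic piecewise-constant approximations $[v]_n$, Cauchy--Schwarz at each scale, and the local Poincar\'e inequality to control the $v$-factor by $C3^n\|\nabla v\|_{\underline L^2(\cu_m)}$. The paper does not actually prove this proposition; it merely records it in the appendix and defers to~\cite{AKMbook} for the continuum version and~\cite{AD1} for the lattice adaptation, which is precisely the argument you have outlined.
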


\begin{lemma}
\label{l.H-1}
There exists $C_d<\infty$ such that for any $n\in\N$ and $u\in H^1(\cu_n)$,  we have 
\begin{equation*}
\| u - (u)_{\cu_n} \|_{\underline L^2(\cu_n)} \le 
C_d \| \nabla u \|_{\underline H^{-1}(\cu_n)} \,.
\end{equation*}
\end{lemma}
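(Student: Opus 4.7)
The plan is a standard duality argument: write the $\underline L^2$ norm of the mean-zero function $u - (u)_{\cu_n}$ as a supremum of pairings against test functions, and then use a Poisson equation to transfer the derivative onto $u$ so that the pairing becomes one between $\nabla u$ and a gradient field, which is exactly the object measured by $\underline H^{-1}$.

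First, subtract the mean: since both sides of the inequality are unchanged under $u \mapsto u - (u)_{\cu_n}$, assume $(u)_{\cu_n}=0$. By $L^2$ duality on the mean-zero subspace of $\underline L^2(\cu_n)$,
\begin{equation*}
\|u\|_{\underline L^2(\cu_n)} = \sup_g \frac{1}{|\cu_n|}\sum_{x\in\cu_n} u(x)\, g(x),
\end{equation*}
where the supremum is taken over $g : \cu_n \to \R$ with $(g)_{\cu_n} = 0$ and $\|g\|_{\underline L^2(\cu_n)} \le 1$.

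Next, for each admissible $g$, introduce $v_g$ as the unique solution of the discrete Neumann problem
\begin{equation*}
-\Delta v_g = g \quad \text{in } \cu_n, \qquad \mathbf{n}\cdot\nabla v_g = 0 \quad \text{on } \partial\cu_n, \qquad (v_g)_{\cu_n} = 0,
\end{equation*}
which is well-posed precisely because $g$ has zero mean on $\cu_n$. The discrete Green identity, using the Neumann condition to kill the boundary term, gives
\begin{equation*}
\frac{1}{|\cu_n|}\sum_{x\in\cu_n} u(x)\, g(x) = \frac{1}{|\cu_n|}\sum_{e\in\mathcal E(\cu_n)} \nabla u(e)\cdot \nabla v_g(e).
\end{equation*}
The right-hand side is now exactly the kind of pairing against a gradient field that the norm $\|\nabla u\|_{\underline H^{-1}(\cu_n)}$ is set up to dualize, so it is bounded by $\|\nabla u\|_{\underline H^{-1}(\cu_n)}$ times an appropriate norm of $v_g$. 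To control the latter, test the equation for $v_g$ against $v_g$, apply Cauchy–Schwarz, and then invoke the (classical, deterministic) Poincar\'e inequality on $\cu_n$ for the mean-zero function $v_g$, which yields $\|\nabla v_g\|_{\underline L^2(\cu_n)} \le C\,3^n\,\|g\|_{\underline L^2(\cu_n)}$. Taking the supremum over $g$ then produces the stated bound.

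The only nontrivial part of the argument is the bookkeeping of scaling factors: the $3^n = \diam(\cu_n)$ produced by the Poincar\'e inequality for $v_g$ has to be exactly absorbed into the normalization conventions built into $\underline H^{-1}(\cu_n)$ and $\underline L^2(\cu_n)$, so that the final constant $C_d$ depends only on dimension. Everything else—well-posedness of the discrete Neumann problem on $\cu_n$, the discrete Green identity, and the Poincar\'e inequality for the mean-zero $v_g$—is standard.
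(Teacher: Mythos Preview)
The paper does not actually prove this lemma: it is stated in the appendix with the remark that the continuum version ``can be found for instance in~\cite{AKMbook}'' and ``the same arguments are easily adapted to functions on the lattice, see for instance~\cite{AD1}.'' Your duality argument via the Neumann problem is precisely the standard proof one finds in those references, so the approach is correct and matches what the paper has in mind.

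One point deserves a bit more care. In the duality step, the pairing $\frac{1}{|\cu_n|}\sum_e \nabla u(e)\,\nabla v_g(e)$ is bounded by $\|\nabla u\|_{\underline H^{-1}(\cu_n)}\,\|\nabla v_g\|_{\underline H^{1}(\cu_n)}$, and under the usual normalization $\|\nabla v_g\|_{\underline H^{1}(\cu_n)} \simeq \|\nabla^2 v_g\|_{\underline L^2(\cu_n)} + 3^{-n}\|\nabla v_g\|_{\underline L^2(\cu_n)}$. Your energy estimate $\|\nabla v_g\|_{\underline L^2}\le C\,3^n\|g\|_{\underline L^2}$ handles the second term, but you also need $\|\nabla^2 v_g\|_{\underline L^2(\cu_n)}\le C\|g\|_{\underline L^2(\cu_n)}$, i.e.\ $H^2$ regularity for the Neumann Laplacian. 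On a cube (a convex domain) this holds with a constant depending only on $d$, and the discrete analogue is likewise standard; once you invoke it, the scaling bookkeeping closes exactly as you describe.
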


{\small
\bibliographystyle{alpha}
\bibliography{SOS}
}

\end{document}